\title{Compositional Liveness-Preserving Conformance Testing of Timed I/O Automata - Technical Report}
\author{Lars Luthmann\thanks{This work was funded by the Hessian LOEWE initiative within the Software-Factory 4.0 project.}
\institute{Real-Time Systems Lab\\TU Darmstadt, Germany}
\email{lars.luthmann@es.tu-darmstadt.de}
\and
Hendrik Göttmann
\institute{Real-Time Systems Lab\\TU Darmstadt, Germany}
\email{h.goettmann@stud.tu-darmstadt.de}
\and
Malte Lochau$^{*}$
\institute{Real-Time Systems Lab\\TU Darmstadt, Germany}
\email{malte.lochau@es.tu-darmstadt.de}
}
\tikzset{
  every node/.style={},
  font=\footnotesize,
  initial text={},
  every initial by arrow/.style={*->},
  state/.style={circle,draw,fill=black,inner sep=.2em},
  emptystate/.style={inner sep=.2em},
  emptystateTA/.style={text width=3.2em,align=center},
  labeledcirclestate/.style={circle,draw,inner sep=.2em,align=center},
  labeledrectanglestate/.style={rectangle,draw,inner sep=.2em,align=center},
  labeledellipsestate/.style={ellipse,draw,inner sep=.2em,align=center},
  labeledroundedstate/.style={rounded corners,draw,inner sep=.2em,align=center},
  labeledroundedstateTA/.style={rounded corners,draw,text width=4.1em,align=center},
  labeledroundedstateTAsmall/.style={rounded corners,draw,text width=1.5em,align=center},
  labeledstate/.style={inner sep=.2em,align=center},
  safe/.style={->,>=stealth',dashed},
  live/.style={->,>=stealth'},
  empty/.style={->,>=stealth',transparent}
}
\newcommand{\scalefactor}{.85} 
\newcommand{\ie}{\mbox{i.e.,}\xspace}
\newcommand{\eg}{\mbox{e.g.,}\xspace}
\newcommand{\st}{\mbox{s.t.}\xspace}
\newcommand{\Wlog}{\mbox{w.l.o.g.}\xspace}
\newcommand{\etal}{\mbox{et al.}\xspace}
\newcommand{\cf}{\mbox{cf.}\xspace}
\newcommand{\angles}[1]{\ensuremath{\left\langle#1\right\rangle}}
\newcommand{\cmark}{\ding{51}}
\newcommand{\xmark}{\ding{55}}
\newcommand{\specialcell}[2][c]{\begin{tabular}[#1]{@{}#1@{}}#2\end{tabular}}
\renewcommand{\xrightarrow}[1]{
  \mathrel{
    \!\!
    \tikz[baseline=-\the\dimexpr\fontdimen22\textfont2\relax]{
      \node[anchor=south,font=\scriptsize, inner ysep=1.5pt,outer xsep=2.2pt](x){\ensuremath{\!#1}};
      \draw[shorten <=3.4pt,shorten >=3.4pt,->](x.south west)--(x.south east);
    }
    \!\!
  }
}
\newcommand{\xtwoheadrightarrow}[1]{
  \mathrel{
    \!\!
    \tikz[baseline=-\the\dimexpr\fontdimen22\textfont2\relax]{
      \node[anchor=south,font=\scriptsize, inner ysep=1.5pt,outer xsep=2.2pt](x){\ensuremath{\!#1}};
      \draw[shorten <=3.4pt,shorten >=3.4pt,->>](x.south west)--(x.south east);
    }
    \!\!
  }
}
\newcommand{\xTwoheadrightarrow}[1]{
  \mathrel{
    \begin{tikzpicture}[baseline= {( $ (current bounding box.south) + (0,-0.5ex) $ )}]
        \node[inner sep=.5ex] (a) {$\!\scriptstyle #1\,$};
        \path[draw,implies-,double distance between line centers=2pt] (a.south east) -- (a.south west);
        \path[draw,implies-,double distance between line centers=2pt] ($ (a.south east) + (-0.079,0) $)--($ (a.south east) + (-0.08,0) $);
    \end{tikzpicture}
  }
}
\newcommand{\xrightsquigarrow}[1]{
  \mathrel{
    \begin{tikzpicture}[baseline= {( $ (current bounding box.south) + (0,-0.5ex) $ )}]
      \node[inner sep=.5ex] (a) {\ensuremath{\scriptstyle#1}};
      \path[draw,<-,decorate,decoration={zigzag,amplitude=.9pt,segment length=1.2mm,pre=lineto,pre length=3pt}] (a.south east) -- (a.south west);
    \end{tikzpicture}
  }
}
\newcommand{\xRightsquigarrow}[1]{
  \mathrel{
    \begin{tikzpicture}[baseline= {( $ (current bounding box.south) + (0,-0.5ex) $ )}]
      \node[inner sep=.5ex] (a) {$\scriptstyle #1$};
      \path[draw,implies-,double distance between line centers=1.5pt,decorate,decoration={zigzag,amplitude=0.7pt,segment length=1.2mm,pre=lineto,pre length=3pt}] (a.south east) -- (a.south west);
    \end{tikzpicture}
  }
}
\newcommand{\Twoheadrightarrow}{\xTwoheadrightarrow{\hspace{.45em}}}
\newcommand{\Rightsquigarrow}{\xRightsquigarrow{\hspace{.4em}}}
\newcommand{\traces}{\ensuremath{\textit{traces}}\xspace}
\newcommand{\ttraces}{\ensuremath{\textit{ttraces}}\xspace}
\newcommand{\sptraces}{\ensuremath{\textit{sptraces}}\xspace}
\newcommand{\tstraces}{\ensuremath{\textit{tstraces}}\xspace}
\newcommand{\tstracesl}{\ensuremath{\tstraces_L}\xspace}
\DeclareMathOperator{\after}{\mathbf{after}}
\DeclareMathOperator{\out}{\mathbf{out}}
\DeclareMathOperator{\mathtiocodelta}{\mathbf{tioco}_\delta}
\DeclareMathOperator{\mathtiocodelay}{\mathbf{tioco}_\Delta}
\DeclareMathOperator{\mathltiocolts}{\mathbf{ltioco}_{\textit{S}}}
\DeclareMathOperator{\mathltiocozg}{\mathbf{ltioco}_\mathcal{Z}}
\DeclareMathOperator{\afterz}{\after_\mathcal{Z}}
\DeclareMathOperator{\outz}{\out_\mathcal{Z}}
\DeclareMathOperator{\outset}{\widehat{\out}_\mathcal{Z}}
\DeclareMathOperator{\outdelay}{\out_\Delta}
\DeclareMathOperator{\outl}{\out_S}
\newcommand{\subsetsim}{\mathrel{\substack{\textstyle\subset\\[-0.2ex]\textstyle\sim}}}
\newtheorem{definition}{Definition}
\newtheorem{theorem}{Theorem}
\newtheorem{lemma}{Lemma}
\newtheorem{corollary}{Corollary}
\newtheorem{proposition}{Propostion}
\newtheorem{example}{Example}
\begin{document}

\maketitle

%
\begin{abstract}
I/O conformance testing theories (\eg{} \emph{ioco}) are concerned with formally defining when observable output behaviors of an implementation conform to those permitted by a specification.
Thereupon, several real-time extensions of \emph{ioco}, usually called \emph{tioco}, have been proposed, further taking into account permitted delays between actions.
In this paper, we propose an improved version of \emph{tioco}, called \emph{live timed ioco} (\emph{ltioco}), tackling various weaknesses of existing definitions.
Here, a reasonable adaptation of quiescence (\ie{} observable absence of any outputs) to real-time behaviors has to be done with care: \emph{ltioco} therefore distinguishes safe outputs being allowed to happen, from live outputs being enforced to happen within a certain time period thus inducing two different facets of quiescence.
Furthermore, \emph{tioco} is frequently defined on Timed I/O Labeled Transition Systems (TIOLTS), a semantic model of Timed I/O Automata (TIOA) which is infinitely branching and thus infeasible for practical testing tools.
Instead, we extend the theory of zone graphs to enable \emph{ltioco} testing on a finite semantic model of TIOA.
Finally, we investigate compositionality of \emph{ltioco} with respect to parallel composition including a proper treatment of silent transitions.
\end{abstract}
%
\section{Introduction}\label{sec:introduction}
Model-based testing constitutes a practically emerging, yet theoretically founded technique
for automated quality assurance of software systems~\cite{Broy2005}.
In particular, input/output conformance testing theories formalize
notions of observable conformance between an implementation under test 
and a specification, where the \textbf{ioco} theory~\cite{Tretmans1996} 
constitutes one of the most prominent examples.
The \textbf{ioco} relation requires both the input/output-behaviors 
of the specification and the implementation to be represented as
input/output labeled transition systems (IOLTS), where the IOLTS
of the implementation is unknown (black-box assumption)~\cite{Bernot1991}.
For an implementation to satisfy \textbf{ioco}, all its
possible output behaviors must be permitted by the specification.
To rule out trivial implementations never showing any output, 
\textbf{ioco} employs the notion of \emph{quiescence} to explicitly permit starvation.
In order to ensure proper test-execution semantics, \textbf{ioco}
requires \emph{input-enabled} implementations, never blocking any \mbox{(test-)}inputs.
Hence, \textbf{ioco} is concerned with the correct
\emph{ordering} of (or causality among) input/output (re-)actions, whereas
quantified \emph{time delays} between action occurrences are not considered.
However, reasoning about real-time behaviors becomes more and more crucial 
and various real-time extensions of \textbf{ioco}, so-called \textbf{tioco},
have been recently proposed~\cite{Schmaltz2008,BrandanBriones2004,Hessel2008,Krichen2004,Larsen2004}.
Based on timed extensions of IOLTS (so-called TIOLTS), 
a system run progresses by either actively performing discrete, instantaneous actions 
or by inactively letting a quantified amount of time pass.
Nevertheless, existing definitions of \textbf{tioco} suffer from several weaknesses
which we tackle in this paper by proposing an 
improved version called \emph{live timed ioco} \textbf{(ltioco)}.
Our contributions can be summarized as follows.

\begin{itemize}
  \item Recent adoptions of quiescence in a timed setting also show several weaknesses: 
  most recent versions of \textbf{tioco} either do not incorporate any notion of 
  quiescence at all~\cite{Schmaltz2008,Hessel2008,Krichen2004,Larsen2005}, or define quiescence in terms of (either infinite or bounded) time intervals 
without observable output actions~\cite{BrandanBriones2004,Schmaltz2008}.
Both fail to distinguish the \emph{enabling} of output actions (\ie{} an output 
is allowed to occur in a time interval to constitute \emph{safe} behavior) from
\emph{enforced} output actions (\ie{} an output must occur in a 
certain time interval to meet \emph{liveness} requirements).
To this end, \textbf{ltioco} distinguishes
safe outputs from live outputs thus explicitly
incorporating the two different facets of timed quiescence.
We prove correctness of \textbf{ltioco} with respect to TIOLTS semantics and we show that
\textbf{ltioco} is strictly more discriminating 
than most recent versions of \textbf{tioco}.

  \item We investigate compositionality properties of \textbf{ltioco} with 
respect to (synchronous) parallel composition including silent transitions.
  
  \item Finally, all recent versions of \emph{tioco} are defined 
on TIOLTS, constituting a semantic model of Timed I/O Automata (TIOA)
which is infinitely branching and thus infeasible for practical testing tools.
Instead, we extend the notion of zone graphs to effectively check \emph{ltioco} on a finite semantic model of TIOA
using so-called \emph{span traces}.
Thereupon, we developed a tool for online testing using \textbf{tioco}
(see \url{https://www.es.tu-darmstadt.de/ltioco}).
\end{itemize}

The remainder of this paper is structured as follows.
We first give an formal introduction into TIOA and parallel composition of TIOA in~Sect.~\ref{sec:preliminaries}.
Then, we discuss existing notions of \textbf{tioco} and point out their weaknesses in~Sect.~\ref{sec:tioco} 
which we address in the subsequent Sect.~\ref{sec:ltioco}.
Furthermore, we give an intuition on how to apply zone graphs for an efficient implementation 
of our approach in Sect.~\ref{sec:implementation} and we summarize related work in~Sect.~\ref{sec:related-work}.
%
\section{Timed Input/Output Automata}\label{sec:preliminaries}
We first recall foundations of \emph{Timed Automata (TA)}~\cite{Alur1990,Alur1994},
extension of TA by input/output labels~\cite{Lynch1992,Merritt1991,David2010} 
and their composition involving silent transitions~\cite{Berard1998}.

TA are labeled finite state-transition graphs with
states being called \emph{locations} and transitions being called \emph{switches}.
A TA is further defined with respect to a finite set $\mathcal{C}$ of \textit{clocks} 
over a numerical \emph{clock domain} $\mathbb{T}$ (\eg{} $\mathbb{T}=\mathbb{N}_0$ 
for \emph{discrete time} and $\mathbb{T}=\mathbb{R}_+$ 
with $\mathbb{R}_+:=\{r\mid r\in\mathbb{R}\land r\geq 0\}$ for \emph{dense time}).
Clocks constitute constantly and synchronously increasing, yet independently resettable
variables over $\mathbb{T}$ for measuring and restricting
time intervals (durations/delays) between action occurrences.
Note that we consider $\mathbb{T}=\mathbb{N}_0$ in all examples for the sake of readability.
In particular, we consider \emph{Timed Safety Automata}~\cite{Henzinger1994} in which
time-critical behaviors are expressed by \emph{clock constraints} as
\emph{guards} for switches and \emph{invariants} for locations.
Guards restrict time intervals in which a switch is enabled while residing
in its source location, whereas invariants restrict time intervals in which a TA run 
is permitted to reside in a location.
Alternative TA definitions may incorporate distinguished \emph{acceptance locations} thus
employing Büchi acceptance semantics on \emph{infinite runs}~\cite{Alur1990,Henzinger1994}
which is out of the scope of this paper as model-based testing is 
inherently limited to \emph{finite} test runs.

\emph{Timed Input/Output-labeled Automata (TIOA)}
extend TA for timed interface specifications (\eg{} 
for model-based conformance testing of time-critical components or systems~\cite{Lynch1992,Merritt1991}).
The \emph{label alphabet} $\Sigma=\Sigma_I\cup \Sigma_O$ of a TIOA
consists of two disjoint subsets of 
(externally controllable, internally observable) \emph{input actions} $\Sigma_I$ and 
(externally observable, internally controllable) \emph{output actions} $\Sigma_O$.
The special symbol $\tau\notin\Sigma$ summarizes \emph{internal actions} 
of silent switches being neither externally controllable 
nor visible, and we write $\Sigma_{\tau}=\Sigma\cup\{\tau\}$ for short.

\begin{definition}[TIOA]\label{def:tioa}
A \emph{TIOA} $\mathcal{A}$ is a tuple $\left(L,\ell_0,\Sigma_I,\Sigma_O,\rightarrow,I\right)$, where
\begin{itemize}
  \item $L$ is a finite set of \emph{locations} with \emph{initial location} $\ell_0\in L$,
  \item $\Sigma_I$ and $\Sigma_O$ are sets of \emph{input actions} and \emph{output actions} with $\Sigma_I\cap \Sigma_O=\emptyset$,
  \item ${\rightarrow}\subseteq L\times\mathcal{B(C)}\times\Sigma_{\tau}\times 2^\mathcal{C}\times L$ is a relation defining \emph{switches}, 
  with a set $\mathcal{B(C)}$ of \emph{clock constraints} $\varphi$ inductively defined as
  $$\varphi:=x\sim r\mid x-y\sim r\mid\neg\varphi\mid\varphi\land\varphi\mid\mathsf{true},$$
  where $x,y\in\mathcal{C}$, $r\in\mathbb{Q}_+$, and ${\sim}\in\{<,\leq,=,\geq,>\}$, and
  \item $I:L\rightarrow\mathcal{B(C)}$ is a function assigning \emph{location invariants}.
\end{itemize}
\end{definition}
We write $\ell \xrightarrow{g,\sigma,R} \ell'$ to denote
switches from location $\ell$ to $\ell'$ with guard $g$, 
action $\sigma$ and set $R\subseteq \mathcal{C}$ of clocks being reset.
Without loss of generality, we assume each location invariant being
unequal to $\mathsf{true}$ to be 
\emph{downward-closed} (\ie{} with clauses $x\leq r$ or $x<r$)~\cite{Bengtsson2004}.
The operational semantics of TIOA may be defined
as \emph{Timed Input/Output Labeled Transition System} (\emph{TIOLTS})~\cite{Henzinger1991}.
A TIOLTS state $\angles{\ell,u}$ is a pair consisting of a location $\ell\in L$ and a
\emph{clock valuation} $u\in\mathcal{C}\rightarrow\mathbb{T}$.
A TIOLTS defines two kinds of transitions:
(1) passage of time while inactively residing in a location, and
(2) instantaneous switches between locations due to action occurrences (including $\tau$). 
Given a clock valuation $u$, $u+d$ denotes the clock valuation mapping each clock $c\in\mathcal{C}$ 
to the updated clock value $u(c)+d$ with $d\in\mathbb{T}$.
For a subset $R\subseteq\mathcal{C}$ of clocks, $[R\mapsto 0]u$ denotes 
the clock valuation mapping every clock in $R$ to 0 while 
preserving the values of all other clocks in $\mathcal{C}\setminus R$.
Finally, $u\in g$ denotes that clock valuation $u$ satisfies
clock constraint $g\in\mathcal{B(C)}$.
We further distinguish between
\emph{strong} and \emph{weak} transitions, depending on whether
silent transitions are visible or not.

\begin{definition}[TIOLTS]\label{def:tiolts-semantics}
The TIOLTS of TIOA $(L,\ell_0,\Sigma_I,\Sigma_O,\rightarrow,I)$ 
is a tuple $(S,s_0,\Sigma_I,\Sigma_O,\twoheadrightarrow)$, where 
\begin{itemize}
  \item $S=L\times(\mathcal{C}\rightarrow\mathbb{T})$ is a set of \emph{states}
  with \emph{initial state} $s_0=\angles{\ell_0,[\mathcal{C}\mapsto 0]u_0}\in S$,
  \item $\hat\Sigma_{\tau}=\Sigma_{I}\cup\Sigma_{O}\cup\{\tau\}\cup\Delta$ is a 
set of \emph{labels} with $\Delta=\mathbb{T}$, $\Sigma_{\tau}\cap\Delta=\emptyset$, and
  \item ${\twoheadrightarrow}\subseteq S\times\hat\Sigma_{\tau}\times S$ is a set of \emph{(strong) transitions} 
being the least relation satisfying the rules:
\begin{itemize}
  \item $\angles{\ell,u}\xtwoheadrightarrow{d}\angles{\ell,u+d}$ if $u\in I(\ell)$ and $(u+d)\in I(\ell)$ for $d\in\mathbb{T}$, and
  \item $\angles{\ell,u}\xtwoheadrightarrow{\sigma}\angles{\ell',u'}$ if $\ell\xrightarrow{g,\sigma,R}\ell'$, $u\in g$, $u'=[R\mapsto 0]u$, $u'\in I(\ell')$, $\sigma\in\Sigma_{\tau}$.
\end{itemize}
\end{itemize}
By ${\Twoheadrightarrow}\subseteq S\times \hat{\Sigma}\times S$ we further
denote a set of \emph{(weak) transitions} being the least relation satisfying the rules:
  \begin{itemize}
  \item $s_0\xtwoheadrightarrow{\tau^n}s_n$ if $\exists s_1,\ldots s_{n-1}\in S:s_0\xtwoheadrightarrow{\tau}s_1\xtwoheadrightarrow{\tau}\ldots\xtwoheadrightarrow{\tau}s_n$ with $n\in \mathbb{N}_0$,
  \item $s\xTwoheadrightarrow{\sigma}s'$ if $\exists s_1,s_2\in S:s\xtwoheadrightarrow{\tau^{n}}s_1\xtwoheadrightarrow{\sigma}s_2\xtwoheadrightarrow{\tau^{m}}s'$ with $n, m\in \mathbb{N}_{0}$,
  \item $s\xTwoheadrightarrow{d}s'$ if $s\xtwoheadrightarrow{d}s'$,
  \item $s\xTwoheadrightarrow{0}s'$ if $s\xtwoheadrightarrow{\tau^{n}}s'$ with $n\in \mathbb{N}_{0}$,
  \item $s_0\xTwoheadrightarrow{\sigma_1\cdots\sigma_n}$ if $\exists s_1,\ldots s_n\in S:s_0\xTwoheadrightarrow{\sigma_1}s_1\xTwoheadrightarrow{\sigma_2}\ldots\xTwoheadrightarrow{\sigma_n}s_n$ with $n\in \mathbb{N}_0$, and
  \item $s\xTwoheadrightarrow{d+d'}s'$ if $\exists s''\in S:s\xTwoheadrightarrow{d}s''$ and $s''\xTwoheadrightarrow{d'}s'$.
\end{itemize}
\end{definition}
We only consider \emph{strongly convergent} TIOA (\ie{} having TIOLTS without infinite $\tau$-sequences).
By $\llbracket\mathcal{A}\rrbracket_{S}^{x}$, $x\in\{w,s\}$, we refer to the (either weak or strong)
TIOLTS semantics of TIOA $\mathcal{A}$, where we omit parameter $x$ if not relevant.
The weak semantics is obtained by replacing all occurrences of $\twoheadrightarrow$ by $\Twoheadrightarrow$ 
in all definitions.
We recall three essential 
properties for strong 
TIOLTS semantics of any given TIOA~\cite{David2010,Aceto1998}.

\begin{proposition}\label{proposition:tiolts-properties}
    Let $(S,s_0,\Sigma_I,\Sigma_O,\twoheadrightarrow)$ be a TIOLTS of a TIOA.
    \begin{itemize}
      \item (Time Add) $\forall s_1,s_3\in S,\forall d_1,d_2\in\Delta:s_1\xtwoheadrightarrow{d_1+d_2}s_3 \Leftrightarrow \exists s_2:s_1\xtwoheadrightarrow{d_1}s_2\xtwoheadrightarrow{d_2}s_3$
      \item (Time Reflex) $\forall s_1,s_2\in S:s_1\xtwoheadrightarrow{0}s_2\Rightarrow s_1=s_2$
      \item (Time Determ) $\forall s_1,s_2,s_3\in S:s_1\xtwoheadrightarrow{d}s_2$ and $s_1\xtwoheadrightarrow{d}s_3$ then $s_2=s_3$
    \end{itemize}
\end{proposition}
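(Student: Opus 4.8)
The plan is to prove each of the three properties directly from the two inference rules defining the strong transition relation $\twoheadrightarrow$ in Definition~\ref{def:tiolts-semantics}. Since all three statements concern \emph{time} transitions (labels in $\Delta=\mathbb{T}$), only the first rule — the time-passage rule $\angles{\ell,u}\xtwoheadrightarrow{d}\angles{\ell,u+d}$, guarded by $u\in I(\ell)$ and $(u+d)\in I(\ell)$ — is relevant; the action rule never produces a label in $\Delta$, so it can be discarded at the outset. The key structural observation underlying all three proofs is that a time transition of duration $d$ from a state $\angles{\ell,u}$ leaves the location $\ell$ fixed and sends the valuation $u$ to the \emph{uniquely determined} valuation $u+d$, where $(u+d)(c)=u(c)+d$ for every clock $c$.

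For (Time Determ), I would argue that if $\angles{\ell,u}\xtwoheadrightarrow{d}s_2$ and $\angles{\ell,u}\xtwoheadrightarrow{d}s_3$, then both $s_2$ and $s_3$ must arise from the time-passage rule, forcing $s_2=\angles{\ell,u+d}=s_3$, since the location is preserved and $u+d$ is a function of $u$ and $d$ alone. For (Time Reflex), a transition $\angles{\ell,u}\xtwoheadrightarrow{0}s_2$ likewise comes only from the time rule with $d=0$; since $u+0=u$ (each clock value is unchanged) and the location is preserved, we get $s_2=\angles{\ell,u}=s_1$. Both of these are essentially immediate once the action rule has been ruled out, and I expect to dispatch them in a couple of lines each.

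The main work lies in (Time Add), which I would prove by treating the two directions separately. For the forward ($\Rightarrow$) direction, assume $\angles{\ell,u}\xtwoheadrightarrow{d_1+d_2}\angles{\ell,u+(d_1+d_2)}$; this requires $u\in I(\ell)$ and $(u+(d_1+d_2))\in I(\ell)$. I would take the witness $s_2:=\angles{\ell,u+d_1}$ and must check that both $\angles{\ell,u}\xtwoheadrightarrow{d_1}s_2$ and $s_2\xtwoheadrightarrow{d_2}\angles{\ell,u+d_1+d_2}$ hold. The nontrivial obligation is the invariant condition on the intermediate valuation $u+d_1$: the time rule demands $(u+d_1)\in I(\ell)$, which is \emph{not} among the hypotheses. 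This is exactly where the assumption (stated just before Definition~\ref{def:tiolts-semantics}) that every nontrivial location invariant is \emph{downward-closed} becomes essential — since $0\le d_1\le d_1+d_2$, the value $u+d_1$ lies pointwise between $u$ and $u+(d_1+d_2)$, so downward-closedness of $I(\ell)$ together with $(u+(d_1+d_2))\in I(\ell)$ yields $(u+d_1)\in I(\ell)$. For the backward ($\Leftarrow$) direction, given the two consecutive transitions I would compose them, using associativity of addition to identify $(u+d_1)+d_2$ with $u+(d_1+d_2)$ and chaining the invariant conditions, which here are all supplied directly by the hypotheses.

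I expect the downward-closedness argument in the forward direction of (Time Add) to be the only genuine obstacle; it is the single point where the proof uses a structural assumption about TIOA rather than pure rewriting of the semantic rules. Everything else reduces to unfolding the time-passage rule and the pointwise definition of $u+d$, together with elementary arithmetic on $\mathbb{T}$ (commutativity and associativity of $+$, and the identity $u+0=u$).
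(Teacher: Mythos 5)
Your proof is correct; note, however, that the paper itself gives no proof of Proposition~\ref{proposition:tiolts-properties} --- it is recalled from the literature (the citations accompanying the statement), so there is no in-paper argument to compare against. Your direct derivation from the two semantic rules of Definition~\ref{def:tiolts-semantics} is the natural one, and you put your finger on exactly the right points: since $\Sigma_{\tau}\cap\Delta=\emptyset$, only the time-passage rule can produce a label in $\Delta$, which settles (Time Reflex) and (Time Determ) immediately (and you correctly read the implicitly quantified $d$ in (Time Determ) as ranging over $\Delta$ --- for action labels, determinism would be false). For (Time Add), your identification of the invariant obligation $(u+d_1)\in I(\ell)$ as the sole non-trivial step is accurate: the rule only constrains the endpoints $u$ and $u+d$, and since the constraint grammar admits negation, a general invariant could hold at $u$ and $u+(d_1+d_2)$ while failing at $u+d_1$; the standing downward-closedness assumption (conjunctions of clauses $x\leq r$ or $x<r$) is precisely what rescues the $\Rightarrow$ direction, via $u(x)+d_1\leq u(x)+d_1+d_2\leq r$ using $d_2\geq 0$. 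The $\Leftarrow$ direction and the remaining properties are, as you say, pure unfolding of the rule together with associativity and $u+0=u$. Your proof is complete and self-contained, and in fact supplies a justification the paper leaves to its references.
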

In contrast, the weak semantics obviously obstructs all three properties.

Furthermore, by $\traces(s_0)=\{\omega \mid s_0\xtwoheadrightarrow{\omega}\}$
we denote the set of all \emph{traces} $\omega=\alpha_{1}\alpha_{2}\cdots\alpha_{k}\in(\Sigma\cup\Delta)^{*}$
corresponding to some path $s_{0}\xtwoheadrightarrow{\alpha_{1}}s_{1}\xtwoheadrightarrow{\alpha_{2}}\cdots\xtwoheadrightarrow{\alpha_{k}}s_{k}$
of TIOLTS $s$.
Given a TIOA $\mathcal{A}$, the TIOLTS $\llbracket\mathcal{A}\rrbracket_{S}$
defines all possible \emph{(timed) runs} 
$s_0=\langle \ell_0, u_0\rangle\xtwoheadrightarrow{d_1}\xtwoheadrightarrow{\sigma_1}\langle \ell_1,u_1\rangle\xtwoheadrightarrow{d_2}\xtwoheadrightarrow{\sigma_2}\cdots$
of $\mathcal{A}$ in terms of sequences of \emph{(timed) steps}
$s\xtwoheadrightarrow{d}\xtwoheadrightarrow{\sigma}s''$ denoting
$\exists s'\in S: s\xtwoheadrightarrow{d}s'\xtwoheadrightarrow{\sigma}s''$~\cite{Schmaltz2008}.
We refer to the set of weak/strong traces of state $s$ by $\traces(s)^x$, $x\in\{w,s\}$, respectively.

\begin{example}
Figure~\ref{fig:vending-machine-ta} shows a (simplified) TIOA $\mathcal{A}_{1}$ of
a vending machine with two clocks, $x$ and $y$, and Fig.~\ref{fig:vending-machine-tiolts} depicts an extract from its TIOLTS.
\begin{figure}[tp]
  \hfill
  \subfloat[TIOA $\mathcal{A}_{1}$]{\label{fig:vending-machine-ta}

\scalebox{\scalefactor}{
\begin{tikzpicture}[node distance=1.2]

\node[labeledroundedstateTA, initial, initial where=below] (idle) {idle\\$x\leq20$};
\node[labeledroundedstateTA, left=of idle] (off) {off};
\node[labeledroundedstateTA, right=of idle] (as) {add\\sugar};
\node[labeledroundedstateTA, above=of as] (pc) {preparing\\coffee\\$y\leq20$};
\node[labeledroundedstateTA, left=of pc] (done) {done\\$y\leq20$};

\draw[live] (idle) to node[pos=.5, align=center] {$\tau$\\$[x=20]$} (off);
\draw[live] (idle) to node[pos=.5, align=center] {\\\\?press\\$x:=0$\\$y:=0$} (as);
\draw[live,loop below] (as) to node[pos=.43, right, align=left] {?sugar\\$[x\geq10]$\\$x:=0$} (as);
\draw[live] (as) to node[pos=.5, right, align=left] {!proceed\\$[y\leq20]$\\$y:=0$} (pc);
\draw[live, bend right=25] (pc) to node[pos=.5, align=center] {!coffee\\$[y>15]$} (done);
\draw[live, bend left=25] (pc) to node[pos=.5, align=center] {!coffee\\$[y\leq15]$} (done);
\draw[live] (done) to node[auto, left, align=right] {$\tau$\\$x:=0$\\$y:=0$} (idle);
\draw[live, bend right=45] (off) to node[pos=.53, align=center] {\\\\?press\\$x:=0$\\$y:=0$} (idle);

\end{tikzpicture}
}}
  \hfill
  \subfloat[TIOLTS $\llbracket\mathcal{A}_{1}\rrbracket_S$]{\label{fig:vending-machine-tiolts}

\scalebox{\scalefactor}{
\begin{tikzpicture}[node distance=.5]

\node[labeledstate, initial, initial where=left] (idle0) {$\angles{\text{idle},x=0,y=0}$};
\node[labeledstate, below=of idle0] (idle1) {$\angles{\text{idle},x=1,y=1}$};
\node[labeledstate, below=of idle1] (idle1dots) {\vdots};
\node[emptystate, below=of idle1] (idle1dotsempty) {\phantom{$\angles{\text{idle},x=1,y=1}$}};
\node[labeledstate, below=of idle1dotsempty] (idle20) {$\angles{\text{idle},x=20,y=20}$};
\node[labeledstate, below=of idle20] (off20) {$\angles{\text{off},x=20,y=20}$};
\node[labeledstate, below=of off20] (off21) {};
\node[emptystate, below right=of off20] (off20empty) {};

\node[labeledstate, right=of idle20] (as1020) {$\angles{\text{as},x=10,y=20}$};
\node[labeledstate, above=of as1020] (as010) {$\angles{\text{as},x=0,y=10}$};
\node[labeledstate, above=of as010] (as100) {$\angles{\text{as},x=10,y=10}$};
\node[labeledstate, above=of as100] (as00) {$\angles{\text{as},x=0,y=0}$};
\node[labeledstate, below=of as1020] (as020) {$\angles{\text{as},x=0,y=20}$};
\node[labeledstate, below=of as020] (as020dots) {};

\node[labeledstate, right=of as100] (as100right) {};
\node[labeledstate, right=of as1020] (as1020right) {};
\node[labeledstate, right=of as020] (as020right) {};
\node[labeledstate, right=of as00] (as00right) {};
\node[labeledstate, right=of as010] (as010right) {};

\draw[live] (idle0) to node[auto,swap] {$1$} (idle1);
\draw[live] (idle1) to node[auto,swap] {$1$} (idle1dots);
\draw[live] (idle1dots) to node[auto,swap] {$1$} (idle20);
\draw[live] (idle20) to node[auto] {$\tau$} (off20);
\draw[live] (off20) to node[auto] {$1$} (off21);
\draw[live, bend left=72] (off20) to node[auto,swap] {?press} (idle0);

\draw[live] (as00) to node[auto] {$10$} (as100);
\draw[live] (as100) to node[auto] {?sugar} (as010);
\draw[live] (as010) to node[auto] {$10$} (as1020);
\draw[live] (as1020) to node[auto] {?sugar} (as020);
\draw[live] (as020) to node[auto] {$1$} (as020dots);

\draw[live] (as1020) to node[auto] {!proc.} (as1020right);
\draw[live] (as020) to node[auto] {!proc.} (as020right);
\draw[live] (as100) to node[auto] {!proc.} (as100right);
\draw[live] (as00) to node[auto] {!proc.} (as00right);
\draw[live] (as010) to node[auto] {!proc.} (as010right);

\draw[live] (idle0) to node[auto] {?press} (as00);
\draw[live, bend left=5] (idle1) to node[pos=.5, left] {?press} (as00);
\draw[live, bend left=10] (idle20) to node[pos=.4, left] {?press} (as00);

\end{tikzpicture}
}}
  \hfill\strut
  \caption{TIOA for a Simple Vending Machine~\cite{Andre2016,Bengtsson2004} and Extract from TIOLTS}
\end{figure}
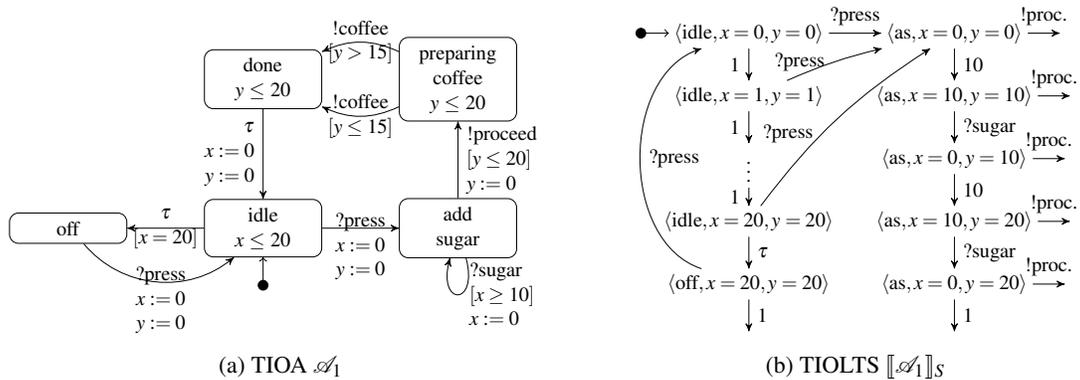
Switches are labeled with actions (prefixes ``?'' for inputs and ``!'' for outputs), 
guards (\eg{} $x\leq 20$), and (possibly empty) clock resets.
We label locations by their names (\eg{} initial location \emph{idle}) and their location invariants.
Clock constraints being equal to $\mathsf{true}$ are omitted.
Each (timed) \emph{run} of the machine starts in initial location \emph{idle}, where 
a user may \emph{press} a button to switch to location \emph{add sugar}.
If no button is pressed for 20 time units (\eg{} seconds), 
the machine is turned \emph{off} via a silent switch and may be 
switched to \emph{idle}, again, by pressing a button.
In location \emph{add sugar}, sugar may be repeatedly selected, 
where at least 10 seconds must pass between two consecutive requests and
the machine proceeds to location \emph{preparing coffee} at most 20 second after input \emph{press}.
Here, coffee is dispensed for at most 20 seconds and 
the machine finally returns to \emph{idle}.
The machine either produces small coffees 
(finishing after less than 15 seconds) or
large coffees (requiring more than 15 seconds).
This example illustrates the semantic differences between 
guards and invariants: guards restrict time intervals in which
a switch is \emph{allowed} to be taken, whereas
invariants define time intervals after which
a location is \emph{enforced} to be left 
(\eg{} it is allowed to perform \textit{!proceed} 
to leave location \emph{add sugar}
while $y\leq20$ holds, whereas it is enforced to leave 
location \emph{preparing coffee} in case of $y=20$).
Hence, guards express \emph{safety} conditions, whereas invariants express
\emph{liveness} conditions of timed runs.
\end{example}
A TIOA is supposed to specify one particular part 
of an arbitrary complex system composed of several concurrently interacting \emph{components}.
We define CCS-like \emph{parallel composition} of TIOA with synchronous
communication via shared input/output actions, becoming internal $\tau$-actions~\cite{David2010}.
As a prerequisite for composing two TIOA $\mathcal{A}_{1}$ and $\mathcal{A}_{2}$, 
denoted as $\mathcal{A}_{1\parallel 2}=\mathcal{A}_{1}\parallel \mathcal{A}_{2}$,
we require both to be \emph{composable} (\ie{} all shared actions have opposed directions).

\begin{definition}[TIOA Composition]\label{def:tioa-compo-epsi}
Let $\left(L_j,\ell_{0_j},\Sigma_{I_j},\Sigma_{O_j},\rightarrow_j, I_j \right)$ with $j\in\{1,2\}$ 
be TIOA with $\Sigma_{I_1} \cap \Sigma_{I_2} = \emptyset$, $\Sigma_{O_1} \cap \Sigma_{O_2} = \emptyset$ and $\mathcal{C}_{1}\cap\mathcal{C}_{2} = \emptyset$.
Their \emph{parallel composition} is a TIOA
$(L_1\times L_2,(\ell_{0_1},\ell_{0_2}),\Sigma_{I_{1\parallel 2}},\Sigma_{O_{1\parallel 2}},\rightarrow_{1 \parallel 2},I_{1 \parallel 2})$
over $\mathcal{C}_{1\parallel 2}=\mathcal{C}_{1} \cup \mathcal{C}_{2}$ with
  $\Sigma_{I_{1\parallel 2}}= (\Sigma_{I_{1}}\cup \Sigma_{I_{2}})\setminus (\Sigma_{O_{1}}\cup \Sigma_{O_{2}})$, 
  $\Sigma_{O_{1\parallel 2}}= (\Sigma_{O_{1}}\cup \Sigma_{O_{2}})\setminus (\Sigma_{I_{1}}\cup \Sigma_{I_{2}})$, 
  $I_{1 \parallel 2}(\ell_1,\ell_2)=I_{1}(\ell_1)\wedge I_{2}(\ell_2)$, and
  $\rightarrow_{1 \parallel 2}$ is the least relation satisfying the rules:
\begin{tabbing}
  (3) \= $(\ell_1,\ell_2)\xrightarrow{g_1\wedge g_2,\tau,R_{1}\cup R_{2}}_{1\parallel 2} (\ell_1',\ell_2')$ \= if \= \kill
  (1) \> $(\ell_1,\ell_2)\xrightarrow{g_1, \sigma, R_{1}}_{1\parallel 2} (\ell_1',\ell_2)$ \> if \> $\ell_1\xrightarrow{g_1,\sigma, R_{1}}_{1} \ell_1'$ and $\sigma\in(\Sigma_1\setminus\Sigma_2)\cup\{\tau\}$ \\
  (2) \> $(\ell_1,\ell_2)\xrightarrow{g_2, \sigma, R_{2}}_{1\parallel 2} (\ell_1,\ell_2')$ \> if \> $\ell_2\xrightarrow{g_2,\sigma, R_{2}}_{2} \ell_2'$ and $\sigma\in(\Sigma_2\setminus\Sigma_1)\cup\{\tau\}$ \\
  (3) \> $(\ell_1,\ell_2)\xrightarrow{g_1\wedge g_2,\tau,R_{1}\cup R_{2}}_{1\parallel 2} (\ell_1',\ell_2')$ \> if \> $\ell_1\xrightarrow{g_{1},\sigma,R_{1}}_{1} \ell_1'$, $\ell_2'\xrightarrow{g_{2},\sigma,R_{2}}_{2} \ell_2'$ and \\
  \> \> \> $\sigma\in(\Sigma_1\cap\Sigma_2)$. \\
\end{tabbing}
\vspace{-.75cm}
\end{definition}

\begin{example}
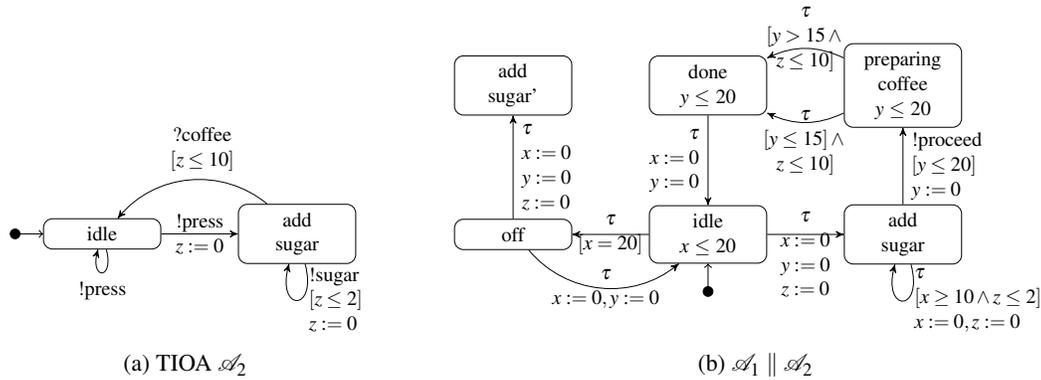
\begin{figure}[tp]
  \hfill
  \subfloat[TIOA $\mathcal{A}_{2}$]{\label{fig:tioa-composition-q}

\scalebox{\scalefactor}{
\begin{tikzpicture}[node distance=1.2]

\node[labeledroundedstateTA, initial, initial where=left] (idle) {idle};
\node[labeledroundedstateTA, right=of idle] (as) {add\\sugar};

\draw[live] (idle) to node[pos=.5, align=center] {!press\\$z:=0$} (as);
\draw[live,loop below] (idle) to node[auto] {!press} (idle);
\draw[live,loop below] (as) to node[pos=.43, right, align=left] {!sugar\\$[z\leq2]$\\$z:=0$} (as);
\draw[live, bend right=45] (as) to node[pos=.42,above,align=center] {?coffee\\$[z\leq10]$} (idle);

\end{tikzpicture}
}}
  \hfill
  \subfloat[$\mathcal{A}_{1}\parallel \mathcal{A}_{2}$]{\label{fig:tioa-composition-result}

\scalebox{\scalefactor}{
\begin{tikzpicture}[node distance=1.2]

\node[labeledroundedstateTA, initial, initial where=below] (idle) {idle\\$x\leq20$};
\node[labeledroundedstateTA, left=of idle] (off) {off};
\node[labeledroundedstateTA, right=of idle] (as) {add\\sugar};
\node[labeledroundedstateTA, above=of as] (pc) {preparing\\coffee\\$y\leq20$};
\node[labeledroundedstateTA, left=of pc] (done) {done\\$y\leq20$};
\node[labeledroundedstateTA, left=of done] (asp) {add\\sugar'};

\draw[live] (idle) to node[pos=.5, align=center] {$\tau$\\$[x=20]$} (off);
\draw[live] (idle) to node[pos=.5, align=center] {\\\\$\tau$\\$x:=0$\\$y:=0$\\$z:=0$} (as);
\draw[live,loop below] (as) to node[pos=.43, right, align=left] {$\tau$\\$[x\geq10\land z\leq2]$\\$x:=0,z:=0$} (as);
\draw[live] (as) to node[pos=.5, right, align=left] {!proceed\\$[y\leq20]$\\$y:=0$} (pc);
\draw[live, bend right=25] (pc) to node[pos=.5, align=center] {$\tau$\\$[y>15\land\mathstrut$\\$z\leq10]$\\} (done);
\draw[live, bend left=25] (pc) to node[pos=.5, align=center] {\\\\$\tau$\\$[y\leq15]\land\mathstrut$\\$z\leq10]$} (done);
\draw[live] (done) to node[auto, left, align=right] {$\tau$\\$x:=0$\\$y:=0$} (idle);
\draw[live] (off) to node[right, pos=.5, align=left] {$\tau$\\$x:=0$\\$y:=0$\\$z:=0$} (asp);
\draw[live, bend right=45] (off) to node[pos=.53, align=center] {$\tau$\\$x:=0,y:=0$} (idle);

\end{tikzpicture}
}}
  \hfill\strut
  \caption{Sample TIOA Composition}\label{fig:tioa-composition}
\end{figure}
Consider TIOA $\mathcal{A}_{1}$, $\mathcal{A}_{2}$ and their parallel 
composition $\mathcal{A}_{1}\parallel\mathcal{A}_{2}$ 
(\cf{}~Figs.~\ref{fig:vending-machine-ta}, \ref{fig:tioa-composition-q}, and~\ref{fig:tioa-composition-result}).
A customer $\mathcal{A}_{2}$ may \emph{press} a button, add sugar and wait for coffee.
In $\mathcal{A}_{1}\parallel \mathcal{A}_{2}$, shared actions are performed synchronously
only if being enabled in both $\mathcal{A}_{1}$ and $\mathcal{A}_{2}$, thus resulting in a $\tau$-step.
For instance, the synchronized switch from \emph{idle} to \emph{add sugar} is 
labeled with $\tau$ and clocks $x$, $y$ (from $\mathcal{A}_{1}$) and $z$ (from $\mathcal{A}_{2}$) being reset.
Similarly, the \emph{sugar} loop also becomes a $\tau$-step, 
while clock resets are unified and guards are conjugated.
In contrast, switch \emph{proceed} does not become internal 
as this output is not observed by $\mathcal{A}_{2}$ (but instead
transmitted to some administration component).
Location \emph{preparing coffee} has two switches labeled with $\tau$ as 
both \emph{coffee} switches of $\mathcal{A}_{1}$ are synchronized with the 
\emph{coffee} switch of $\mathcal{A}_{2}$.
Location \emph{off} has a $\tau$-step to \emph{add sugar'} as the 
switch of $\mathcal{A}_{1}$ from \emph{off} to \emph{idle} may also be synchronized with 
the switch of $\mathcal{A}_{2}$ from \emph{idle} to \emph{add sugar}.
Here, \emph{add sugar'} does not have any outgoing transitions as \emph{add sugar} 
($\mathcal{A}_{2}$) has no actions shared with \emph{idle} ($\mathcal{A}_{1}$).
The \emph{sugar} loop and the switch from \emph{preparing coffee} 
to \emph{done} guarded by $y>15\land z\leq10$ are semantically incompatible as their guards are 
unsatisfiable in all runs.
\end{example}
%
\section{Timed Input/Output Conformance}\label{sec:tioco}
TIOLTS have been considered as a formal basis
for conformance testing theories
of time-critical input/output behaviors~\cite{Schmaltz2008}.
Timed conformance relations are usually defined
in the flavor of \textbf{ioco} testing, as initially proposed 
on input/output labeled transition systems (IOLTS)
for untimed behaviors~\cite{Tretmans1996}.

Intuitively, IOLTS $\textit{im}$ representing an implementation under test \emph{input/output-conforms} to
IOLTS $\textit{sp}$ representing a specification, denoted $\textit{im}~\textbf{ioco}~\textit{sp}$, 
if for all input behaviors specified in $\textit{sp}$, the observable
output behaviors of $\textit{im}$ for those input behaviors are permitted by $\textit{sp}$.
Input behaviors may be only partially
specified (\ie{} only for relevant/intended environmental input sequences, the expected
output behaviors are explicitly captured in $\textit{sp}$), whereas 
implementation $\textit{im}$ is supposed
to be \emph{input-enabled} (\ie{} to never block any input action).
Timed adaptations of \textbf{ioco}, so-called \textbf{tioco}, consider both $\textit{im}$ 
and $\textit{sp}$ to be represented as TIOLTS as
checking timed input/output conformance directly on TIOA is unfeasible
due to non-observability of clock resets in timed runs.
For instance, in the example in Fig.~\ref{fig:vending-machine-ta}, 
it is unknown if it is allowed to wait for 20 time units in \emph{idle} if we reach 
this location from \emph{done} as resets of $x$ and $y$ are not observable.
Similar to the untimed case, TIOLTS $\textit{im}$ is supposed to be input-enabled 
(\ie{} $\textit{im}$ must always---\emph{at any time}---be 
able to instantaneously accept all possible inputs).
In addition, for $\textit{im}$ to specify realistic behaviors, we further impose 
the \emph{independent-progress} property: 
In each state, $\textit{im}$ is able to either wait for an infinite amount of time 
or to eventually perform an output action thus preventing \emph{forced inputs}~\cite{David2010,Schmaltz2008}.

\begin{definition}\label{def:inp-en-ind-prog}
Let $(S,s_0,\Sigma_I,\Sigma_O,\twoheadrightarrow)$ be a TIOLTS.
\begin{itemize}
  \item (Input-Enabledness) 
  State $s\in S$ is \emph{weak input-enabled} iff $\forall i\in\Sigma_I:s\xTwoheadrightarrow{i}$.
  \item (Independent Progress)
  State $s\in S$ of a TIOLTS enables 
\emph{weak independent progress} iff 
$\forall d\in\Delta:s\xTwoheadrightarrow{d}$ or 
$\exists d\in\Delta,\exists o\in\Sigma_O:s\xTwoheadrightarrow{d}\xTwoheadrightarrow{o}$.
\end{itemize}
\end{definition}
A TIOLTS is \emph{(weak) input-enabled} iff all states are (weak) input-enabled and it enables 
\emph{(weak) independent progress} if all states do (for the strong versions of both properties, 
we replace $\Twoheadrightarrow$ by $\twoheadrightarrow$).
Similarly to \textbf{ioco}, we assume weak input-enabledness and 
independent progress for all implementations under test, whereas 
specifications may be underspecified.
This is required for practical testing where an implementation should always at least accept (and then potentially ignore) every input.
Conversely, the environment (\ie{} a tester) should not be enforced by the implementation to provide a particular input 
in order to guarantee any progress.
For instance, consider Fig.~\ref{fig:vending-machine-ta}: 
location \emph{off} is not input-enabled as there is no switch for input \emph{sugar}.
However, if there would be such a switch, then also location \emph{idle} would be weak 
input-enabled as output \emph{off} may be reached by a $\tau$-step.
In contrast, all locations in Fig.~\ref{fig:vending-machine-ta} 
enable (weak) independent progress.

We now revisit two major definitions of \textbf{tioco} from recent literature.
We first consider the (notationally slightly adapted) 
definition of Krichen and Tripakis~\cite{Krichen2004} which we will
refer to as $\mathtiocodelay$. 
It is based on the assumption that, in addition to
timed traces consisting of sequences of timed steps $(d,o)$ including 
output actions $o\in \Sigma_O$, also
all possible delays $d\in \Delta$ permitted 
to elapse in states $s\in S$ are observable in isolation.

\begin{definition}[$\mathtiocodelay$]\label{def:tioco-delay}
    Let \emph{im}, \emph{sp} be a TIOLTS over
    $\Sigma=\Sigma_I\cup\Sigma_O$, $s\in S$, $S'\subseteq S$, 
    and $\xi\in(\Delta\times\Sigma)^*$.
    \begin{itemize}
        \item $s\after\xi:=\{s'\mid s\xTwoheadrightarrow{\xi}s'\}$,
        \item $\outdelay(s):=\{o\mid o\in\Sigma_O, s\xtwoheadrightarrow{o}\}\cup\{d\mid d\in\Delta, s\xTwoheadrightarrow{d}\}$,
        \item $\outdelay(S'):=\bigcup_{s\in S'}\outdelay(s)$,
        \item $\ttraces(s):=\{\xi\mid s\xTwoheadrightarrow{\xi}\}$, and
      \item $\textit{im}~\mathtiocodelay~\textit{sp}:\Leftrightarrow \forall\xi\in\ttraces(\textit{sp}):\outdelay(\textit{im}\after\xi)\subseteq\outdelay(\textit{sp}\after\xi)$
      \end{itemize}
\end{definition}
We may use the name of the whole TIOLTS and the name of its initial state interchangeably 
as frequently done in \textbf{ioco}-based theories (\eg{} by $\textit{im}\,\after\,\xi$ we refer to the set of 
states being reachable by 
$\xi$ from the initial state of \textit{im}).
The second version of \textbf{tioco}, which we will denote as $\mathtiocodelta$,
does not rely on observability of arbitrary delays, but
instead incorporates a notion of \emph{timed quiescence}~\cite{Schmaltz2008}. 
Quiescence constitutes another fundamental concept of (untimed) \textbf{ioco}: 
IOLTS state $s$ is \emph{quiescent}, denoted $\delta(s)$, if no output or internal action is enabled in $s$ thus
requiring an input to proceed a (suspended) run reaching $s$.
By making quiescence observable by a special output $\delta$, 
\textbf{ioco} rejects trivial implementations $\textit{im}$ never showing any outputs 
as this must be explicitly permitted by the specification.
In the timed case, state $s$ of a TIOLTS may be considered
quiescent if no output action is \emph{ever} (or, at least not until some fixed maximum delay $M$~\cite{BrandanBriones2004}) enabled in $s$.
To this end, the notion of \emph{timed suspension traces} (\textit{tstraces})
extends \emph{traces} of TIOLTS by timed observable quiescence.
The most common definition of $\mathtiocodelta$ may be given as follows.

\begin{definition}[$\mathtiocodelta$]\label{def:tioco-auxiliary}
    Let \emph{im}, \emph{sp} be a TIOLTS over
    $\Sigma=\Sigma_I\cup\Sigma_O$, $s,s'\in S$, $S'\subseteq S$ 
    and $\xi\in(\Delta\times(\Sigma\cup\{\delta\}))^*$.
    \begin{itemize}
        \item $s$ is \emph{quiescent}, denoted by $\delta(s)$, iff $\forall\mu\in\Sigma_O,\forall d\in\Delta:s\not\xtwoheadrightarrow{(d,\mu)}$,
        \item $s\after\xi:=\{s'\mid s\xTwoheadrightarrow{\xi}s'\}$,
        \item $\out(s):=\{(d,o)\mid o\in\Sigma_O,d\in\Delta,s\xTwoheadrightarrow{(d,o)}\}\cup\{\delta\mid\delta(s)\}$,
        \item $\out(S'):=\bigcup_{s\in S'}\out(s)$,
        \item $\tstraces(s):=\{\xi\mid s\xTwoheadrightarrow{\xi}\}$, where $s'\xtwoheadrightarrow{\delta}s'$ iff $\delta(s')$, and
        \item $\textit{im}~\mathtiocodelta~\textit{sp}:\Leftrightarrow \forall\xi\in\tstraces(\textit{sp}):\out(\textit{im}\after\xi)\subseteq\out(\textit{sp}\after\xi)$
        \end{itemize}
\end{definition}

\begin{example}
Figure~\ref{fig:tioco-examples} provides a collection of small examples illustrating $\mathtiocodelta$.
In Fig.~\ref{fig:tioco-example-correct}, it holds that $\llbracket A_0\rrbracket_S\mathtiocodelta\llbracket A_1\rrbracket_S$ as the required inclusion relation holds for all possible $\out$ sets, for instance, $\out(\llbracket A_0\rrbracket_S\after\epsilon)=\{(1,o),(2,o)\}\subseteq\out(\llbracket A_1\rrbracket_S\after\epsilon)=\{(1,o),(2,o),(3,o)\}$.
Note, that this is also true for output behaviors enabled after 3 time units as $\llbracket A_0\rrbracket_S$ does not permit to wait for 3 time units, such that the respective $\out$ set is empty.
Hence, \textbf{tioco} permits implementations to show less output behavior than the specification allows.
\begin{figure}[tp]
    \subfloat[$\llbracket A_0\rrbracket_S\mathtiocodelta\llbracket A_1\rrbracket_S$]{\hspace{.1em}\label{fig:tioco-example-correct}

\scalebox{\scalefactor}{
\begin{tikzpicture}[node distance=.5]

\node[labeledstate, initial, initial where=above] (l00) {$\angles{\ell_0,x=0}$};
\node[labeledstate, below=of l00] (l01) {$\angles{\ell_0,x=1}$};
\node[labeledstate, below=of l01] (l02) {$\angles{\ell_0,x=2}$};

\draw[live] (l00) to node[auto] {$1$} (l01);
\draw[live] (l01) to node[auto] {$1$} (l02);
\draw[live, bend left=60] (l01) to node[pos=.4, right] {!o} (l00);
\draw[live, bend left=60] (l02) to node[pos=.2, right] {!o} (l00);

\node[labeledstate, initial, initial where=above, right=of l00] (l10) {$\angles{\ell_1,x=0}$};
\node[labeledstate, below=of l10] (l11) {$\angles{\ell_1,x=1}$};
\node[labeledstate, below=of l11] (l12) {$\angles{\ell_1,x=2}$};
\node[labeledstate, below=of l12] (l13) {$\angles{\ell_1,x=3}$};

\draw[live] (l10) to node[auto] {$1$} (l11);
\draw[live] (l11) to node[auto] {$1$} (l12);
\draw[live] (l12) to node[auto] {$1$} (l13);
\draw[live, bend left=60] (l11) to node[pos=.4, right] {!o} (l10);
\draw[live, bend left=60] (l12) to node[pos=.2, right] {!o} (l10);
\draw[live, bend left=60] (l13) to node[pos=.2, right] {!o} (l10);

\end{tikzpicture}
}\hspace{.1em}}
    \hfill
    \subfloat[$\llbracket A_2\rrbracket_S\not\mathtiocodelta\llbracket A_3\rrbracket_S$]{\label{fig:tioco-example-incorrect}

\scalebox{\scalefactor}{
\begin{tikzpicture}[node distance=.5]

\node[labeledstate, initial, initial where=above] (l40) {$\angles{\ell_2,x=0}$};
\node[labeledstate, left=of l40] (l40empty) {};
\node[labeledstate, below=of l40] (l41) {$\angles{\ell_2,x=1}$};
\node[labeledstate, below=of l41] (l42) {$\angles{\ell_2,x=2}$};
\node[labeledstate, below=of l42] (l43) {$\vdots$};

\draw[live] (l40) to node[auto] {$1$} (l41);
\draw[live] (l41) to node[auto] {$1$} (l42);
\draw[live] (l42) to node[auto] {$1$} (l43);

\node[labeledstate, initial, initial where=above, right=of l40] (l50) {$\angles{\ell_3,x=0}$};
\node[labeledstate, right=of l50] (l50empty) {};
\node[labeledstate, below=of l50] (l51) {$\angles{\ell_3,x=1}$};
\node[labeledstate, below=of l51] (l52) {$\angles{\ell_3,x=2}$};
\node[labeledstate, below=of l52] (l53) {$\vdots$};

\draw[live] (l50) to node[auto] {$1$} (l51);
\draw[live] (l51) to node[auto] {$1$} (l52);
\draw[live] (l52) to node[auto] {$1$} (l53);
\draw[live, bend left=60] (l51) to node[pos=.4, right] {!o} (l50);
\draw[live, bend left=60] (l52) to node[pos=.2, right] {!o} (l50);
\draw[live, bend left=60] (l53) to node[pos=.2, right] {!o} (l50);

\end{tikzpicture}
}}
    \hfill
    \subfloat[$\llbracket A_4\rrbracket_S\mathtiocodelta\llbracket A_5\rrbracket_S$]{\label{fig:tioco-example-nondet}

\scalebox{\scalefactor}{
\begin{tikzpicture}[node distance=.5]

\node[labeledstate, initial, initial where=above] (l20) {$\angles{\ell_4,x=0}$};
\node[labeledstate, below left=of l20] (l2p0) {$\angles{\ell_4',x=0}$};
\node[labeledstate, below=of l2p0] (l2p1) {$\angles{\ell_4',x=1}$};
\node[labeledstate, below=of l2p1] (l2p2) {$\angles{\ell_4',x=2}$};

\node[labeledstate, below right=of l20] (l2pp0) {$\angles{\ell_4'',x=0}$};
\node[labeledstate, below=of l2pp0] (l2pp1) {$\angles{\ell_4'',x=1}$};
\node[labeledstate, below=of l2pp1] (l2pp2) {$\angles{\ell_4'',x=2}$};

\draw[live, bend right=30] (l20) to node[auto, swap] {!o} (l2p0);
\draw[live] (l2p0) to node[auto] {$1$} (l2p1);
\draw[live] (l2p1) to node[auto] {$1$} (l2p2);
\draw[live] (l2p0) to node[pos=.1, right] {!o} (l20);
\draw[live, bend right=20] (l2p1) to node[pos=.4, left] {!o} (l20);
\draw[live, bend right=20] (l2p2) to node[pos=.4, left] {!o} (l20);

\draw[live, bend left=30] (l20) to node[auto] {!o} (l2pp0);
\draw[live] (l2pp0) to node[auto] {$1$} (l2pp1);
\draw[live] (l2pp1) to node[auto] {$1$} (l2pp2);
\draw[live] (l2pp0) to node[pos=.1, left] {!o} (l20);
\draw[live, bend left=20] (l2pp1) to node[pos=.4, right] {!o} (l20);
\draw[live, bend left=20] (l2pp2) to node[pos=.4, right] {!o'} (l20);

\node[labeledstate, right=of l2pp0] (l31) {$\angles{\ell_5',x=0}$};
\node[labeledstate, initial, initial where=above, above=of l31] (l30) {$\angles{\ell_5,x=0}$};
\node[labeledstate, below=of l31] (l32) {$\angles{\ell_5',x=1}$};
\node[labeledstate, below=of l32] (l33) {$\angles{\ell_5',x=2}$};

\draw[live] (l30) to node[auto] {!o} (l31);
\draw[live] (l31) to node[auto] {$1$} (l32);
\draw[live] (l32) to node[auto] {$1$} (l33);
\draw[live, bend left=60] (l31) to node[pos=.4, right] {!o} (l30);
\draw[live, bend left=60] (l32) to node[pos=.2, right] {!o} (l30);
\draw[live, bend left=60] (l33) to node[pos=.2, right] {!o} (l30);
\draw[live, bend right=70] (l33) to node[pos=.2, left] {!o'} (l30);

\end{tikzpicture}
}}
  \caption{Examples for $\mathtiocodelta$ on TIOLS}\label{fig:tioco-examples}
\end{figure}
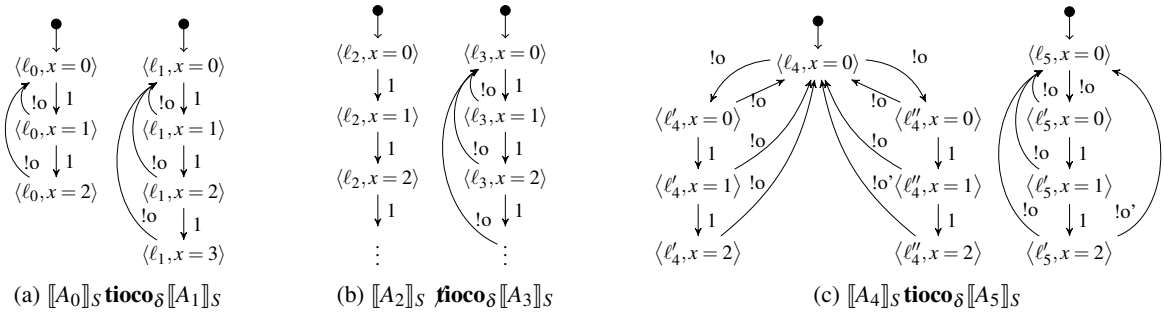
Figure~\ref{fig:tioco-example-incorrect} depicts a further example where $\llbracket A_2\rrbracket_S\mathtiocodelta\llbracket A_3\rrbracket_S$ does not hold as $\out(\llbracket A_2\rrbracket_S\after\epsilon)=\{\delta\}\nsubseteq\out(\llbracket A_3\rrbracket_S\after\epsilon)=\{(1,o),(2,o),\ldots\}$ (\ie{} implementation $\llbracket A_2\rrbracket_S$ is quiescent but specification $\llbracket A_3\rrbracket_S$ is not).
The TIOLTS in Fig.~\ref{fig:tioco-example-nondet} illustrates how non-determinism is handled by $\mathtiocodelta$.
For specification ($\llbracket A_5\rrbracket_S$), it holds that $\out(\llbracket A_5\rrbracket_S\after(0,o))=\{(0,o),(1,o),\allowbreak(2,o),\allowbreak(2,o')\}$, and the same holds for $\out(\llbracket A_4\rrbracket_S\after(0,o))$ and, particularly, $(2,o)$ \emph{and} $(2,o')$.
This is due to the fact that in $\mathtiocodelta$ only outputs of those states are considered 
being reachable via some trace of the specification, but not necessarily of \emph{any} state of the respective TIOLTS.
Therefore, it holds that $\llbracket A_4\rrbracket_S\mathtiocodelta\llbracket A_5\rrbracket_S$.
\end{example}

Next, we apply $\mathtiocodelta$ to our running example to illustrate the differences to $\mathtiocodelay$.

\begin{example}
Consider TIOA $\mathcal{A}_1'$ depicted in Fig.~\ref{fig:vending-machine-tioco} to be a candidate implementation of TIOA $\mathcal{A}_1$ in Fig.~\ref{fig:vending-machine-ta}.
\begin{figure}[tp]
  \centering
  \begin{adjustbox}{max width=\textwidth}

\scalebox{\scalefactor}{
\begin{tikzpicture}[node distance=1.2]

\node[labeledroundedstateTA, initial, initial where=below] (idle) {idle};
\node[labeledroundedstateTA, left=of idle] (off) {off};
\node[labeledroundedstateTA, right=of idle] (as) {add\\sugar\\$y\leq 15$};
\node[labeledroundedstateTA, above=of as] (pc) {preparing\\coffee\\$y\leq20$};
\node[labeledroundedstateTA, left=of pc] (done) {done\\$y\leq20$};

\draw[live] (idle) to node[pos=.5, align=center] {$\tau$\\$[x=20]$} (off);
\draw[live] (idle) to node[pos=.5, align=center] {?press\\$[x\leq20]$\\$x:=0$\\$y:=0$} (as);
\draw[live,loop below] (as) to node[pos=.25, right, align=left] {?sugar\\$[x\geq10]$\\$x:=0$} (as);
\draw[live] (as) to node[pos=.5, right, align=left] {!proceed\\$y:=0$} (pc);
\draw[live, bend right=25] (pc) to node[pos=.5, align=center] {!coffee\\$[y>15]$} (done);
\draw[live, bend left=25] (pc) to node[pos=.5, align=center] {!coffee\\$[y\leq15]$} (done);
\draw[live] (done) to node[auto, left, align=right] {$\tau$\\$x:=0$\\$y:=0$} (idle);
\draw[live, bend right=45] (off) to node[pos=.53, align=center] {\\\\?press\\$x:=0$\\$y:=0$} (idle);

\end{tikzpicture}
}
  \end{adjustbox}
  \caption{Example for a Candidate Implementation $\mathcal{A}_1'$ of $\mathcal{A}_1$ (\cf{}~Fig.~\ref{fig:vending-machine-ta})}\label{fig:vending-machine-tioco}
\end{figure}
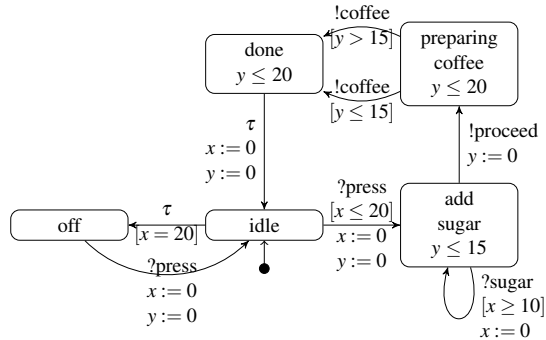
First, the guard $y\leq20$ of the switch labeled with \emph{proceed} is not contained 
in $\mathcal{A}_1'$, and instead, location \emph{add sugar} has an invariant $y\leq15$.
Considering only this difference, we have $\llbracket\mathcal{A}_1'\rrbracket_S\mathtiocodelta\llbracket\mathcal{A}_1\rrbracket_S$ as well as $\llbracket\mathcal{A}_1'\rrbracket_S\mathtiocodelay\llbracket\mathcal{A}_1\rrbracket_S$ as we forbid output \emph{proceed} for $15<y\leq20$ and waiting in \emph{add sugar} for an arbitrary amount of time.
In contrast, omitting the switch labeled \emph{!proceed} in $\mathcal{A}_1'$ would lead to a violation of $\mathtiocodelta$ as location \emph{add sugar} would become quiescent (whereas $\mathtiocodelay$ still holds as it does not check for quiescence).
Second, the invariant $x\leq20$ of location \emph{idle} in $\mathcal{A}_1$ is not contained in $\mathcal{A}_1'$ but, instead, becomes a guard to the switch labeled with \emph{?press}.
As a result, $\llbracket\mathcal{A}_1'\rrbracket_S\mathtiocodelta\llbracket\mathcal{A}_1\rrbracket_S$ still holds as delays in timed runs are only observable by $\mathtiocodelta$ if paired with a subsequent output action.
In contrast, $\llbracket\mathcal{A}_1'\rrbracket_S\mathtiocodelay\llbracket\mathcal{A}_1\rrbracket_S$ does not hold as in $\mathtiocodelay$ delays of any possible duration are observable, even if no subsequent outputs will ever occur.
\end{example}

\paragraph{Weaknesses of Existing Definitions of Timed Input/Output Conformance.}
As a result, $\mathtiocodelay$ and $\mathtiocodelta$ are incomparable.
In addition, observability capabilities required for effectively checking $\mathtiocodelay$
are unrealistic and therefore only of theoretical interest, but infeasible in practice. 
In contrast, $\mathtiocodelta$ is more realistic but fails to guarantee liveness requirements as 
the notion of quiescence does not properly
reflect the differences between allowed and enforced outputs in TIOA specifications.
To further illustrate this problem, consider the five
TIOA, $\mathcal{A}_1$ to $\mathcal{A}_5$, and their TIOLTS in Fig.~\ref{fig:tioco-flaws}.
\begin{figure}[tp]
  \newcommand{\nodeDistanceTA}{1}
  \newcommand{\nodeDistanceTLTS}{.3}
  \newcommand{\figureScaling}{\scalefactor}
  \begin{adjustbox}{max width=\textwidth}
    \begin{tabular}{lll}
      \toprule

\specialcell[l]{$\mathcal{A}_1$\\enforced $\delta$\\forbidden $(d,o)$}

&

\specialcell{
	\begin{tikzpicture}[node distance=\nodeDistanceTA]

		\node[labeledroundedstateTA, initial, initial where=left] (l1) {$\ell_1$};

	\end{tikzpicture}
} 

&

\specialcell[l]{
		\begin{tikzpicture}[node distance=\nodeDistanceTLTS]

			\node[labeledstate] (l10) {$\angles{\ell_1,x=0}$};
			\node[labeledstate, right=of l10] (l1d) {$\angles{\ell_1,x=d}$};

			\draw[live] (l10) to node[auto] {$d$} (l1d);

		\end{tikzpicture}
	\\
		\begin{tikzpicture}[node distance=\nodeDistanceTLTS]

			\node[labeledstate] (l1u) {$\angles{\ell_1,u}$};
			\node[labeledstate, right=of l1u] (l1uu) {$\angles{\ell_1,u}$};

			\draw[live] (l1u) to node[auto] {$\delta$} (l1uu);

		\end{tikzpicture}
}

\\\midrule

\specialcell[l]{$\mathcal{A}_2$\\safe $\delta$\\safe $(d,o)$}

&

\specialcell{
	\begin{tikzpicture}[node distance=\nodeDistanceTA]

		\node[labeledroundedstateTA, initial, initial where=left] (l2) {$\ell_2$};
		\node[labeledroundedstateTA, right=of l2] (l2p) {$\ell_2'$};

		\draw[live] (l2) to node[auto] {!o} (l2p);

	\end{tikzpicture}
}

&

\specialcell[l]{
		\begin{tikzpicture}[node distance=\nodeDistanceTLTS]

			\node[labeledstate] (l20) {$\angles{\ell_2,x=0}$};
			\node[labeledstate, right=of l20] (l2d) {$\angles{\ell_2,x=d}$};

			\draw[live] (l20) to node[auto] {$d$} (l2d);

		\end{tikzpicture}
	\\
		\begin{tikzpicture}[node distance=\nodeDistanceTLTS]

			\node[labeledstate] (l20) {$\angles{\ell_2,x=0}$};
			\node[labeledstate, right=of l20] (l2d) {$\angles{\ell_2,x=d}$};
			\node[labeledstate, right=of l2d] (l2p) {$\angles{\ell_2',x=d}$};

			\draw[live] (l20) to node[auto] {$d$} (l2d);
			\draw[live] (l2d) to node[auto] {!o} (l2p);

		\end{tikzpicture}
}

\\\midrule

\specialcell[l]{$\mathcal{A}_3$\\safe $\delta$\\safe $(d<k,o)$}

&

\specialcell{
	\begin{tikzpicture}[node distance=\nodeDistanceTA]

		\node[labeledroundedstateTA, initial, initial where=left] (l3) {$\ell_3$};
		\node[labeledroundedstateTA, right=of l3] (l3p) {$\ell_3'$};

		\draw[live] (l3) to node[pos=.5,align=center] {!o\\$x<k$} (l3p);

	\end{tikzpicture}
}

&

\specialcell[l]{
		\begin{tikzpicture}[node distance=\nodeDistanceTLTS]

			\node[labeledstate] (l30) {$\angles{\ell_3,x=0}$};
			\node[labeledstate, right=of l30] (l3d) {$\angles{\ell_3,x=d}$};

			\draw[live] (l30) to node[auto] {$d$} (l3d);

		\end{tikzpicture}
	\\
		\begin{tikzpicture}[node distance=\nodeDistanceTLTS]

			\node[labeledstate] (l30) {$\angles{\ell_3,x=0}$};
			\node[labeledstate, right=of l30, xshift=15] (l3d) {$\angles{\ell_3,x=d}$};
			\node[labeledstate, right=of l3d] (l3p) {$\angles{\ell_3',x=d}$};

			\draw[live] (l30) to node[auto] {$d<k$} (l3d);
			\draw[live] (l3d) to node[auto] {!o} (l3p);

		\end{tikzpicture}
}

\\\midrule

\specialcell[l]{$\mathcal{A}_4$\\forbidden $\delta$\\enforced $(d<k,o)$}

&

\specialcell{
	\begin{tikzpicture}[node distance=\nodeDistanceTA]

		\node[labeledroundedstateTA, initial, initial where=left] (l4) {$\ell_4$\\$x<k$};
		\node[labeledroundedstateTA, right=of l4] (l4p) {$\ell_4'$};

		\draw[live] (l4) to node[auto] {!o} (l4p);

	\end{tikzpicture}
}

&

\specialcell{
	\begin{tikzpicture}[node distance=\nodeDistanceTLTS]

		\node[labeledstate] (l40) {$\angles{\ell_4,x=0}$};
		\node[labeledstate, right=of l40, xshift=15] (l4d) {$\angles{\ell_4,x=d}$};
		\node[labeledstate, right=of l4d] (l4p) {$\angles{\ell_4',x=d}$};

		\draw[live] (l40) to node[auto] {$d<k$} (l4d);
		\draw[live] (l4d) to node[auto] {!o} (l4p);

	\end{tikzpicture}
}

\\\midrule

\specialcell[l]{$\mathcal{A}_5$\\forbidden $\delta$\\enforced $(d<k,o)$}

&

\specialcell{
	\begin{tikzpicture}[node distance=\nodeDistanceTA]

		\node[labeledroundedstateTA, initial, initial where=left] (l5) {$\ell_5$\\$x<k$};
		\node[labeledroundedstateTA, right=of l5] (l5p) {$\ell_5'$};

		\draw[live] (l5) to node[pos=.5, align=center] {!o\\$x<k$} (l5p);

	\end{tikzpicture}
}

&

\specialcell{
	\begin{tikzpicture}[node distance=\nodeDistanceTLTS]

		\node[labeledstate] (l50) {$\angles{\ell_5,x=0}$};
		\node[labeledstate, right=of l50, xshift=15] (l5d) {$\angles{\ell_5,x=d}$};
		\node[labeledstate, right=of l5d] (l5p) {$\angles{\ell_5',x=d}$};

		\draw[live] (l50) to node[auto] {$d<k$} (l5d);
		\draw[live] (l5d) to node[auto] {!o} (l5p);

	\end{tikzpicture}
}

\\
      \bottomrule
    \end{tabular}
    \setbox9=\hbox{\begin{tabular}{lll}
      \toprule

\specialcell[l]{$\mathcal{A}_1$\\enforced $\delta$\\forbidden $(d,o)$}

&

\specialcell{
	\begin{tikzpicture}[node distance=\nodeDistanceTA]

		\node[labeledroundedstateTA, initial, initial where=left] (l1) {$\ell_1$};

	\end{tikzpicture}
} 

&

\specialcell[l]{
		\begin{tikzpicture}[node distance=\nodeDistanceTLTS]

			\node[labeledstate] (l10) {$\angles{\ell_1,x=0}$};
			\node[labeledstate, right=of l10] (l1d) {$\angles{\ell_1,x=d}$};

			\draw[live] (l10) to node[auto] {$d$} (l1d);

		\end{tikzpicture}
	\\
		\begin{tikzpicture}[node distance=\nodeDistanceTLTS]

			\node[labeledstate] (l1u) {$\angles{\ell_1,u}$};
			\node[labeledstate, right=of l1u] (l1uu) {$\angles{\ell_1,u}$};

			\draw[live] (l1u) to node[auto] {$\delta$} (l1uu);

		\end{tikzpicture}
}

\\\midrule

\specialcell[l]{$\mathcal{A}_2$\\safe $\delta$\\safe $(d,o)$}

&

\specialcell{
	\begin{tikzpicture}[node distance=\nodeDistanceTA]

		\node[labeledroundedstateTA, initial, initial where=left] (l2) {$\ell_2$};
		\node[labeledroundedstateTA, right=of l2] (l2p) {$\ell_2'$};

		\draw[live] (l2) to node[auto] {!o} (l2p);

	\end{tikzpicture}
}

&

\specialcell[l]{
		\begin{tikzpicture}[node distance=\nodeDistanceTLTS]

			\node[labeledstate] (l20) {$\angles{\ell_2,x=0}$};
			\node[labeledstate, right=of l20] (l2d) {$\angles{\ell_2,x=d}$};

			\draw[live] (l20) to node[auto] {$d$} (l2d);

		\end{tikzpicture}
	\\
		\begin{tikzpicture}[node distance=\nodeDistanceTLTS]

			\node[labeledstate] (l20) {$\angles{\ell_2,x=0}$};
			\node[labeledstate, right=of l20] (l2d) {$\angles{\ell_2,x=d}$};
			\node[labeledstate, right=of l2d] (l2p) {$\angles{\ell_2',x=d}$};

			\draw[live] (l20) to node[auto] {$d$} (l2d);
			\draw[live] (l2d) to node[auto] {!o} (l2p);

		\end{tikzpicture}
}

\\\midrule

\specialcell[l]{$\mathcal{A}_3$\\safe $\delta$\\safe $(d<k,o)$}

&

\specialcell{
	\begin{tikzpicture}[node distance=\nodeDistanceTA]

		\node[labeledroundedstateTA, initial, initial where=left] (l3) {$\ell_3$};
		\node[labeledroundedstateTA, right=of l3] (l3p) {$\ell_3'$};

		\draw[live] (l3) to node[pos=.5,align=center] {!o\\$x<k$} (l3p);

	\end{tikzpicture}
}

&

\specialcell[l]{
		\begin{tikzpicture}[node distance=\nodeDistanceTLTS]

			\node[labeledstate] (l30) {$\angles{\ell_3,x=0}$};
			\node[labeledstate, right=of l30] (l3d) {$\angles{\ell_3,x=d}$};

			\draw[live] (l30) to node[auto] {$d$} (l3d);

		\end{tikzpicture}
	\\
		\begin{tikzpicture}[node distance=\nodeDistanceTLTS]

			\node[labeledstate] (l30) {$\angles{\ell_3,x=0}$};
			\node[labeledstate, right=of l30, xshift=15] (l3d) {$\angles{\ell_3,x=d}$};
			\node[labeledstate, right=of l3d] (l3p) {$\angles{\ell_3',x=d}$};

			\draw[live] (l30) to node[auto] {$d<k$} (l3d);
			\draw[live] (l3d) to node[auto] {!o} (l3p);

		\end{tikzpicture}
}

\\\midrule

\specialcell[l]{$\mathcal{A}_4$\\forbidden $\delta$\\enforced $(d<k,o)$}

&

\specialcell{
	\begin{tikzpicture}[node distance=\nodeDistanceTA]

		\node[labeledroundedstateTA, initial, initial where=left] (l4) {$\ell_4$\\$x<k$};
		\node[labeledroundedstateTA, right=of l4] (l4p) {$\ell_4'$};

		\draw[live] (l4) to node[auto] {!o} (l4p);

	\end{tikzpicture}
}

&

\specialcell{
	\begin{tikzpicture}[node distance=\nodeDistanceTLTS]

		\node[labeledstate] (l40) {$\angles{\ell_4,x=0}$};
		\node[labeledstate, right=of l40, xshift=15] (l4d) {$\angles{\ell_4,x=d}$};
		\node[labeledstate, right=of l4d] (l4p) {$\angles{\ell_4',x=d}$};

		\draw[live] (l40) to node[auto] {$d<k$} (l4d);
		\draw[live] (l4d) to node[auto] {!o} (l4p);

	\end{tikzpicture}
}

\\\midrule

\specialcell[l]{$\mathcal{A}_5$\\forbidden $\delta$\\enforced $(d<k,o)$}

&

\specialcell{
	\begin{tikzpicture}[node distance=\nodeDistanceTA]

		\node[labeledroundedstateTA, initial, initial where=left] (l5) {$\ell_5$\\$x<k$};
		\node[labeledroundedstateTA, right=of l5] (l5p) {$\ell_5'$};

		\draw[live] (l5) to node[pos=.5, align=center] {!o\\$x<k$} (l5p);

	\end{tikzpicture}
}

&

\specialcell{
	\begin{tikzpicture}[node distance=\nodeDistanceTLTS]

		\node[labeledstate] (l50) {$\angles{\ell_5,x=0}$};
		\node[labeledstate, right=of l50, xshift=15] (l5d) {$\angles{\ell_5,x=d}$};
		\node[labeledstate, right=of l5d] (l5p) {$\angles{\ell_5',x=d}$};

		\draw[live] (l50) to node[auto] {$d<k$} (l5d);
		\draw[live] (l5d) to node[auto] {!o} (l5p);

	\end{tikzpicture}
}

\\
      \bottomrule
    \end{tabular}}
    \hspace*{.5em}
    \raisebox{\dimexpr-\ht9+\height}{

\begin{tabular}{c|cccccccl}
\toprule

\backslashbox{$\textit{im}$}{$\textit{sp}$} & $\llbracket\mathcal{A}_1\rrbracket_S$ & $\llbracket\mathcal{A}_2\rrbracket_S$ & $\llbracket\mathcal{A}_3\rrbracket_S$ & $\llbracket\mathcal{A}_4\rrbracket_S$ & $\llbracket\mathcal{A}_5\rrbracket_S$ \\\midrule
$\llbracket\mathcal{A}_1\rrbracket_S$ & \cmark & \xmark & \xmark & \xmark & \xmark \\
$\llbracket\mathcal{A}_2\rrbracket_S$ & \xmark & \cmark & \xmark & \xmark & \xmark \\
$\llbracket\mathcal{A}_3\rrbracket_S$ & \xmark & \cmark & \cmark & \cellcolor{gray!75}\cmark & \cellcolor{gray!75}\cmark \\
$\llbracket\mathcal{A}_4\rrbracket_S$ & \xmark & \cmark & \cmark & \cmark & \cmark \\
$\llbracket\mathcal{A}_5\rrbracket_S$ & \xmark & \cmark & \cmark & \cmark & \cmark \\

\bottomrule
\end{tabular}}
  \end{adjustbox}
  \caption{Examples of Allowed/Safe, Enforced and Forbidden Actions (TIOA depicted on the left, TIOLTS depicted on the right).
  Table on the right lists for all Pairwise Combinations of TIOA whether $\mathtiocodelta$ holds.}\label{fig:tioco-flaws}
\end{figure}
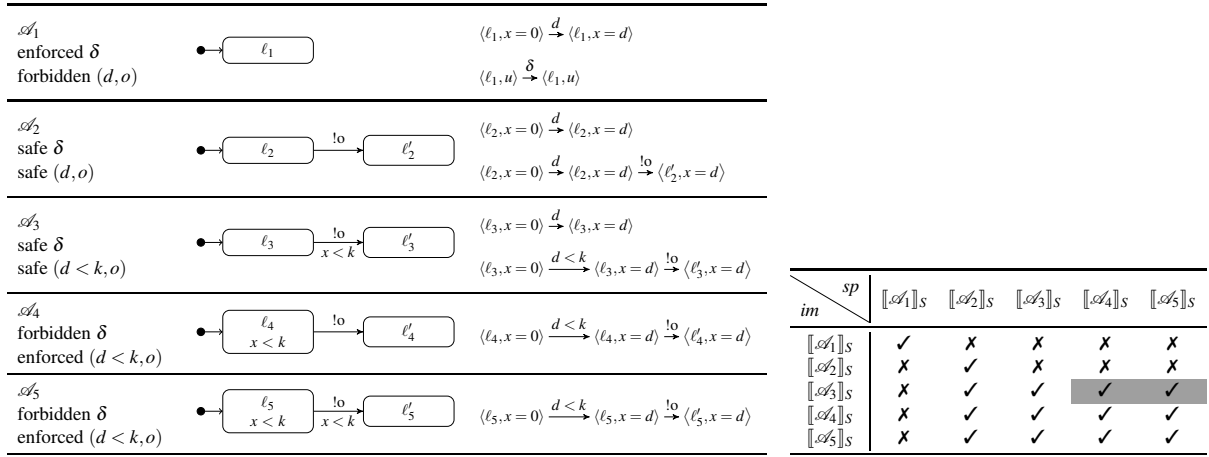
According to Def.~\ref{def:tioco-auxiliary}, 
location $\ell_{1}$ of $\mathcal{A}_1$ is quiescent, whereas none of the locations 
$\ell_{2}$ to $\ell_5$ of $\mathcal{A}_2$ to $\mathcal{A}_5$
are quiescent as output $o$ is eventually enabled.
The table in Fig.~\ref{fig:tioco-flaws} shows all possible comparisons of all five TIOA under $\mathtiocodelta$.
Here, the fact that 
$\llbracket\mathcal{A}_3\rrbracket_S\,\mathtiocodelta\,\llbracket\mathcal{A}_4\rrbracket_S$ and
$\llbracket\mathcal{A}_3\rrbracket_S\,\mathtiocodelta\,\llbracket\mathcal{A}_5\rrbracket_S$
hold is particularly undesirable (as highlighted in the table):
$\mathcal{A}_3$ \emph{may} either produce output $o$ within interval $0 \leq x < k$, or it \emph{may} behave quiescent, whereas $\mathcal{A}_4$ and $\mathcal{A}_5$ \emph{must} produce output $o$ within interval $0 \leq x < k$ and therefore \emph{must not} be quiescent.
In contrast, $\mathcal{A}_2$ and $\mathcal{A}_3$ are allowed 
to be quiescent, by residing for unlimited durations in $\ell_2$ and $\ell_3$.

We summarize the most important weaknesses of existing
versions of \textbf{tioco}.

\begin{itemize}
  \item \emph{(Live Timed Behaviors)}
  \textbf{tioco} either relies on a (unrealistically) strong
  notion of observability including arbitrary delays, or on a (unnecessarily) weak 
  notion of quiescence not distinguishing allowed from enforced outputs.

  \item \emph{(Compositionality)} 
  To the best of our knowledge, there only exists one work
  investigating compositionality properties of \textbf{tioco} so far
  which does not take any notion of quiescence into account~\cite{Bannour2013}.
  
  \item \emph{(Infinite TIOLTS)}
  \textbf{tioco} is defined on TIOLTS,
  an infinitely-bran\-ching state-transition graph being
intractable for realistic testing practices and tools.
However, a sound characterization of \textbf{tioco} directly on TIOA is also not feasible
as timed (suspension) traces are not directly derivable from TIOA.
\end{itemize}
We next propose an improved version of \textbf{tioco} 
to tackle these weaknesses.
%
\section{Improved Timed Input/Output Conformance}\label{sec:ltioco}

In this section, we tackle the weaknesses of existing versions of \textbf{tioco} as described in the previous section.

\subsection{Safe vs. Enforced Quiescence}
Existing definitions of $\textbf{tioco}$ 
either do not have any notion of quiescence at all~\cite{Krichen2004}, or
quiescence includes both (1) states that, if no input is provided, will delay 
forever with no output and (2) states that may eventually produce an output (\cf{}~Fig.~\ref{fig:tioco-flaws})~\cite{Schmaltz2008}.
We instead consider two different facets of quiescence: 
state $s$ is \emph{enforced quiescent} if each run \emph{must} wait in this state 
for an input for an arbitrary duration to proceed.
This coincides with quiescence of $\mathtiocodelta$.
In contrast, state $s$ is \emph{safe quiescent} if a run \emph{may} wait in this state
for an input for an arbitrary duration, but \emph{may} also proceed by eventually producing an output.
Consequently, state $s$ is \emph{not quiescent}, if a run \emph{must} eventually
proceed from this state by producing an output.
Hence, $s$ is \emph{live} if it is neither safe quiescent nor enforced quiescent.

\begin{definition}[Safe/Enforced Quiescence]\label{def:ltioco-quiescence}
    Let $(S,s_0,\Sigma_I,\Sigma_O,\twoheadrightarrow)$ be a TIOLTS.
    \begin{itemize}
      \item $s\in S$ is \emph{safe-quiescent}, denoted $\delta_{S}(s)$, 
      iff $\forall d\in\Delta:s\xtwoheadrightarrow{d}$.
      \item $s\in S$ is \emph{enforced-quiescent}, denoted $\delta_{E}(s)$, 
      iff $\forall\mu\in\Sigma_O,\forall d\in\Delta:s\not\xtwoheadrightarrow{(d,\mu)}$.
    \end{itemize}
\end{definition}
Intuitively, we may assume enforced-quiescent states to be also safe-quiescent.
However, as a counter-example, assume a TIOLTS with one state $\angles{\ell,x=0}$ (corresponding to a TIOA with 
one location $\ell$ and $I(\ell)=x\leq0$): here, no outputs 
are possible and no delays are allowed thus obstructing the intuition.

\begin{lemma}\label{lemma:safe-enforced-quiescence}
Let $(S,s_0,\Sigma_I,\Sigma_O,\twoheadrightarrow)$ be a TIOLTS.
If $s\in S$ enables independent progress, then $\delta_{E}(s)\Rightarrow\delta_{S}(s)$.
\end{lemma}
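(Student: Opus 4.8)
The plan is to reduce the implication to a purely propositional observation, once the definitions are unfolded and the notation is matched carefully. First I would recall that, since both $\delta_S$ and $\delta_E$ are phrased in terms of the strong transition relation $\twoheadrightarrow$, the relevant form of independent progress here is the strong one, obtained from Def.~\ref{def:inp-en-ind-prog} by replacing $\Twoheadrightarrow$ with $\twoheadrightarrow$. Thus ``$s$ enables independent progress'' unfolds to
\[
\left(\forall d\in\Delta:s\xtwoheadrightarrow{d}\right)\ \lor\ \left(\exists d\in\Delta,\exists o\in\Sigma_O:s\xtwoheadrightarrow{d}\xtwoheadrightarrow{o}\right).
\]

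Next I would identify each disjunct with one of the two quiescence notions. The first disjunct $\forall d\in\Delta:s\xtwoheadrightarrow{d}$ is, verbatim, the definition of $\delta_{S}(s)$. For the second disjunct the key step is to unfold the timed-step abbreviation: by definition $s\xtwoheadrightarrow{d}\xtwoheadrightarrow{o}$ denotes $\exists s'\in S: s\xtwoheadrightarrow{d}s'\xtwoheadrightarrow{o}$, which is exactly the timed step $s\xtwoheadrightarrow{(d,o)}$. Hence $\exists d\in\Delta,\exists o\in\Sigma_O: s\xtwoheadrightarrow{d}\xtwoheadrightarrow{o}$ is precisely the negation of $\delta_{E}(s)=\forall\mu\in\Sigma_O,\forall d\in\Delta:s\not\xtwoheadrightarrow{(d,\mu)}$. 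Consequently, independent progress of $s$ is logically the disjunction $\delta_{S}(s)\lor\neg\delta_{E}(s)$.

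Finally I would conclude by propositional reasoning: $\delta_{S}(s)\lor\neg\delta_{E}(s)$ is equivalent to $\delta_{E}(s)\Rightarrow\delta_{S}(s)$, which is exactly the claim. Equivalently, phrased as a direct implication, I assume $\delta_{E}(s)$; this falsifies the second disjunct of independent progress, so the hypothesis forces the first disjunct to hold, yielding $\delta_{S}(s)$.

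The main obstacle here is not mathematical depth but notational care: the whole argument hinges on correctly reading the timed step $s\xtwoheadrightarrow{(d,o)}$ as the two-phase move $s\xtwoheadrightarrow{d}\xtwoheadrightarrow{o}$, and on selecting the \emph{strong} variant of independent progress so that all three statements refer to the same transition relation $\twoheadrightarrow$. As a sanity check on the necessity of the hypothesis, the single-state counterexample with $I(\ell)=x\leq 0$ mentioned just before the lemma is precisely a state where independent progress \emph{fails} — it can neither delay by any $d>0$ nor ever produce an output — which confirms that this assumption is exactly what rules out the obstruction to $\delta_{E}(s)\Rightarrow\delta_{S}(s)$.
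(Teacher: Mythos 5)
Your proof is correct and takes essentially the same route as the paper's: both unfold (strong) independent progress into the disjunction $\delta_{S}(s)\lor\neg\delta_{E}(s)$, with the paper merely packaging the final propositional step as a proof by contradiction (assume $\delta_{E}(s)$ and $\neg\delta_{S}(s)$, contradict progress) where you read it off directly. Your explicit insistence on the \emph{strong} variant is in fact more careful than the paper's own write-up, which quotes the progress property with weak arrows ($\Twoheadrightarrow$) and then refutes it with strong statements --- a mismatch your reading (correctly, since the lemma fails under weak progress when delays are only enabled after a $\tau$-step) avoids.
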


\begin{proof}
We prove Lemma~\ref{lemma:safe-enforced-quiescence} by contradiction.
Let $(S,s_0,\Sigma_I,\Sigma_O,\twoheadrightarrow)$ be a TIOLTS.
If $s\in S$ enables independent progress, then $\forall d\in\Delta:s\xTwoheadrightarrow{d}$ or $\exists d\in\Delta,\exists o\in\Sigma_O:s\xtwoheadrightarrow{d}\xTwoheadrightarrow{o}$ (\cf{}~Definition~\ref{def:inp-en-ind-prog}).
Assume that it holds that $\delta_E(s)$ but not $\delta_S(s)$.
Then, $\forall o\in\Sigma_O,\forall d\in\Delta:s\not\xtwoheadrightarrow{(d,o)}$ and $\exists d\in\Delta:s\not\xtwoheadrightarrow{d}$.
However, this contradicts the assumption that $s$ enables independent progress.
Hence, it holds that $\delta_E(s)\Rightarrow\delta_S(s)$ if $s$ enables independent progress.
\end{proof}

We add $\delta_S$ and $\delta_E$ to $\out$ to distinguish both types of quiescence and
adjust $\tstraces$, accordingly.
This allows us to define \emph{live timed ioco} ($\mathltiocolts$) by extending
$\mathtiocodelta$ with outputs $\delta_S$ and $\delta_E$.
Hence, $\mathltiocolts$ not only guarantees output behaviors of implementation $\textit{im}$
to be \emph{safe} (\ie{} allowed to occur within the observed time interval as specified in $\textit{sp}$), but 
also requires $\textit{im}$ to be \emph{live} (\ie{} to progress with an output 
within a time interval if enforced by $\textit{sp}$).

\begin{definition}\label{def:ltioco-tiolts}
    Let $\textit{im}$, $\textit{sp}$ be TIOLTS over $\Sigma=\Sigma_I\cup\Sigma_O$, $s,s'\in S$, $S'\subseteq S$, 
    $\xi\in(\Delta\times(\Sigma\cup\{\delta_\gamma\}))$.
    \begin{itemize}
        \item $s\after\xi:=\{s'\mid s\xTwoheadrightarrow{\xi}s'\}$,
        \item $\outl(s):=\{(d,o)\mid d\in\Delta,o\in\Sigma_O,s\xTwoheadrightarrow{(d,o)}\}\cup\{\delta_\gamma\mid\delta_\gamma(s)\}$,
        \item $\outl(S'):=\bigcup_{s\in S'}\outl(s)$,
        \item $\tstracesl(s):=\{\xi\mid s\xTwoheadrightarrow{\xi}\}$, where $s'\xtwoheadrightarrow{\delta_\gamma}s'$ iff $\delta_\gamma(s')$, and
    \item $\textit{im}\,\mathltiocolts\,\textit{sp}:\Leftrightarrow
    \forall\xi\in\tstracesl(\textit{sp}):\outl(\textit{im}\after\xi)\subseteq\outl(\textit{sp}\after\xi)$
    \end{itemize}
\end{definition}

Obviously, using two different quiescence symbols does not increase complexity of conformance checking
as compared to $\mathtiocodelta$ in~Def.~\ref{def:tioco-auxiliary}.

\begin{example}
State $\angles{\ell_1,x=0}$ in Fig.~\ref{fig:tioco-flaws} is quiescent, whereas
$\angles{\ell_2,x=0}$ and $\langle\ell_3,\allowbreak x=0\rangle$ are not.
With our improved definition, $\angles{\ell_1,x=0}$ 
is enforced-quiescent, whereas
$\angles{\ell_2,x=0}$ and $\angles{\ell_3,x=0}$ are safe-quiescent.
States $\angles{\ell_4,x=0}$ and $\angles{\ell_5,x=0}$ are neither safe-quiescent 
nor enforced-quiescent due to the invariants of $\ell_4$ and $\ell_5$.
Hence, $\mathltiocolts$ is now able to reject $\ell_3$ as incorrect implementation of $\ell_4$ 
and $\ell_5$ as both $\ell_4$ and $\ell_5$ are not quiescent, whereas $\ell_3$ is safe-quiescent.
For all other cases, $\mathltiocolts$ yields the same results as listed in Fig.~\ref{fig:tioco-flaws}.
\end{example}

\begin{lemma}\label{lemma:ltioco-preorder}
$\mathltiocolts$ is a preorder on the set of input-enabled TIOLTS.
\end{lemma}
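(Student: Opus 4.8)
The plan is to establish that $\mathltiocolts$ is \emph{reflexive} and \emph{transitive}, the two defining properties of a preorder. Reflexivity is immediate: for any input-enabled TIOLTS $\textit{im}$ and every $\xi\in\tstracesl(\textit{im})$ we trivially have $\outl(\textit{im}\after\xi)\subseteq\outl(\textit{im}\after\xi)$, so $\textit{im}\,\mathltiocolts\,\textit{im}$ holds; no assumption on $\textit{im}$ beyond well-definedness is needed here.

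For transitivity, assume $\textit{im}_1\,\mathltiocolts\,\textit{im}_2$ and $\textit{im}_2\,\mathltiocolts\,\textit{im}_3$ for input-enabled TIOLTS, and fix $\xi\in\tstracesl(\textit{im}_3)$; I must show $\outl(\textit{im}_1\after\xi)\subseteq\outl(\textit{im}_3\after\xi)$. If $\outl(\textit{im}_1\after\xi)=\emptyset$ the inclusion is vacuous, so assume it is non-empty, which forces $\xi\in\tstracesl(\textit{im}_1)$. The crux is then the auxiliary claim $\tstracesl(\textit{im}_1)\cap\tstracesl(\textit{im}_3)\subseteq\tstracesl(\textit{im}_2)$: once $\xi\in\tstracesl(\textit{im}_2)$ is known, I simply chain the two hypotheses, $\outl(\textit{im}_1\after\xi)\subseteq\outl(\textit{im}_2\after\xi)\subseteq\outl(\textit{im}_3\after\xi)$, where the first inclusion is licensed by $\textit{im}_1\,\mathltiocolts\,\textit{im}_2$ precisely because $\xi$ is a suspension trace of $\textit{im}_2$, and the second by $\textit{im}_2\,\mathltiocolts\,\textit{im}_3$ because $\xi\in\tstracesl(\textit{im}_3)$.

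I would prove the auxiliary claim by induction on the length of $\xi$. The empty trace lies in every $\tstracesl$. For the step, let $\xi'\cdot(d,a)$ be a prefix of $\xi$ with $\xi'\in\tstracesl(\textit{im}_2)$ (induction hypothesis) and $\xi'\cdot(d,a)\in\tstracesl(\textit{im}_1)\cap\tstracesl(\textit{im}_3)$. If $a\in\Sigma_O\cup\{\delta_S,\delta_E\}$, then delaying $d$ and observing $a$ witnesses the corresponding element of $\outl$ of the states of $\textit{im}_1$ reached along $\xi'$; since $\xi'\in\tstracesl(\textit{im}_2)$, the relation $\textit{im}_1\,\mathltiocolts\,\textit{im}_2$ transfers this observation into $\outl(\textit{im}_2\after\xi')$, hence $\xi'\cdot(d,a)\in\tstracesl(\textit{im}_2)$. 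If instead $a\in\Sigma_I$, I invoke weak input-enabledness of $\textit{im}_2$ (Def.~\ref{def:inp-en-ind-prog}) to reinstate the input after the elapsed delay. Throughout, the additivity, reflexivity and determinacy of time steps (Proposition~\ref{proposition:tiolts-properties}) are used to split and recombine the delay $d$ and to keep the reached state sets well-behaved.

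The main obstacle is the input case together with the bookkeeping of the delay prefixes. Weak input-enabledness only guarantees that an input is accepted \emph{immediately} in a state ($s\xTwoheadrightarrow{i}$), whereas a trace element $(d,i)$ additionally requires that $\textit{im}_2$ can let the delay $d$ elapse before accepting $i$; justifying this delay in $\textit{im}_2$ is the delicate point, and I expect to discharge it by combining independent progress (Def.~\ref{def:inp-en-ind-prog}) with time-additivity and, where a state delays into quiescence, with the safe/enforced-quiescence analysis underlying Lemma~\ref{lemma:safe-enforced-quiescence}. A secondary care point is that $\delta_S$ and $\delta_E$ must be handled as two independent opaque labels in $\outl$, so that the inductive transfer never silently conflates safe and enforced quiescence --- the very distinction the relation is designed to preserve.
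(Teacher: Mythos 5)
Your proposal takes essentially the same route as the paper's own proof: reflexivity is immediate from Definition~\ref{def:ltioco-tiolts}, and transitivity is reduced to showing that any trace of the outer TIOLTS along which the inclusion could fail must already lie in $\tstracesl$ of the middle TIOLTS, with output/quiescence steps forced into $\tstracesl(q)$ via $\outl(p\after\xi_1)\subseteq\outl(q\after\xi_1)$ and input steps handled by input-enabledness of $q$ --- the paper phrases this as a first-point-of-divergence decomposition $\xi=\xi_1\cdot(d,a)\cdot\xi_2$ where you use induction on prefixes, which is the same argument. If anything you are more explicit than the paper, which silently elides the delay-matching obstacle you flag (weak input-enabledness only yields $s\xTwoheadrightarrow{i}$ immediately, not after the delay $d$ in a trace element $(d,i)$, and delayed quiescence observations $(d,\delta_\gamma)$ are likewise not directly reflected in $\outl(q\after\xi_1)$) by simply asserting that $a\in\Sigma_I$ contradicts input-enabledness of $q$.
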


\begin{proof}
Let $p,q,r$ be input-enabled TIOLTS being derived from TIOA, and\linebreak $p\mathltiocolts q$ and $q\mathltiocolts r$.
It holds by Definition~\ref{def:ltioco-tiolts} that $p\mathltiocolts p$, \ie{} $\mathltiocolts$ is reflexive.

It remains to be shown that $p\mathltiocolts r$, \ie{} $\forall\xi\in\tstracesl(r):\outl(p\after\xi)\allowbreak\subseteq\outl(r\after\xi)$.
Let $\xi\in\tstracesl(r)$.
If $\xi\in\tstracesl(q)$, then $\forall\xi\in\tstracesl:\outl(p\after\xi)\subseteq\outl(r\after\xi)$ follows from transitivity of $\subseteq$.

The case of $\xi\notin\tstracesl(q)$ remains, \ie{} the case where behaviors are not present in $q$ such that $\xi\in\tstracesl(p)$, $\xi\notin\tstracesl(q)$, and $\xi\in\tstracesl(r)$.
We prove this part by contradiction.
Suppose, $\forall\xi\in\tstracesl(r):\outl(p\after\xi)\allowbreak\subseteq\outl(r\after\xi)$ fails for a $\xi\in\tstracesl(r)\setminus\tstracesl(q)$, \ie{} such a $\xi$ exists.
Trace $\xi$ decomposes into $\xi_1\cdot(d,a)\cdot\xi_2$ where $\xi_1\in\tstracesl(q)$ but $\xi_1\cdot(d,a)\notin\tstracesl(q)$.
Since $\outl(p\after\xi_1)\subseteq\outl(q\after\xi_1)$, $a\notin\Sigma_O\cup\{\delta_S,\delta_E\}$.
Additionally, $a\in\Sigma_I$ contradicts input-enabledness of $q$.
Thus, $\xi\in\tstracesl(q)$ and $\mathltiocolts$ is transitive.

From reflexivity and transitivity of $\mathltiocolts$ it follows that $\mathltiocolts$ is indeed a preorder on input-enabled TIOLTS.
\end{proof}

Furthermore, we can prove that $\mathltiocolts$ is \emph{sound} (\ie{} strictly more discriminating) 
with respect to $\mathtiocodelta$ in the 
sense that $\textit{im}\mathltiocolts \textit{sp}\Rightarrow \textit{im}\mathtiocodelta \textit{sp}$ (but not vice versa).

\begin{theorem}[Correctness of $\mathltiocolts$]\label{theorem:soundness-ltioco}
    Let $\textit{im}$ and $\textit{sp}$ be TIOLTS
    with $\textit{im}$ being input-enabled and enabling independent progress.
    \begin{itemize}
      \item $\textit{im}\,\mathltiocolts\,\textit{sp}\,\Rightarrow\,\textit{im}\,\mathtiocodelta\,\textit{sp}$
      \item $\textit{im}\,\mathltiocolts\,\textit{sp}\,\Rightarrow\,\textit{im}\,\mathtiocodelay\,\textit{sp}$
      \item $\textit{im}\,\mathtiocodelta\,\textit{sp}\,\Rightarrow\,\textit{im}\,\mathltiocolts\,\textit{sp}$ does, in general, \emph{not} hold.
    \end{itemize}
    Additionally, let $\textit{sp}$ also be input-enabled.
    \begin{itemize}
    \item $\textit{im}\,\mathltiocolts\,\textit{sp}\,\Rightarrow\,
          \traces^w(\textit{im})\,\subseteq\,\traces^w(\textit{sp})$
   \end{itemize}
\end{theorem}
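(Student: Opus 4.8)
The statement bundles four implications, which I would prove separately while sharing a single delay-reconstruction argument between the second and fourth. For the first implication, $\textit{im}\,\mathltiocolts\,\textit{sp}\Rightarrow\textit{im}\,\mathtiocodelta\,\textit{sp}$, the key observation is that the quiescence predicate $\delta$ of Def.~\ref{def:tioco-auxiliary} is syntactically the predicate $\delta_E$ of Def.~\ref{def:ltioco-quiescence}, so the only extra information recorded by $\mathltiocolts$ is the safe-quiescence label $\delta_S$. I would first note that every $\xi\in\tstraces(\textit{sp})$, reading its $\delta$-symbols as $\delta_E$, is also a $\tstracesl(\textit{sp})$-trace (the $\delta_E$-self-loops sit on exactly the same states and all remaining steps are shared), and that $\textit{im}\after\xi$ coincides in both theories since every quiescence symbol is a self-loop and hence never moves a state. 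I would then introduce the projection $\pi$ deleting $\delta_S$ and renaming $\delta_E$ to $\delta$, observe that $\out=\pi\circ\outl$ holds state-wise, and conclude by monotonicity of $\pi$ that $\outl(\textit{im}\after\xi)\subseteq\outl(\textit{sp}\after\xi)$ entails $\out(\textit{im}\after\xi)\subseteq\out(\textit{sp}\after\xi)$. This part needs neither input-enabledness nor independent progress.

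For $\textit{im}\,\mathltiocolts\,\textit{sp}\Rightarrow\textit{im}\,\mathtiocodelay\,\textit{sp}$, I would fix $\xi\in\ttraces(\textit{sp})$; being $\delta$-free it is a $\tstracesl(\textit{sp})$-trace, so $\mathltiocolts$ yields $\outl(\textit{im}\after\xi)\subseteq\outl(\textit{sp}\after\xi)$. The output component of $\outdelay$ is routine: an immediate output $o$ of some state in $\textit{im}\after\xi$ is exactly the pair $(0,o)\in\outl(\textit{im}\after\xi)$, which transfers to $\textit{sp}$ and back. The genuine difficulty—and the main obstacle of the whole theorem—is the \emph{delay} component, because $\mathltiocolts$ never observes a bare delay. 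Here I would exploit independent progress of $\textit{im}$ (Def.~\ref{def:inp-en-ind-prog}): if some $s'\in\textit{im}\after\xi$ admits $s'\xTwoheadrightarrow{d}s''$, then $s''$ either (i) can delay arbitrarily, so $s'$ is safe-quiescent by composing delays and $\delta_S\in\outl(\textit{im}\after\xi)$ forces a safe-quiescent $\textit{sp}$-state that in particular can wait $d$; or (ii) can eventually emit some $(e,o)$, whence $(d+e,o)\in\outl(\textit{im}\after\xi)$ transfers to $\textit{sp}$, giving an $\textit{sp}$-state that can wait $d+e$, and truncating this delay to $d$ via Time Add (Prop.~\ref{proposition:tiolts-properties}, transported to weak runs by splitting the strong-delay segments of the weak run) yields $d\in\outdelay(\textit{sp}\after\xi)$.

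For strictness, i.e.\ that $\textit{im}\,\mathtiocodelta\,\textit{sp}\Rightarrow\textit{im}\,\mathltiocolts\,\textit{sp}$ fails in general, I would simply reuse Fig.~\ref{fig:tioco-flaws}: $\llbracket\mathcal{A}_3\rrbracket_S\,\mathtiocodelta\,\llbracket\mathcal{A}_4\rrbracket_S$ holds, yet $\angles{\ell_3,x=0}$ is safe-quiescent while $\angles{\ell_4,x=0}$ is not (its invariant enforces $o$ before $k$), so $\delta_S\in\outl(\llbracket\mathcal{A}_3\rrbracket_S\after\epsilon)\setminus\outl(\llbracket\mathcal{A}_4\rrbracket_S\after\epsilon)$ refutes $\mathltiocolts$. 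For the final weak-trace inclusion (now with $\textit{sp}$ also input-enabled) I would argue by a shortest counterexample: pick a minimal $\omega=\rho\cdot\alpha\in\traces^w(\textit{im})\setminus\traces^w(\textit{sp})$ with $\rho\in\traces^w(\textit{sp})$, and split on $\alpha$. If $\alpha\in\Sigma_I$, weak input-enabledness of $\textit{sp}$ immediately gives $\rho\alpha\in\traces^w(\textit{sp})$, a contradiction—this is the one case that actually requires the added hypothesis. Otherwise I would recode $\rho$ as a suspension trace $\xi\in\tstracesl(\textit{sp})$ by summing consecutive delays and inserting $0$-delays before actions, isolating the trailing delay $d_{m+1}$; if $\alpha\in\Sigma_O$ then $(d_{m+1},\alpha)\in\outl(\textit{im}\after\xi)$, and if $\alpha\in\Delta$ then $\textit{im}$ can wait $d_{m+1}+\alpha$ after $\xi$.

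In the output case $\mathltiocolts$ transports $(d_{m+1},\alpha)$ to $\textit{sp}$, and in the delay case I would invoke verbatim the independent-progress delay-reconstruction of the second paragraph to transport the waitability of $d_{m+1}+\alpha$ to $\textit{sp}$; either way $\rho\alpha\in\traces^w(\textit{sp})$, contradicting minimality. Thus the single hard ingredient, reused in two of the four parts, is recovering pure-delay conformance from a relation that records delays only in combination with a following output or with quiescence; this is precisely what forces the use of independent progress of $\textit{im}$ and of a weak-semantics form of Time Add, and I expect the bulk of the careful bookkeeping (weak-run delay splitting, the $(d,a)$-normalisation of weak traces) to live there.
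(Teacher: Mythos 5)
Your proposal is correct and, for the first and third bullets, essentially matches the paper: the paper also argues that $\delta_E$ coincides with the classical $\delta$ so that deleting $\delta_S$ from the $\outl$ sets (your projection $\pi$, stated state-wise) preserves inclusion, and it uses the very same counterexample $\llbracket\mathcal{A}_3\rrbracket_S$ vs.\ $\llbracket\mathcal{A}_4\rrbracket_S$ from Fig.~\ref{fig:tioco-flaws} for strictness. Where you genuinely diverge is in the second and fourth bullets. For $\mathltiocolts\Rightarrow\mathtiocodelay$ the paper argues informally at the TIOA level---an implementation can only restrict delays by adding an invariant (losing $\delta_S$), and widening an invariant either violates independent progress or makes some $(d,o)$ pair observable---whereas you make exactly this trichotomy precise at the TIOLTS level: every bare delay $d$ of \textit{im} is transported to \textit{sp} either through $\delta_S$ or through a pair $(d+e,o)$ whose delay is then truncated via a weak form of Time Add. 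For the trace-inclusion bullet, the paper simply asserts that $\mathltiocolts$ ``results in trace inclusion for $\tstracesl$'' and then does bookkeeping about removing quiescence symbols, justifying the input-enabledness hypothesis only indirectly via the preorder property (Lemma~\ref{lemma:ltioco-preorder}); you instead give the standard \textbf{ioco}-style minimal-counterexample argument, splitting on whether the escaping symbol is an input (the one place input-enabledness of \textit{sp} is used), an output, or a delay, and reusing your delay-reconstruction for the last case. Your route buys rigor precisely where the paper hand-waves---the passage from out-set inclusion to trace inclusion is not immediate, since $\mathltiocolts$ quantifies only over specification traces---while the paper's version buys brevity and invariant-level intuition. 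One caveat you inherit rather than introduce: in your case (i) you conclude $\delta_S(s')$ from arbitrary \emph{weak} delays, but Def.~\ref{def:ltioco-quiescence} states $\delta_S$ with the strong relation $\twoheadrightarrow$, so $\tau$-interleavings could in principle break the inference; the paper's own proof of Lemma~\ref{lemma:safe-enforced-quiescence} makes the identical weak/strong conflation, so this is a shared imprecision of the framework, not a new gap in your argument.
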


\begin{proof}
Let \emph{im} and \emph{sp} be TIOLTS with \emph{im} being input-enabled and enabling independent progress.

\medskip

First, we prove $\textit{im}\,\mathltiocolts\,\textit{sp}\,\Rightarrow\,\textit{im}\,\mathtiocodelta\,\textit{sp}$.
The only difference between $\mathtiocodelta$ and $\mathltiocolts$ is $\delta_S$ because $\delta_E(\angles{\ell,u})\Leftrightarrow\delta(\angles{\ell,u})$, \ie{} enforced quiescence, coincides with classical quiescence.
When we remove the output symbol $\delta_S$ from the $\outl$ sets it holds that $\outl(\textit{im}\after\xi)\subseteq\outl(\textit{sp}\after\xi)\Rightarrow(\outl(\textit{im}\after\xi)\setminus\{\delta_S\})\subseteq(\outl(\textit{sp}\after\xi)\setminus\{\delta_S\})$.
Hence, $\textit{im}\mathltiocolts \textit{sp}\Rightarrow \textit{im}\mathtiocodelta\textit{sp}$.

\medskip

Next, we prove $\textit{im}\,\mathltiocolts\,\textit{sp}\,\Rightarrow\,\textit{im}\,\mathtiocodelay\,\textit{sp}$.
The difference between $\mathtiocodelay$ and $\mathltiocolts$ is the out-set, containing output and delays for $\mathtiocodelay$ and pairs of outputs and delays and quiescence ($\delta_E$ and $\delta_S$) for $\mathltiocolts$.
Note, that we do not have to consider the differences in the $\tstraces{}$ and $\tstracesl$, respectively, as these differences are already captured by the out-sets.

When only considering delays, $\textit{im}\,\mathtiocodelay\,\textit{sp}$ holds if \textit{im} does not allow more delays than \textit{sp}.
This behavior is captured by $\mathltiocolts$ as \textit{im} may only introduce an invariant, resulting in \textit{im} not having output symbol $\delta_S$, preserving the subset relation.
Making the invariant stricter is already captured as outputs are always pairs of delays and actions.
Furthermore, $\textit{im}\,{\not\mathtiocodelay}\,\textit{sp}$ if \textit{im} allows for more delays than \textit{sp}.
This is also captured by $\mathltiocolts$ as allowing more delays means removing the invariant of the corresponding location, thus introducing output symbol $\delta_S$.
Only changing the invariant to a greater value either violates independent progress (if no output action or $\tau$ is possible after this delay) or also allows outputs after these greater delays (which is covered through outputs in $\mathltiocolts$ being pairs of delays and actions).

\medskip

Next, we show that $\textit{im}\,\mathtiocodelta\,\textit{sp}\,\Rightarrow\,\textit{im}\,\mathltiocolts\,\textit{sp}$ does, in general, \emph{not} hold.
Figure~\ref{fig:tioco-flaws} provides an example.
Here, $\llbracket\mathcal{A}_3\rrbracket_S\mathtiocodelta\llbracket\mathcal{A}_4\rrbracket_S$ holds, but $\llbracket\mathcal{A}_3\rrbracket_S\mathltiocolts\allowbreak\llbracket\mathcal{A}_4\rrbracket_S$ does not hold.

\medskip

Now, let \emph{im} and \emph{sp} be TIOLTS with \emph{im} and \emph{sp} being input-enabled and \emph{im} enabling independent progress.
Finally, we prove that $\textit{im}\,\mathltiocolts\,\textit{sp}\,\Rightarrow\, \traces^w(\textit{im})\,\allowbreak\subseteq\,\traces^w(\textit{sp})$.
From $\textit{im}\,\mathltiocolts\,\textit{sp}$ it follows by definition that $\forall\xi\in\tstracesl(\textit{sp}):\outl(\textit{im}\after\xi)\subseteq\outl(\textit{sp}\allowbreak\after\xi)$, resulting in trace inclusion for $\tstracesl$.
For $\textit{im}\,\mathltiocolts\,\textit{sp}\,\Rightarrow\, \traces^w(\textit{im})\,\subseteq\,\traces^w(\textit{sp})$, we have to show that removing quiescence symbols $\delta_E$ and $\delta_S$ from all $\tstracesl$ (resulting in $\traces^w$) preserves the subset relation.
In $\tstracesl$, $\delta_E$ and $\delta_S$ are added with self-loops to the respective TIOLTS states.
Therefore, we remove all $\xi\in\tstracesl(\textit{sp})$ and $\xi\in\tstracesl(\textit{im})$ containing $\delta_E$ and/or $\delta_S$.
By $\Xi_\textit{sp}^\delta$ and $\Xi_\textit{im}^\delta$ we denote the sets of traces containing $\delta_E$ and $\delta_S$.
Due to $\textit{im}\,\mathltiocolts\,\textit{sp}$ it holds that $\Xi_\textit{im}^\delta\subseteq\Xi_\textit{sp}^\delta$.
Hence, removing all $\xi\in\Xi_\textit{im}^\delta$ from $\tstracesl(\textit{sp})$ and $\tstracesl(\textit{im})$ does not effect the subset relation.
Furthermore removing all $\xi\in\Xi_\textit{sp}^\delta\setminus\Xi_\textit{im}^\delta$ also does not effect the subset relation as $\forall\xi\in\Xi_\textit{sp}^\delta\setminus\Xi_\textit{im}^\delta:(\xi\in\tstracesl(\textit{sp})\land\xi\notin\tstracesl(\textit{im}))$.
Finally, we have to require input-enabledness for \emph{sp} such that $\mathltiocolts$ is a preorder (\cf{}~Lemma~\ref{lemma:ltioco-preorder}).
We have this requirement as $\traces^w(\textit{im})\,\subseteq\,\traces^w(\textit{sp})$ also is a preorder.
Therefore, $\textit{im}\,\mathltiocolts\,\textit{sp}\,\Rightarrow\, \traces^w(\textit{im})\,\subseteq\,\traces^w(\textit{sp})$.
\end{proof}

Note, that $\textit{im}\,\mathtiocodelay\,\textit{sp}\,\Rightarrow\,\textit{im}\,\mathltiocolts\,\textit{sp}$ 
does not hold as $\mathtiocodelay$ has no notion of quiescence, and 
$\textit{im}\,\mathltiocolts\,\textit{sp}\,\Rightarrow\,\traces^s(\textit{im})\,\subseteq\,\traces^s(\textit{sp})$ 
does not hold as $\mathltiocolts$ is limited to observable (weak) steps of timed (suspension) traces.

\subsection{Compositionality}

For investigating compositionality
of $\mathltiocolts$, we first define parallel composition
of TIOA also at the level of TIOLTS.

\begin{definition}[TIOLTS Composition]\label{def:tiolts-compo-epsi}
Let $(S_j,s_{0_j},\Sigma_{I_{j}},\Sigma_{O_{j}},\twoheadrightarrow_j)$ with $j\in\{1,2\}$ 
be TIOLTS of composable TIOA.
The \emph{parallel product} is a TIOLTS
$(S_1 \times S_2, (s_{0_1},s_{0_2}), \Sigma_{I_{1\parallel 2}}, \Sigma_{O_{1\parallel 2}},\twoheadrightarrow_{1 \parallel 2})$, 
where $\Sigma_{I_{1\parallel 2}}$ and $\Sigma_{O_{1\parallel 2}}$ are defined according to
Def.~\ref{def:tioa-compo-epsi} and 
$\twoheadrightarrow_{1 \parallel 2}$ is the least relation satisfying the rules:
\begin{tabbing}
  (5)\quad \= $(s_1,s_2)\xtwoheadrightarrow{\sigma}_{1 \parallel 2} (s_1',s_2)$ \= if \= \kill
  (1) \> $(s_1,s_2)\xtwoheadrightarrow{\sigma}_{1 \parallel 2} (s_1',s_2)$ \> if \> $s_1\xtwoheadrightarrow{\sigma}_1 s_1'$, $\sigma\in(\Sigma_1\setminus\Sigma_2)\cup\{\tau\}$, \\
  (2) \> $(s_1,s_2)\xtwoheadrightarrow{\sigma}_{1 \parallel 2} (s_1,s_2')$ \> if \> $s_2'\xtwoheadrightarrow{\sigma}_2 s_2'$, $\sigma\in(\Sigma_2\setminus\Sigma_1)\cup\{\tau\}$, \\
  (3) \> $(s_1,s_2)\xtwoheadrightarrow{\tau}_{1 \parallel 2} (s_1',s_2')$ \> if \> $s_1\xtwoheadrightarrow{\sigma}_1 s_1'$, $s_2'\xtwoheadrightarrow{\sigma}_2 s_2'$ and $\sigma\in(\Sigma_1\cap\Sigma_2)$, and \\
  (4) \> $(s_1,s_2)\xtwoheadrightarrow{d}_{1 \parallel 2} (s_1',s_2')$ \> if \> $s_1\xtwoheadrightarrow{d}_1 s_1'$, $s_2'\xtwoheadrightarrow{d}_2 s_2'$ and $d\in\Delta$. \\
\end{tabbing}
\end{definition}
Rules~(1) and~(2) preserve transitions of non-shared (\ie{} unsynchronized) 
actions from both TIOLTS, whereas rule~(3) introduces silent transitions
for input/output action pairs synchronized between both TIOLTS.
Rule~(4) preserves (synchronous) delay steps of length $d$ enabled by both TIOLTS.
Rule~(5) handles inputs leading to the failure state 
in one of the components, where our notion of composable TIOA ensures that 
those actions leading to the failure state are not shared.
We conclude the following properties.

\begin{lemma}\label{lemma:tiolts-composition-properties}
Let $\mathcal{A}_{1}$ and $\mathcal{A}_{2}$ be composable TIOA.
\begin{enumerate}
  \item
  $\traces(\llbracket \mathcal{A}_{1\parallel 2}\rrbracket_{S}) = 
  \traces(\llbracket \mathcal{A}_{1}\rrbracket_{S}\parallel\llbracket\mathcal{A}_{2}\rrbracket_{S})$, and
  \item if $\mathcal{A}_{1}$ and $\mathcal{A}_{2}$ are \emph{input-enabled} and enable \emph{independent progress}, then this also holds for $\mathcal{A}_{1\parallel 2}$.
\end{enumerate}
\end{lemma}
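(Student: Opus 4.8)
The plan is to prove both parts through a single structural correspondence between the two TIOLTS and then to transport the component properties across it. Because $\mathcal{C}_1\cap\mathcal{C}_2=\emptyset$, every clock valuation $u$ over $\mathcal{C}_{1\parallel 2}=\mathcal{C}_1\cup\mathcal{C}_2$ splits uniquely into its restrictions $u_1=u|_{\mathcal{C}_1}$ and $u_2=u|_{\mathcal{C}_2}$, and conversely any pair $(u_1,u_2)$ glues back to a unique $u$. This induces a bijection
\[
\varphi:\angles{(\ell_1,\ell_2),u}\mapsto(\angles{\ell_1,u_1},\angles{\ell_2,u_2})
\]
between the states of $\llbracket\mathcal{A}_{1\parallel 2}\rrbracket_S$ (built by composing first, then unfolding) and those of $\llbracket\mathcal{A}_1\rrbracket_S\parallel\llbracket\mathcal{A}_2\rrbracket_S$ (built by unfolding first, then composing), mapping initial state to initial state.

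For Part~(1) I would show that $\varphi$ is a strong bisimulation by checking each transition kind. The decisive observations are that every guard, reset and invariant of $\mathcal{A}_j$ mentions only clocks in $\mathcal{C}_j$, so $u\in g_j\Leftrightarrow u_j\in g_j$, $([R_j\mapsto 0]u)|_{\mathcal{C}_j}=[R_j\mapsto 0]u_j$ and $([R_j\mapsto 0]u)|_{\mathcal{C}_{3-j}}=u_{3-j}$; and that $I_{1\parallel 2}(\ell_1,\ell_2)=I_1(\ell_1)\wedge I_2(\ell_2)$ decomposes over the disjoint clock sets. Matching then proceeds case by case: synchronous delays correspond to rule~(4); non-shared actions (including component $\tau$) of $\mathcal{A}_j$ correspond to rules~(1)/(2); and a shared action $\sigma\in\Sigma_1\cap\Sigma_2$, which becomes a $\tau$-switch in $\mathcal{A}_{1\parallel 2}$, corresponds exactly to the synchronising rule~(3). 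The only asymmetry is that a composed switch re-checks the \emph{conjoined} target invariant whereas, e.g., rule~(1) only re-checks the moving component's invariant; this is harmless because on any reachable state the idle component already satisfies its invariant and, not moving, keeps doing so. Equal strong (and hence weak) trace sets follow from $\varphi$ being a bisimulation.

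For Part~(2), bisimulation-invariance lets me verify the two properties on $\llbracket\mathcal{A}_1\rrbracket_S\parallel\llbracket\mathcal{A}_2\rrbracket_S$. Input-enabledness is the easy half: any $i\in\Sigma_{I_{1\parallel 2}}$ is \emph{non-shared}, since composability together with $\Sigma_{I_1}\cap\Sigma_{I_2}=\emptyset$ forces $i\in\Sigma_j\setminus\Sigma_{3-j}$ for exactly one $j$. Hence the component's weak input move $s_j\xTwoheadrightarrow{i}_j$ lifts transition-by-transition through rule~(1)/(2) with the partner idle, yielding $(s_1,s_2)\xTwoheadrightarrow{i}$; no delay, and thus no synchronisation, is involved.

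Independent progress is where I expect the real work. I would distinguish, for each component state, whether it can delay arbitrarily (case D) or must eventually emit an output (case O), and combine the four possibilities. Case $(D,D)$ is handled by synchronous delays (rule~(4)), interleaving the finitely many clock-resetting $\tau$-steps each component may need, where strong convergence forbids these from blocking time. If a progressing component's output is non-shared it lifts via rule~(1)/(2) as a genuine product output after a synchronous delay, provided the partner can delay that long; if the partner cannot, it is itself in case~O and acts first. The genuinely hard case is when a component's only way to progress is an output $o\in\Sigma_{O_j}$ that is \emph{shared} with an input of the partner: by rule~(3) this is absorbed into an internal $\tau$ and does \emph{not} count as a product output. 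Here I would invoke input-enabledness of the partner (so the synchronisation is always available) to realise $o$ as a $\tau$-step, and then use strong convergence of the product to bound chains of such internal synchronisations and argue that a state is eventually reached that either delays arbitrarily or emits some $o\in\Sigma_{O_{1\parallel 2}}$. The main obstacle is thus managing the interaction between \emph{synchronous} delay steps (rule~(4) forces both components to let the same amount of time pass) and the conversion of shared outputs into $\tau$: care is needed so that a component forced to act by its invariant before its partner can delay does not defeat progress, which is precisely where the partner's own independent-progress guarantee must be brought to bear.
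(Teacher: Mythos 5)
Your proposal is correct and takes essentially the same approach as the paper: your explicit valuation-splitting bijection and bisimulation argument is a careful rendering of the paper's (terser) claim that $\llbracket\mathcal{A}_{1\parallel 2}\rrbracket_S$ and $\llbracket\mathcal{A}_1\rrbracket_S\parallel\llbracket\mathcal{A}_2\rrbracket_S$ have identical states and transitions, and your part~(2) --- non-shared inputs lifted via rules~(1)/(2), shared outputs absorbed as $\tau$ through the partner's input-enabledness, component progress combined case-wise --- mirrors the paper's argument while spelling out details (the idle component's invariant holding on reachable states, the interaction of synchronous delays with forced actions) that the paper glosses over. One caveat your sketch shares with the paper: strong convergence of the product does not follow from strong convergence of the components (two components can exchange shared actions in an infinite $\tau$ ping-pong), so your appeal to it rests, as does the paper's proof, on the blanket assumption that all TIOA under consideration are strongly convergent.
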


\begin{proof}
Let $\mathcal{A}_{1}$ and $\mathcal{A}_{2}$ be composable TIOA.
We prove~(1) and~(2) separately.

\medskip

(1) Let $P=(S_P,s_0^P,\Sigma_I^P,\Sigma_O^P,\twoheadrightarrow_P)=\llbracket \mathcal{A}_{1\parallel 2}\rrbracket_{S}$ and $Q=(S_Q,s_0^Q,\Sigma_I^Q,\Sigma_O^Q,\allowbreak\twoheadrightarrow_Q\nobreak)=\llbracket \mathcal{A}_{1}\rrbracket_{S}\parallel\llbracket\mathcal{A}_{2}\rrbracket_{S}$. In order to prove $\traces(\llbracket \mathcal{A}_{1\parallel 2}\rrbracket_{S}) = \traces(\llbracket \mathcal{A}_{1}\rrbracket_{S}\parallel\llbracket\mathcal{A}_{2}\rrbracket_{S})$, we show the following:
\begin{itemize}
    \item $S_P=S_Q$.
    When deriving a TIOLTS from a TIOA, the set of states can only be reduced by location invariants.
    When composing two locations, their invariants are, by definition, conjugated.
    Therefore, the set of states of $\llbracket \mathcal{A}_{1\parallel 2}\rrbracket_{S}$ is determined by conjunction of location invariants of both $\mathcal{A}_1$ and $\mathcal{A}_2$.
    Furthermore, delay transitions only remain after composition if both $\llbracket \mathcal{A}_{1}\rrbracket_{S}$ and $\llbracket \mathcal{A}_{2}\rrbracket_{S}$ are able to perform a delay (\cf{}~Rule~(4) of Definition~\ref{def:tiolts-compo-epsi}).
    As these delay transitions are a result of location invariants, it holds that $S_P=S_Q$.
    \item $\Sigma_I^P=\Sigma_I^Q$ and $\Sigma_O^P=\Sigma_O^Q$.
    These equalities hold by definition (\cf{}~Definitions~\ref{def:tioa-compo-epsi} and~\ref{def:tiolts-compo-epsi}).
    \item $\twoheadrightarrow_P=\twoheadrightarrow_Q$.
    Similar to $S_P=S_Q$, TIOLTS transitions are dependent on on clock constraints, and additionally they depend on TIOA switches.
    As with $S_P=S_Q$, clock constraints are, by definition, conjugated.
    Hence, the set of transitions of $\llbracket \mathcal{A}_{1\parallel 2}\rrbracket_{S}$ is determined by conjunction of clock constraints of both $\mathcal{A}_1$ and $\mathcal{A}_2$, and, as with $S_P=S_Q$, it holds that $\twoheadrightarrow_P=\twoheadrightarrow_Q$.
\end{itemize}
Hence, it holds that $\traces(\llbracket \mathcal{A}_{1\parallel 2}\rrbracket_{S}) = \traces(\llbracket \mathcal{A}_{1}\rrbracket_{S}\parallel\llbracket\mathcal{A}_{2}\rrbracket_{S})$ as the sets of states, actions, and transitions are equal.

\medskip

(2) Let $\Sigma_I^1$ be the inputs of $\mathcal{A}_1$, $\Sigma_I^2$ be the inputs of $\mathcal{A}_2$, and $\Sigma_I^{1\parallel 2}=(\Sigma_I^1\cup\Sigma_I^2)\setminus(\Sigma_O^1\cup\Sigma_O^2)$ be the inputs of $\mathcal{A}_{1\parallel 2}$.
Rules~(1) and~(2) of TIOA composition ensure that inputs $\Sigma_I^1\setminus(\Sigma_I^2\cup\Sigma_O^2)$ and $\Sigma_I^2\setminus(\Sigma_I^1\cup\Sigma_O^1)$ are preserved, respectively.
As $\Sigma_I^1\cap\Sigma_I^2=\emptyset$, $\Sigma_I^{1\parallel 2}$ does not contain further inputs.
Therefore, input-enabledness is preserved under TIOA composition.
Furthermore, assume that TIOA composition does not preserve independent progress.
Hence, there is a restriction in $\mathcal{A}_{2}$ such that it holds for a state $s$ of $\llbracket\mathcal{A}_{1}\rrbracket_S$ (or vice versa) that $\exists d\in\Delta:s\not\xTwoheadrightarrow{d}$ or $\nexists d\in\Delta:s\xtwoheadrightarrow{d}\xTwoheadrightarrow{o}$ for an $o\in\Sigma_O^1$.
However, if such a restriction would exist, then the corresponding state in $\llbracket\mathcal{A}_{2}\rrbracket_S$ would enable independent progress as $\mathcal{A}_{2}$ enables independent progress.
Futhermore, output $o\in\Sigma_O^1$ (or $\Sigma_O^2$, respectively) does not obstruct independent progress if $o\in\Sigma_I^1\cap\Sigma_O^2$ or $o\in\Sigma_I^2\cap\Sigma_O^1$ is a common action as the matching input is always available due to input-enabledness.
The result is an internal action $\tau$ not obstructing independent progress.
Hence, TIOA composition preserves independent progress.
\end{proof}

Property~(1) ensures parallel composition on TIOA and TIOLTS
to commute with respect to timed-traces semantics such that 
a composed specification can be effectively built from 
the (finite) TIOA representations of its components.
Property~(2) ensures that input-enabled and independent-progress enabling TIOA are closed under parallel composition.
We now prove compositionality
of $\mathltiocolts$.

\begin{theorem}\label{theorem:compositionality}
Let $\textit{im}_{1}$, $\textit{im}_{2}$, $\textit{sp}_{1}$, and $\textit{sp}_{2}$ 
be \emph{input-enabled} and \emph{independent progress} enabling
TIOLTS of composable TIOA.
Then it holds that
$$(\textit{im}_{1}\,\mathltiocolts\,\textit{sp}_{1})\,\wedge\,
  (\textit{im}_{2}\,\mathltiocolts\,\textit{sp}_{2})\,\Rightarrow\,
  (\textit{im}_{1}\,\parallel\,\textit{im}_{2})\,\mathltiocolts\,
  (\textit{sp}_{1}\,\parallel\,\textit{sp}_{2}).$$
\end{theorem}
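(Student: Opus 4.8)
The plan is to unfold Definition~\ref{def:ltioco-tiolts}: I fix an arbitrary suspension trace $\xi\in\tstracesl(\textit{sp}_{1}\parallel\textit{sp}_{2})$ and an arbitrary observation $x\in\outl((\textit{im}_{1}\parallel\textit{im}_{2})\after\xi)$, and reduce the goal to $x\in\outl((\textit{sp}_{1}\parallel\textit{sp}_{2})\after\xi)$. By Lemma~\ref{lemma:tiolts-composition-properties}(2), both products are again input-enabled and enable independent progress, so the side conditions needed for the component conformances and for Lemma~\ref{lemma:safe-enforced-quiescence} stay available for the composed systems.

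The technical core is a decomposition lemma for the weak runs of a parallel product, read off Definition~\ref{def:tiolts-compo-epsi}: along a run realizing $\xi$, delays are shared (Rule~(4)), a non-shared action is contributed by a single component (Rules~(1)/(2)), and every silent synchronization (Rule~(3)) stems from a shared action $a\in\Sigma_{1}\cap\Sigma_{2}$ that is \emph{visible} in each component even though it is hidden as $\tau$ in $\xi$. Consequently, reaching a product state $(t_{1},t_{2})\in(\textit{im}_{1}\parallel\textit{im}_{2})\after\xi$ amounts to $t_{1}\in\textit{im}_{1}\after\xi_{1}$ and $t_{2}\in\textit{im}_{2}\after\xi_{2}$ for a matched pair of component suspension traces $\xi_{1},\xi_{2}$ that, unlike mere syntactic projections of $\xi$, additionally carry the synchronized actions; conversely, matched component runs recompose into a product run realizing $\xi$. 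An induction on $\xi$ then shows that the matched traces driving the \emph{implementation} run may be chosen as suspension traces of the \emph{specification} components, i.e.\ $\xi_{j}\in\tstracesl(\textit{sp}_{j})$. This is where the two hypotheses interact: at a synchronization on a shared $a$ (say $a\in\Sigma_{O_{1}}\cap\Sigma_{I_{2}}$), the timed output carrying $a$ lies in $\outl(\textit{im}_{1}\after\xi_{1})\subseteq\outl(\textit{sp}_{1}\after\xi_{1})$, so $\textit{sp}_{1}$ offers it, while the matching input is always present in $\textit{sp}_{2}$ by input-enabledness; delays and quiescence self-loops transfer the same way, so the invariant ``each processed component prefix is a specification trace'' is preserved.

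With decomposition and matching in place I finish by a case split on $x$. If $x=(d,o)$, then $o$ is non-shared, say $o\in\Sigma_{O_{1}}$, and the product observation factors into a shared delay $d$ followed by $o$ emitted by component~$1$; hence $(d,o)\in\outl(\textit{im}_{1}\after\xi_{1})\subseteq\outl(\textit{sp}_{1}\after\xi_{1})$, and recomposing the witnessing specification run of component~$1$ with a matched specification run of component~$2$ that still admits the shared delay $d$ (its availability transferred through the $\tstracesl$-inclusion and independent progress) yields $(d,o)\in\outl((\textit{sp}_{1}\parallel\textit{sp}_{2})\after\xi)$. If $x=\delta_{S}$, I use that by Rule~(4) a product state can delay arbitrarily iff both constituents can, whence $\delta_{S}((t_{1},t_{2}))\Leftrightarrow\delta_{S}(t_{1})\wedge\delta_{S}(t_{2})$; transferring each conjunct through $\textit{im}_{j}\mathltiocolts\textit{sp}_{j}$ and recomposing the two safe-quiescent specification states gives $\delta_{S}\in\outl((\textit{sp}_{1}\parallel\textit{sp}_{2})\after\xi)$.

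The case $x=\delta_{E}$ is the one I expect to be the main obstacle, because enforced quiescence of a product does \emph{not} reduce to enforced quiescence of its components: a constituent may still emit \emph{shared} outputs that the product hides as $\tau$, so $\delta_{E}((t_{1},t_{2}))$ only witnesses the absence of any \emph{non-shared} output throughout the closure of $(t_{1},t_{2})$ under delays and silent synchronizations. To handle it I would induct over this synchronization closure, which is finite by strong convergence: at the leaves no shared output is offered either (input-enabledness would otherwise force a further synchronization), so both constituents are themselves enforced-quiescent and $\delta_{E}$ transfers component-wise through $\textit{im}_{j}\mathltiocolts\textit{sp}_{j}$; at an inner synchronization the matching step above produces the corresponding silent move in the specification product and lets the induction descend, so that some $\xi$-reachable specification state is enforced-quiescent and $\delta_{E}\in\outl((\textit{sp}_{1}\parallel\textit{sp}_{2})\after\xi)$. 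Combining the three cases establishes the out-set inclusion for every $\xi$, hence the compositionality claim.
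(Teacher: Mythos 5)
Your overall architecture --- decompose weak runs of the product into matched component suspension traces (carrying the synchronized actions that the product hides as $\tau$), show by induction along $\xi$ that these matched traces can be chosen in $\tstracesl(\textit{sp}_j)$ (shared outputs transfer through $\outl(\textit{im}_1\after\xi_1)\subseteq\outl(\textit{sp}_1\after\xi_1)$, the matching inputs through input-enabledness of the other specification component), and then recompose --- is a genuinely different organization from the paper's proof. The paper instead proceeds in two stages: it first replaces Rule~(3) of Definition~\ref{def:tiolts-compo-epsi} by an adjusted variant in which the synchronized action remains visible as an output, proves the inclusion for that operator (where quiescence does split componentwise, which is the content of its $\delta_E$/$\delta_S$ discussion), and only afterwards hides the synchronized outputs, with an admittedly informal transfer argument (``we dictate how $\textit{im}_1\parallel\textit{im}_2$ should behave after $\xi$''). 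Your matched component traces are in substance the traces of that adjusted product, but your explicit decomposition/recomposition lemma and trace-transfer induction is more self-contained, and your $(d,o)$ and $\delta_S$ cases are sound modulo bookkeeping you yourself flag (hidden synchronizations interleaved within the delay $d$, and the delay-availability argument for the passive component via $\delta_S$/independent progress and time additivity); in particular $\delta_S((t_1,t_2))\Leftrightarrow\delta_S(t_1)\wedge\delta_S(t_2)$ is indeed immediate from Rule~(4), since Definition~\ref{def:ltioco-quiescence} uses strong delay transitions.

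The $\delta_E$ case, however, has a genuine gap --- in fact three. First, the ``finite by strong convergence'' claim is wrong: strong convergence only excludes infinite \emph{consecutive} $\tau$-sequences, and synchronizations separated by positive delays can recur forever, so your induction over the closure is not well-founded. Second, even where it terminates, a leaf is reached from $(\textit{sp}_1\parallel\textit{sp}_2)\after\xi$ only after \emph{additional delays}; Definition~\ref{def:ltioco-tiolts} requires $\delta_E$ to hold at a state in the after-set of $\xi$ itself, enforced quiescence does not propagate backwards across delays, and your single recomposed branch cannot certify that \emph{all} behaviors of the top specification state avoid outputs. Third, and decisively, Definition~\ref{def:ltioco-quiescence} defines $\delta_E(s)$ via strong timed steps $s\not\xtwoheadrightarrow{(d,\mu)}$, whereas you evaluate product quiescence through its closure under delays and hidden synchronizations, i.e.\ weakly --- and under that weak reading the theorem is false. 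Take a shared $a\in\Sigma_{O_1}\cap\Sigma_{I_2}$ and a non-shared $o\in\Sigma_{O_2}$, so the product's output alphabet is $\{o\}$; let $\textit{sp}_1$ offer $!a$ with guard $x\geq5$ from an invariant-free location (silent afterwards), let $\textit{im}_1$ force $!a$ at exactly $x=5$ (invariant $x\leq5$, guard $x=5$); let $\textit{sp}_2$, on $?a$, branch nondeterministically into a dead location or into a location with invariant $y\leq1$ that emits $!o$, and let $\textit{im}_2$ always take the dead branch. Both component conformances hold (e.g.\ $\outl(\textit{im}_1\after\epsilon)=\{(5,a)\}$ is permitted, and $\textit{im}_2$'s $\delta_E,\delta_S$ after $a$ are covered by the dead branch of $\textit{sp}_2$). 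In $\textit{im}_1\parallel\textit{im}_2$ no output is ever weakly reachable from the initial state, so your weak reading yields $\delta_E$ after $\epsilon$; but the only state of $\textit{sp}_1\parallel\textit{sp}_2$ after $\epsilon$ is its initial state (no $\tau$ is enabled before time $5$), which weakly reaches $o$, so on the weak reading $\delta_E$ is missing on the specification side and the inclusion fails. Under the paper's strong reading both initial product states count as enforced-quiescent (reaching $o$ needs an intermediate $\tau$), so no violation arises --- but then your closure induction is analyzing the wrong notion, and the case instead needs an argument about strong steps of the product, essentially the paper's detour through the non-hiding composition.
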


\begin{proof}
Let $\textit{im}_{1}$ and $\textit{im}_{2}$ as well as $\textit{sp}_{1}$ and $\textit{sp}_{2}$ be \emph{input-enabled} and \emph{independent progress} enabling TIOLTS of composable TIOA.
Additionally, it holds that $\textit{im}_{1}\mathltiocolts\textit{sp}_{1}$ and $\textit{im}_{2}\mathltiocolts\textit{sp}_{2}$.
In order to prove $\textit{im}_{1}\parallel\textit{im}_{2}\mathltiocolts\textit{sp}_{1}\parallel\textit{sp}_{2}$, we have to prove that $\forall\xi\in\tstracesl(\textit{sp}_1\parallel\textit{sp}_{2}):\outl(\textit{im}_1\parallel\textit{im}_{2}\after\xi)\subseteq\outl(\textit{sp}_1\parallel\textit{sp}_{2}\after\xi)$.
To prove this we first assume that Rule~(3) of TIOLTS composition (\cf{}~Definition~\ref{def:tiolts-compo-epsi}) results in becoming the respective output action instead of an internal action $\tau$ and prove $\textit{im}_{1}\parallel\textit{im}_{2}\mathltiocolts\textit{sp}_{1}\parallel\textit{sp}_{2}$ for this adjusted composition operator.
Afterwards, we \emph{hide} the output actions being generated by adjusted Rule~(3) by replacing them with internal actions $\tau$ such that we prove Theorem~\ref{theorem:compositionality} for TIOLTS composition as defined in Definition~\ref{def:tiolts-compo-epsi}.

Let $\mu\in\outl(\textit{im}_1\parallel\textit{im}_{2}\after\xi)$ such that, \Wlog{}, $\mu\in\outl(\textit{im}_1\after\xi)$ with $\outl(\textit{im}_1\parallel\textit{im}_{2}\after\xi)\subseteq\Sigma_O$.
Then, $\mu\in\outl(\textit{sp}_1\parallel\textit{sp}_{2}\after\xi)$ as otherwise $\textit{im}_1$ would have more output behavior than $\textit{sp}_1$ such that $\textit{im}_{1}\mathltiocolts\textit{sp}_{1}$ would not hold.
Next, assume that $\delta_E\in\outl(\textit{im}_1\after\xi)$.
Then, it also holds that $\delta_E\in\outl(\textit{im}_1\parallel\textit{im}_{2}\after\xi)$ if $\nexists\mu\in\Sigma_O:\mu\in\outl(\textit{im}_2\after\xi)$.
Otherwise, it also holds that $\delta_E\notin\outl(\textit{im}_2\after\xi)$ such that $\delta_E\notin\outl(\textit{sp}_1\parallel\textit{sp}_{2}\after\xi)$.
The reasoning for $\delta_S$ is analogous.
Hence, $\textit{im}_{1}\parallel\textit{im}_{2}\mathltiocolts\textit{sp}_{1}\parallel\textit{sp}_{2}$ with the adjusted Rule~(3) as described above.

Next, we replace the adjusted Rule~(3) by the original one to prove Theorem~\ref{theorem:compositionality}.
Here, $\textit{im}'$ and $\textit{sp}'$ describe the adjusted variants of $\textit{im}_1\parallel\textit{im}_{2}$ and $\textit{sp}_1\parallel\textit{sp}_{2}$ where outputs of Rule~(3) are hidden, \ie{} replaced by $\tau$.
Additionally, let $\xi'\in\tstracesl(\textit{sp}')$ denote the tstrace corresponding to $\xi\in\tstracesl(\textit{sp}_1\parallel\textit{sp}_{2})$.
Assume, Theorem~\ref{theorem:compositionality} does not hold.
Then, there exists a $\mu\neq\tau$ such that $\mu\in\outl(\textit{im}_1\parallel\textit{im}_{2}\after\xi)$, $\mu\in\outl(\textit{sp}_1\parallel\textit{sp}_{2}\after\xi)$, $\mu\in\outl(\textit{im}'\after\xi')$, and $\mu\notin\outl(\textit{sp}'\after\xi')$.
However, $\textit{sp}_1\parallel\textit{sp}_{2}$ is input-enabled because $\textit{sp}_1$ and $\textit{sp}_2$ are input-enabled (\cf{}~Lemma~\ref{lemma:tiolts-composition-properties}).
Therefore, we impose that $\textit{im}_1\parallel\textit{im}_{2}$ implements every input $i\in\Sigma_I$ for every state $s\in S$ such that $s_{\textit{sp}_1\parallel\textit{sp}_{2}}\xrightarrow{i}s_{\textit{sp}_1\parallel\textit{sp}_{2}}'\Rightarrow s_{\textit{im}_1\parallel\textit{im}_{2}}\xrightarrow{i}s_{\textit{im}_1\parallel\textit{im}_{2}}'$.
This means, we dictate how $\textit{im}_1\parallel\textit{im}_{2}$ should behave after $\xi$.
Therefore, $\textit{im}'$ cannot have any additional output behaviors after $\xi'$ not being in $(\textit{sp}'\after\xi')$.
Hence, $\textit{im}_{1}\parallel\textit{im}_{2}\mathltiocolts\textit{sp}_{1}\parallel\textit{sp}_{2}$ and Theorem~\ref{theorem:compositionality} is correct.
\end{proof}

\subsection{Symbolic Live Timed Input/Output Conformance Testing}

Concerning the practical intractability of infinitely branching TIOLTS, 
\emph{zone graphs} have been proposed as \emph{finite}
representation of TA semantics~\cite{Dill1990}.
A zone graph $(\mathcal{Z},\rightsquigarrow)$
of TIOA $\mathcal{A}$ consists of a \emph{transition relation} $\rightsquigarrow$
on a set $\mathcal{Z}$ of \emph{symbolic states} by means of pairs $\angles{\ell,\varphi}$ of 
locations $\ell\in L$ and \emph{zones} $\varphi\in\mathcal{B(C)}$.
A zone represents a (potentially infinite) maximum set $D$ of clock valuations satisfying 
clock constraint $\varphi$, where we assume zones in \emph{canonical form} 
by requiring $D$ to be \emph{closed under entailment} (\ie{} $\varphi$ cannot be strengthened
without changing $D$).
We may write $D$ as a synonym for $\varphi$ and 
use the notations $D^\uparrow=\{u+d\mid u\in D,d\in\mathbb{T}\}$ and $R(D)=\{[R\mapsto 0]u\mid u\in D\}$.
Although zone graphs $(\mathcal{Z},\rightsquigarrow)$ are, again, not necessarily finite, 
an equivalent, finite zone graph $(\mathcal{Z},\rightsquigarrow_k)$ can be obtained
with $\rightsquigarrow_k$,
(1) by constructing an equivalent \emph{diagonal-free} TA only
containing atomic clock constraints of the form $x\sim r$~\cite{Berard1998}, 
and (2) by constructing for this TA a \emph{$k$-bounded} 
zone graph with all zones being bound 
by a maximum global \emph{clock ceiling} $k$ 
using $k$-normalization~\cite{Rokicki1994,Pettersson1999}.
Here, the basic idea of $k$-normalization is to set the value of $k$ to the greatest constant appearing in any clock constraint in the TA.
Then, we replace each difference constraint by a difference greater than $k$ 
(\ie{} a difference constraint stating that the difference is greater than $k$).

As zone-graph constructions from TA ignore 
switch labels, they are likewise applicable to TIOA.
However, in order to lift $\mathltiocolts$ to
zone graphs of specifications $\mathcal{A}_{\textit{sp}}$ and implementations
$\mathcal{A}_{\textit{im}}$ given as TIOA, actions related to
TIOA switches (including $\tau$) must be also included as labels for the respective 
transitions between the corresponding symbolic states.
In contrast, symbolic transitions not corresponding to switches of the TIOA
are labeled with the special void symbol $\epsilon\notin\Sigma$.
We define input/output-labeled zone graph (IOLZG) representations of TIOA as follows.

\begin{definition}[IOLZG]\label{def:labeled-zone-graph}
    An \emph{IOLZG} of TIOA $\mathcal{A}=(L,\ell_0,\Sigma_I,\Sigma_O,\rightarrow,I)$ is a tuple $(\mathcal{Z},z_0,\Sigma_I,\Sigma_O,\rightsquigarrow\nobreak)$, where
    \begin{itemize}
        \item $\mathcal{Z}=L\times\mathcal{B(C)}$ is a set of \emph{symbolic states} with \emph{initial state} $z_0=\angles{\ell_{0},D_{0}}\in \mathcal{Z}$,
        \item $\Sigma_{\tau}=\Sigma_I\cup\Sigma_O\cup\{\tau\}$ is a set of \emph{labels}, and
        \item ${\rightsquigarrow}\subseteq\mathcal{Z}\times(\Sigma_{\tau}\cup \{\epsilon\})\times\mathcal{Z}$ 
        is a \emph{symbolic transition relation} being the least relation satisfying the following rules:
    \begin{itemize}
        \item $\angles{\ell,D}\xrightsquigarrow{\epsilon}\angles{\ell,D^\uparrow\land I(\ell)}$ and
        \item $\angles{\ell,D}\xrightsquigarrow{\sigma}\angles{\ell',R(D\land g)\land I(\ell')}$ if $\ell\xrightarrow{g,\sigma,R}\ell'$.
    \end{itemize}
    \end{itemize}
    Let $\angles{\ell,D}\in\mathcal{Z}$ be a symbolic state.
    We further use the following notations.
    \begin{itemize}
      \item $\angles{\ell,D}\xrightsquigarrow{d}\angles{\ell',R(D\land g)\land I(\ell')}$ if $\exists u\in D:u\in g\land([R\mapsto0](u+d))\in R(D\land g)\land I(\ell')$,
      \item $\angles{\ell,D}\xrightsquigarrow{(d,\sigma)}$ if $\exists\angles{\ell'',D''}\in\mathcal{Z}:\angles{\ell,D}\xrightsquigarrow{d}\angles{\ell'',D''}\xrightsquigarrow{\sigma}\angles{\ell',D'}$,
      \item $\angles{\ell,D}\xrightsquigarrow{(d_1,\sigma_1)\cdots(d_n,\sigma_n)}$ if $\exists\angles{\ell_1,D_1},\ldots,\angles{\ell_n,D_n}\in\mathcal{Z}:\angles{\ell,D}\xrightsquigarrow{(d_1,\sigma_1)}\angles{\ell_1,D_1}\allowbreak\xrightsquigarrow{(d_2,\sigma_2)}\ldots\xrightsquigarrow{(d_n,\sigma_n)}\angles{\ell_n,D_n}$ with $n\in\mathbb{N}_0$,
      \item $\angles{\ell,D}$ is \emph{input-enabled} 
      iff $\forall i\in \Sigma_I,\forall d\in D:\exists\angles{\ell',D'}\in\mathcal{Z}:\angles{\ell,D}\xrightsquigarrow{(d,i)}\angles{\ell',D'}\land d\in D'$, and
      \item $\angles{\ell,D}$ enables \emph{independent progress} 
      iff $(\forall d\in\Delta:\angles{\ell,D}\xrightarrow{d})$ 
      or $\exists d\in\Delta,\exists o\in\Sigma_O:\angles{\ell,D}\xrightarrow{d}\xrightarrow{o}$.
    \end{itemize}
\end{definition}
An IOLZG is \emph{input-enabled} and enables \emph{independent progress} if all its state do.
Again, we obtain weak steps by replacing $\rightsquigarrow$ by $\Rightsquigarrow$, where
in both relations, $\epsilon$-steps are treated as unobservable.
By $\llbracket\mathcal{A}\rrbracket_{\mathcal{Z}}^{x}$, $x\in\{w,s\}$, we refer
to the weak/strong IOLZG of TIOA $\mathcal{A}$, again, by possibly omitting $x$.
In fact, $k$-normalization also applies to IOLZG, where switch labels may cause
duplications of transitions but, however, do not affect the set of symbolic states.
Hence, the correctness claim for zone graphs of 
TA (\cf{}~\cite{Bengtsson2004}) also holds for IOLZG of TIOA.

\begin{theorem}\label{theorem:labeled-zone-graph-operational-semantics}
    Let $s_0 = \angles{\ell_0,u_0}$ be the initial state 
    of TIOLTS $\llbracket \mathcal{A}\rrbracket_{S}$ of TIOA $\mathcal{A}$
    and $\angles{\ell_,\{u_0\}}$ be the initial state of IOLZG $\llbracket\mathcal{A}\rrbracket_\mathcal{Z}$.
    \begin{itemize}
        \item (Soundness) $\angles{\ell_0,\{u_0\}}\xrightsquigarrow{\xi}_k\angles{\ell,D}$ 
        implies $\angles{\ell_0,u_0}\xtwoheadrightarrow{\xi}\angles{\ell,u}$ for all $u\in D$.
        \item (Completeness) $\angles{\ell_0,u_0}\xtwoheadrightarrow{\xi}\angles{l,u}$ 
        implies $\angles{\ell_0,\{u_0\}}\xrightsquigarrow{\xi}_k\angles{\ell,D}$ such that $u\in D$.
    \end{itemize}
\end{theorem}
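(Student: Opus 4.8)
The plan is to prove both directions by induction on the length of the trace $\xi$, exploiting the fact that the $k$-normalized zone graph agrees with the un-normalized zone graph on reachability (this is the standard correctness of $k$-normalization cited from~\cite{Bengtsson2004}, which I would invoke to reduce everything to the plain relation $\rightsquigarrow$). The essential bridge between the two semantics is the invariant that a symbolic state $\angles{\ell,D}$ reached by $\xi$ collects \emph{exactly} the clock valuations $u$ for which $\angles{\ell_0,u_0}\xtwoheadrightarrow{\xi}\angles{\ell,u}$ holds; once this invariant is established, both soundness and completeness fall out as its two inclusions. First I would fix notation: since a trace alternates delay components and action components, I would treat a single combined step $\angles{\ell,D}\xrightsquigarrow{(d,\sigma)}\angles{\ell',D'}$ as the induction unit, matching the definition of timed steps $s\xtwoheadrightarrow{d}\xtwoheadrightarrow{\sigma}s''$ in the TIOLTS.

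For the \emph{base case} ($\xi=\epsilon$), the IOLZG initial state is $\angles{\ell_0,\{u_0\}}$ and the only valuation is $u_0$ itself, so both claims hold trivially. For the \emph{inductive step}, suppose the invariant holds after $\xi$, giving a symbolic state $\angles{\ell,D}$ with $D=\{u\mid \angles{\ell_0,u_0}\xtwoheadrightarrow{\xi}\angles{\ell,u}\}$. For soundness, I would take a symbolic transition $\angles{\ell,D}\xrightsquigarrow{(d,\sigma)}\angles{\ell',D'}$ and unfold its definition: the delay rule $\angles{\ell,D}\xrightsquigarrow{\epsilon}\angles{\ell,D^\uparrow\land I(\ell)}$ followed by the switch rule $\angles{\ell,D}\xrightsquigarrow{\sigma}\angles{\ell',R(D\land g)\land I(\ell')}$ via some switch $\ell\xrightarrow{g,\sigma,R}\ell'$. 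For any $u'\in D'$, I would reconstruct the concrete source valuation $u\in D$ and delay $d$ witnessing the symbolic step, then apply the induction hypothesis to get a concrete run reaching $\angles{\ell,u}$, and finally check that $u\in g$, $u+d\in I(\ell)$, and $[R\mapsto 0](u+d)=u'\in I(\ell')$ hold so that the TIOLTS rules of Def.~\ref{def:tiolts-semantics} fire. For completeness I would run this argument backwards: given $\angles{\ell,u}\xtwoheadrightarrow{d}\xtwoheadrightarrow{\sigma}\angles{\ell',u'}$, the same switch and guard satisfaction place $u'$ in $R(D^\uparrow\land I(\ell)\land g)\land I(\ell')=D'$.

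The main obstacle I anticipate is the delay component and its interaction with the entailment-closed (canonical) form of zones. Concretely, the operational delay in the TIOLTS lets $u$ advance by an arbitrary $d$ subject to the invariant holding \emph{throughout}, whereas the symbolic delay is the single future-closure operation $D^\uparrow$ intersected with $I(\ell)$; I must verify that these two notions of "letting time pass" coincide, which relies critically on the assumption (stated in Sect.~\ref{sec:preliminaries}) that invariants are downward-closed, so that $(u+d)\in I(\ell)$ already implies $u+d'\in I(\ell)$ for all intermediate $d'\le d$. I would make this the technical heart of the argument and keep the zone-canonicalization steps (which do not change the represented set $D$) isolated so they do not interfere with the delay reasoning. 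A secondary subtlety is handling the weak versus strong distinction: since the statement is about $\xrightsquigarrow{}$ with $\epsilon$-steps treated as unobservable and the matching TIOLTS relation, I would note that $\tau$- and $\epsilon$-transitions are absorbed identically on both sides and therefore do not alter the valuation-set invariant, letting the same induction cover the weak case by the stated substitution of $\Rightsquigarrow$ for $\rightsquigarrow$.
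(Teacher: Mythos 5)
Your proposal is correct, but it takes a genuinely different route from the paper. The paper's proof is a two-line reduction: it invokes the known correctness of zone graphs and of $k$-normalization from~\cite{Bengtsson2004} wholesale, and merely argues that attaching labels to symbolic transitions (including the void label $\epsilon$ on the pure delay-closure steps, which only apply the operation $D^\uparrow$) leaves the underlying symbolic state space and transition structure untouched, so the cited result carries over verbatim to IOLZG. You instead re-derive the core result from scratch: you use $k$-normalization correctness only to reduce to the un-normalized relation $\rightsquigarrow$, and then establish by induction on $\xi$ the exactness invariant $D=\{u\mid\angles{\ell_0,u_0}\xtwoheadrightarrow{\xi}\angles{\ell,u}\}$, from which soundness and completeness are the two inclusions. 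Your approach buys self-containedness and rigor where the paper offers only a sketch, and your ordering is important for correctness: the exactness invariant holds only for the un-normalized zone graph (normalization enlarges zones above the ceiling $k$), so invoking the cited normalization result first, as you do, is exactly what keeps the induction sound. Two small remarks: first, per Def.~\ref{def:tiolts-semantics} the TIOLTS delay rule checks the invariant only at the endpoints $u$ and $u+d$, so the ``invariant throughout'' concern you raise is already discharged by the rule itself --- downward-closedness of invariants (as assumed in Sect.~\ref{sec:preliminaries}) is what makes this endpoint check equivalent to the intuitive continuous semantics and to the symbolic $D^\uparrow\land I(\ell)$, so your identification of this as the technical heart is apt even if slightly differently motivated; second, your observation that $\tau$- and $\epsilon$-steps are absorbed identically on both sides is the labeled analogue of the paper's one-sentence remark about $\epsilon$, so on that point the two proofs coincide.
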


\begin{proof}
The correctness of this proof directly follows from correctness of $k$\hyp{}normalization~\cite{Bengtsson2004} and the fact that labeled zone graphs connect the same symbolic states through transitions as zone graphs with the only difference being the labels of the labeled zone graphs (\cf{}~\cite{Bengtsson2004} and Definition~\ref{def:labeled-zone-graph}).
Furthermore, adding labels $\epsilon$ to transitions not corresponding to TIOA switches does not obstruct this result as these transitions are only use to apply operation $D^\uparrow$.
\end{proof}

\begin{example}
Figure~\ref{fig:labeled-zone-graph} shows an extract from the ($k$-normalized) 
IOLZG of the TIOA in Fig.~\ref{fig:vending-machine-ta}. 
Here, $k=20$ is the largest constant appearing in all clock constraints
such that every value of clocks $x$ larger than 20 falls into zone $x>20$.
The initial zone restricts all clock values to 0.
Symbolic state $\angles{\text{idle},x\leq20,x=y}$ comprises all TIOLTS states
being in location \emph{idle} as long as $x\leq20$ holds, and, similarly, for 
the symbolic states with location \emph{off}.
On reaching location \emph{as} (\emph{add sugar}), all clocks are reset.
Symbolic state $\angles{\text{as},x\leq10,y\leq20,y-x\geq10}$ thus aggregates all
clock constraints of related TIOLTS runs.
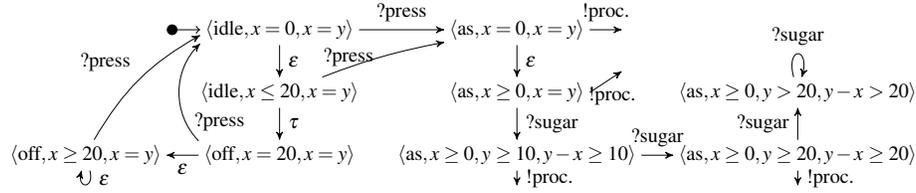
\begin{figure}[tp]
    \centering

\begin{adjustbox}{max width=\textwidth}
\scalebox{\scalefactor}{
\begin{tikzpicture}[node distance=.5]

\node[labeledstate, initial, initial where=left] (idle0) {$\angles{\text{idle},x=0,x=y}$};
\node[labeledstate, below=of idle0] (idle20leq) {$\angles{\text{idle},x\leq20,x=y}$};
\node[labeledstate, below=of idle20leq] (off20) {$\angles{\text{off},x=20,x=y}$};
\node[labeledstate, left=of off20] (off20geq) {$\angles{\text{off},x\geq20,x=y}$};

\node[labeledstate, right=of off20] (as010) {$\angles{\text{as},x\geq0,y\geq10,y-x\geq10}$};
\node[labeledstate, above=of as010] (as0geq) {$\angles{\text{as},x\geq0,x=y}$};
\node[labeledstate, above=of as0geq] (as0) {$\angles{\text{as},x=0,x=y}$};
\node[labeledstate, right=of as010] (as020) {$\angles{\text{as},x\geq0,y\geq20,y-x\geq20}$};
\node[labeledstate, above=of as020] (as020k) {$\angles{\text{as},x\geq0,y>20,y-x>20}$};

\node[labeledstate, right=of as0] (as0right) {};
\node[labeledstate, right=of as0geq, yshift=1em] (as0geqright) {};
\node[labeledstate, below=of as010, yshift=.7em] (as010below) {};
\node[labeledstate, below=of as020, yshift=.7em] (as020below) {};

\draw[live] (idle0) to node[auto] {$\epsilon$} (idle20leq);
\draw[live] (idle20leq) to node[auto] {$\tau$} (off20);
\draw[live, bend left=45] (off20.170) to node[pos=.21, right] {?press} (idle0.190);
\draw[live] (off20) to node[auto] {$\epsilon$} (off20geq);
\draw[live, bend left=15] (off20geq) to node[auto] {?press} (idle0.184);
\draw[live, loop below, looseness=6] (off20geq) to node[pos=.2, right] {$\epsilon$} (off20geq);
\draw[live] (idle0) to node[auto] {?press} (as0);
\draw[live, bend left=5] (idle20leq) to node[pos=.5, left] {?press} (as0);

\draw[live] (as0) to node[auto] {$\epsilon$} (as0geq);
\draw[live] (as0geq) to node[auto] {?sugar} (as010);
\draw[live] (as010) to node[auto] {?sugar} (as020);
\draw[live] (as020) to node[auto] {?sugar} (as020k);
\draw[live, loop above] (as020k) to node[auto] {?sugar} (as020k);
\draw[live] (as010) to node[auto] {!proc.} (as010below);
\draw[live] (as020) to node[auto] {!proc.} (as020below);
\draw[live] (as0) to node[auto] {!proc.} (as0right);
\draw[live] (as0geq.east) to node[pos=.7, below] {!proc.} (as0geqright);

\end{tikzpicture}
}
\end{adjustbox}
    \caption{Example for a $k$-Normalized IOLZG}\label{fig:labeled-zone-graph}
\end{figure}
\end{example}
As all TIOLTS states comprised in a symbolic state share the same 
visible behaviors (up to different clock valuations),
IOLZG can be used as a basis for checking $\mathltiocolts$ between respective TIOA.
In particular, if a zone of a symbolic state is downward-closed, outputs of that 
state are enforced as runs may not starve in that state.
Correspondingly, we can lift all auxiliary definitions of $\mathltiocolts$ from TIOLTS to 
IOLZG (marked by index $\mathcal{Z}$).
For $\outz$, we have to check for a given symbolic state reached by some 
\emph{tstrace} whether it is possible to extend the \emph{tstrace} by an output 
of that symbolic state such that the resulting extended \emph{tstrace} is still a valid \emph{tstrace}.
For instance, assume a simple IOLZG with $\angles{\ell_0,x\geq5}\xrightsquigarrow{!o}\angles{\ell_1,x<3}$:
state $\angles{\ell_0,x\geq5}$ has output $o$ which is only
enabled as long as $x<3$ holds as the state reached by that output is $\angles{\ell_1,x<3}$.
As the set of all valid extensions of \emph{tstraces}
by means of pairs of delays and subsequently enabled output actions
of one symbolic state is, in general, infinite, they do 
not provide a reasonable basis for effectively
checking $\mathltiocolts$ on zone-graph representations of TIOA.
However, a symbolic solution (\ie{} comparing the 
timing constraints for output-action occurrences of symbolic states)
is also not feasible for checking $\mathltiocolts$
due to the (generally) unrelated names of locations and clocks of the
two different TIOA under consideration.
To solve this problem, we instead employ the notion of \emph{spans}~\cite{Guha2012}: 
the span of clock $c$ in zone $D$ denotes the minimum time interval
containing the minimum and maximum valuations of $c$ enabled in $D$.
We use $\infty$ to 
denote upward-open intervals (\ie{} $d < \infty$ for all $d \in \mathbb{T})$.

\begin{definition}[Span]\label{def:span}
Let $D$ be a zone and $c\in C$.
\begin{itemize}
    \item $\emph{span}(c,D)=(\textit{lo},\textit{up})\in\mathbb{T}_C\times(\mathbb{T}_C\cup\{\infty\})$ is the minimal interval \st{} $\forall u\in D: u(c)\geq\textit{lo}\land u(c)\leq\textit{up}$.
  \item $(\textit{lo},\textit{up})\preceq(\textit{lo}',\textit{up}')
  \Leftrightarrow\textit{lo}\geq\textit{lo}'\land\textit{up}\leq\textit{up}'$.
  \item $\emph{span}(D)=(\textit{lo},\textit{up})\Leftrightarrow \forall c\in C:
  (\textit{lo},\textit{up})\preceq\emph{span}(c,D) \wedge \exists c',c''\in C: \emph{span}(c',D)=(lo,up') \wedge\linebreak \emph{span}(c'',D)=(lo'',up)$.
\end{itemize}
\end{definition}
Given a span $\textit{sp}=(\textit{lo},\textit{up})$, we write $d\in \textit{sp}$ for short 
if $d\geq \textit{lo}$ and $d\leq \textit{up}$ hold.
Based on the notion of spans, we are able define
\emph{span traces} $(\textit{sp}_1,\sigma_1),\ldots,(\textit{sp}_n,\allowbreak \sigma_n)$
as sequences of pairs of spans
and action occurrences denoting (maximum) sets
of all valid timed traces 
$(d_1,\sigma_1),\ldots,(d_n,\allowbreak \sigma_n)$
of a given TIOA with equal untimed traces $\sigma_1,\ldots,\sigma_n$
and $d_i \in sp_i$ for $1\leq i\leq n$.

\begin{example}
A span trace of the TIOA in Fig.~\ref{fig:vending-machine-ta}
is, for instance, given as
$\textit{spt}=((20,\infty),\textit{?press}),((0,20),\allowbreak\textit{?press}),((10,\infty),\textit{?sugar})$.
This span trace comprises all timed traces that
first perform the invisible $\tau$-switch leading to location $\emph{off}$
after exactly 20 time units.
The first visible step, performing output action $\textit{?press}$,
then corresponds to the switch leading from location 
$\emph{off}$ back to location $\emph{idle}$ after at least 20 time units
(due to the constraint of the $\tau$-switch).
The second occurrence of output action $\textit{?press}$ corresponding to the switch
leading from location $\emph{idle}$ to location $\emph{add sugar}$ has
to be performed at least 0 and at most 20 time units after the previous step.
Afterwards, for the self-switch of location $\emph{add sugar}$ labeled $\emph{?sugar}$
to be enabled, at least 10 time units must elapse.
\end{example}
Please note that the set of valid timed traces of a given
untimed trace may not be representable by a single
span trace (\eg{} in case of non-deterministic TIOA).
The minimal, yet complete set of span traces
comprising all valid timed traces of a given TIOA $\mathcal{A}$
can be defined with respect to the corresponding 
IOLZG representation of $\mathcal{A}$ as follows.

\begin{definition}[Span Trace]\label{def:span-trace}
Let $\mathcal{A}=(L,\ell_0,\Sigma_I,\Sigma_O,\rightarrow,I)$ be a TIOA 
with IOLZG $(\mathcal{Z},z_0,\Sigma_I,\Sigma_O,\rightsquigarrow)$.
By $\Psi_\mathcal{Z}$ we denote the set of \emph{span traces} 
of $\mathcal{A}$ being the \emph{least set} such that
$(\textit{sp}_1,\sigma_1),\ldots,(\textit{sp}_n,\sigma_n)\in\Psi_\mathcal{Z}\Leftrightarrow 
z_0 \xrightsquigarrow{(d_1,\sigma_1)\cdots(d_n,\sigma_n)}$, where $d_i\in\textit{sp}_i,1\leq i\leq n$.
\end{definition}
We can show that the set of span traces derived from the IOLZG
representation of a TIOA exactly comprises the
set of timed traces of the respective TIOLTS representation of the TIOA.

\begin{lemma}\label{lemma:span-trace-correctness}
Let $\mathcal{A}=(L,\ell_0,\Sigma_I,\Sigma_O,\rightarrow,I)$ be a TIOA
with TIOLTS $(S,s_0,\Sigma_I,\allowbreak\Sigma_O,\allowbreak\twoheadrightarrow)$.
Then it holds that
$(\textit{sp}_1,\sigma_1),\ldots,(\textit{sp}_n,\allowbreak \sigma_n)\in\Psi_\mathcal{Z}\Leftrightarrow
s_0\xtwoheadrightarrow{d_1}\xtwoheadrightarrow{\sigma_1}\ldots\xtwoheadrightarrow{d_n}\xtwoheadrightarrow{\sigma_n}$, 
where $d_i\in\textit{sp}_i,1\leq i\leq n$.
\end{lemma}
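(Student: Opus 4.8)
The plan is to establish the biconditional in Lemma~\ref{lemma:span-trace-correctness} by chaining together two results already available in the excerpt: the soundness/completeness correspondence between IOLZG and TIOLTS (Theorem~\ref{theorem:labeled-zone-graph-operational-semantics}) and the definition of span traces (Def.~\ref{def:span-trace}). The key observation is that Def.~\ref{def:span-trace} already phrases membership in $\Psi_\mathcal{Z}$ in terms of the symbolic run $z_0\xrightsquigarrow{(d_1,\sigma_1)\cdots(d_n,\sigma_n)}$ with $d_i\in\textit{sp}_i$, so what remains is to show that such a symbolic run exists (for witnesses $d_i$ in the respective spans) if and only if the concrete timed run $s_0\xtwoheadrightarrow{d_1}\xtwoheadrightarrow{\sigma_1}\cdots\xtwoheadrightarrow{d_n}\xtwoheadrightarrow{\sigma_n}$ exists in the TIOLTS. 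I would prove both directions by induction on the length $n$ of the trace.

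\medskip

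\textbf{Forward direction ($\Rightarrow$).} Assuming $(\textit{sp}_1,\sigma_1),\ldots,(\textit{sp}_n,\sigma_n)\in\Psi_\mathcal{Z}$, by Def.~\ref{def:span-trace} there is a symbolic path $z_0\xrightsquigarrow{(d_1,\sigma_1)\cdots(d_n,\sigma_n)}$ with each $d_i\in\textit{sp}_i$. Applying the Soundness part of Theorem~\ref{theorem:labeled-zone-graph-operational-semantics} to the timed word $\xi=(d_1,\sigma_1)\cdots(d_n,\sigma_n)$, every symbolic step $\angles{\ell,D}\xrightsquigarrow{(d,\sigma)}\angles{\ell',D'}$ lifts to a concrete timed step $\angles{\ell,u}\xtwoheadrightarrow{d}\xtwoheadrightarrow{\sigma}\angles{\ell',u'}$ for each $u\in D$, which is precisely the two-phase delay-then-action step $s\xtwoheadrightarrow{d}\xtwoheadrightarrow{\sigma}s''$ introduced in the text. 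Composing these gives the desired TIOLTS run. For the base case $n=0$ the run is the trivial run at $s_0=\angles{\ell_0,u_0}$ with $u_0\in\{u_0\}$.

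\medskip

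\textbf{Backward direction ($\Leftarrow$).} Assuming the concrete run $s_0\xtwoheadrightarrow{d_1}\xtwoheadrightarrow{\sigma_1}\cdots\xtwoheadrightarrow{\sigma_n}$ exists, I would invoke the Completeness part of Theorem~\ref{theorem:labeled-zone-graph-operational-semantics}: each concrete step $\angles{\ell,u}\xtwoheadrightarrow{d}\xtwoheadrightarrow{\sigma}\angles{\ell',u'}$ is witnessed by a symbolic step $\angles{\ell,D}\xrightsquigarrow{(d,\sigma)}\angles{\ell',D'}$ with $u\in D$ and $u'\in D'$, so the full concrete run induces a symbolic run $z_0\xrightsquigarrow{(d_1,\sigma_1)\cdots(d_n,\sigma_n)}$. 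By Def.~\ref{def:span} the span $\textit{sp}_i$ attached to the $i$-th step is the minimal interval covering the clock valuations realizable at that symbolic state, so any delay $d_i$ that is actually firable there satisfies $d_i\in\textit{sp}_i$; hence the span-trace membership condition of Def.~\ref{def:span-trace} is met and $(\textit{sp}_1,\sigma_1),\ldots,(\textit{sp}_n,\sigma_n)\in\Psi_\mathcal{Z}$.

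\medskip

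\textbf{The main obstacle} I anticipate is not the inductive bookkeeping but the interface between the single-label semantics of Theorem~\ref{theorem:labeled-zone-graph-operational-semantics} and the paired \emph{timed step} notation $\xtwoheadrightarrow{d}\xtwoheadrightarrow{\sigma}$ used in the lemma: I must argue that a delay-then-action pair $(d_i,\sigma_i)$ in the symbolic world corresponds exactly to the intermediate $\epsilon$/delay transition followed by a switch, using the IOLZG delay-step definition $\angles{\ell,D}\xrightsquigarrow{(d,\sigma)}$ from Def.~\ref{def:labeled-zone-graph}, and that the $D^\uparrow$-and-invariant construction there matches the Time-Add and delay rules of the TIOLTS (Prop.~\ref{proposition:tiolts-properties}). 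The other delicate point is confirming that the span $\textit{sp}_i$ is exactly the set of firable delays at the reached symbolic state — neither over- nor under-approximating — which follows from the minimality clause in Def.~\ref{def:span} together with the canonical (entailment-closed) form assumed for zones, but deserves an explicit remark in the proof.
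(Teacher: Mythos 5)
Your proposal is correct and takes essentially the same route as the paper's own proof: both directions are reduced via Def.~\ref{def:span-trace} to the existence of a symbolic run $z_0\xrightsquigarrow{(d_1,\sigma_1)\cdots(d_n,\sigma_n)}$ with $d_i\in\textit{sp}_i$, and then transferred to and from the concrete TIOLTS run using the delay/timed-step notations of Def.~\ref{def:labeled-zone-graph} together with Theorem~\ref{theorem:labeled-zone-graph-operational-semantics}. Your version merely spells out (via the induction on $n$ and the remarks on the interface between single-label and paired timed steps, and on span minimality) the per-step bookkeeping that the paper compresses into ``it directly follows.''
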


\begin{proof}
Let $\mathcal{A}=(L,\ell_0,\Sigma_I,\Sigma_O,\rightarrow,I)$ be a TIOA with TIOLTS $(S,s_0,\Sigma_I,\Sigma_O,\twoheadrightarrow)$ and IOLZG $(\mathcal{Z},z_0,\Sigma_I,\Sigma_O,\allowbreak\rightsquigarrow)$.
In order to prove $(\textit{sp}_1,\sigma_1),\ldots,(\textit{sp}_n,\allowbreak \sigma_n)\in\Psi_\mathcal{Z}\Leftrightarrow s_0\xtwoheadrightarrow{d_1}\xtwoheadrightarrow{\sigma_1}\ldots\xtwoheadrightarrow{d_n}\xtwoheadrightarrow{\sigma_n}$ with $d_i\in\textit{sp}_i,1\leq i\leq n$, we prove (1) $(\textit{sp}_1,\sigma_1),\ldots,(\textit{sp}_n,\allowbreak \sigma_n)\in\Psi_\mathcal{Z}\Rightarrow s_0\xtwoheadrightarrow{d_1}\xtwoheadrightarrow{\sigma_1}\ldots\xtwoheadrightarrow{d_n}\xtwoheadrightarrow{\sigma_n}$ and (2) $s_0\xtwoheadrightarrow{d_1}\xtwoheadrightarrow{\sigma_1}\ldots\xtwoheadrightarrow{d_n}\xtwoheadrightarrow{\sigma_n}$ with $d_i\in\textit{sp}_i,1\leq i\leq n\Rightarrow(\textit{sp}_1,\sigma_1),\ldots,(\textit{sp}_n,\allowbreak \sigma_n)\in\Psi_\mathcal{Z}$ separately.

\medskip

(1) It holds by Def.~\ref{def:span-trace} that $(\textit{sp}_1,\sigma_1),\ldots,(\textit{sp}_n,\sigma_n)\in\Psi_\mathcal{Z}\Leftrightarrow z_0\xrightsquigarrow{(d_1,\sigma_1)\cdots(d_n,\sigma_n)}$ with $d_i\in\textit{sp}_i,1\leq i\leq n$.
Furthermore, $\angles{\ell,D}\xrightsquigarrow{d}\angles{\ell',R(D\land g)\land I(\ell')}$ if $\exists u\in D:u\in g\land([R\mapsto0](u+d))\in R(D\land g)\land I(\ell')$ and $\angles{\ell,D}\xrightsquigarrow{(d,\sigma)}$ if $\exists\angles{\ell'',D''}\in\mathcal{Z}:\angles{\ell,D}\xrightsquigarrow{d}\angles{\ell'',D''}\xrightsquigarrow{\sigma}\angles{\ell',D'}$ by Def.~\ref{def:labeled-zone-graph}.
Here, it directly follow that $(\textit{sp}_1,\sigma_1),\ldots,(\textit{sp}_n,\allowbreak \sigma_n)\in\Psi_\mathcal{Z}\Rightarrow s_0\xtwoheadrightarrow{d_1}\xtwoheadrightarrow{\sigma_1}\ldots\xtwoheadrightarrow{d_n}\xtwoheadrightarrow{\sigma_n}$.

\medskip

(2) From Def.~\ref{def:labeled-zone-graph} and Theorem~\ref{theorem:labeled-zone-graph-operational-semantics} it follows that for all $s_0\xtwoheadrightarrow{d_1}\xtwoheadrightarrow{\sigma_1}\ldots\xtwoheadrightarrow{d_n}\xtwoheadrightarrow{\sigma_n}$ there exists a $z_0\xrightsquigarrow{(d_1,\sigma_1)\cdots(d_n,\sigma_n)}$.
Additionally, it holds by Def.~\ref{def:span-trace} that $(\textit{sp}_1,\sigma_1),\ldots,(\textit{sp}_n,\allowbreak\sigma_n)\in\Psi_\mathcal{Z}\Leftrightarrow z_0\xrightsquigarrow{(d_1,\sigma_1)\cdots(d_n,\sigma_n)}$ with $d_i\in\textit{sp}_i,1\leq i\leq n$.
Here, it directly follows that $s_0\xtwoheadrightarrow{d_1}\xtwoheadrightarrow{\sigma_1}\ldots\xtwoheadrightarrow{d_n}\xtwoheadrightarrow{\sigma_n}$ with $d_i\in\textit{sp}_i,1\leq i\leq n\Rightarrow(\textit{sp}_1,\sigma_1),\ldots,(\textit{sp}_n,\allowbreak \sigma_n)\in\Psi_\mathcal{Z}$.

\medskip

Hence, it holds that $(\textit{sp}_1,\sigma_1),\ldots,(\textit{sp}_n,\allowbreak \sigma_n)\in\Psi_\mathcal{Z}\Leftrightarrow s_0\xtwoheadrightarrow{d_1}\xtwoheadrightarrow{\sigma_1}\ldots\xtwoheadrightarrow{d_n}\xtwoheadrightarrow{\sigma_n}$ with $d_i\in\textit{sp}_i,1\leq i\leq n$.
\end{proof}

Based on this result, we are able to lift \emph{ltioco} from TIOLTS (see Def.~\ref{def:ltioco-tiolts})
to the level of IOLZG and span traces.
First, defining the two different notions of quiescence
on symbolic states of IOLZG is straightforward.
In contrast, the $\afterz$ set has now to be redefined in a recursive manner 
to consecutively traverse span traces $\xi$ instead of timed traces.
In particular, the set of symbolic states $\angles{\ell,D}$ reachable after $\xi$ is given as
the set of symbolic states reachable by all possible sequences
of timed steps comprised in $\xi$.
In a similar way, the set of \emph{suspension span traces} (\emph{sptraces})
can be defined for a symbolic state $\angles{\ell,D}$ of an IOLZG as the 
least set of span traces comprising all possible timed traces. 
Those traces are additionally equipped by
special quiescence output symbols $\delta_E$ and $\delta_S$ 
to mark occurrences of (enforced or safe) suspension.
Thereupon, the $\outz$ set can be defined as the set of all
output behaviors (\ie{} pairs $(\textit{sp},o)$ of spans \emph{sp} and output actions $o$ including quiescence) being
enabled in all symbolic states reachable from state $\angles{\ell,D}$
via span trace $\xi$ such that $\xi\cdot(\textit{sp},o)$, again, forms a valid span trace.
We further define the set $\outset(\mathcal{Z}',\xi)$ to contain the $\outz$ sets
reachable from sets $\mathcal{Z}'$ of symbolic states via span trace $\xi$.
In case of multiple output behaviors (\eg{} $(\textit{sp},o)$ and $(\textit{sp}',o)$)
with equal output actions $o$, but different spans
$\textit{sp}$, $\textit{sp}'$, we implicitly unify overlapping spans 
by requiring the set $\outset(\mathcal{Z}',\xi)$ to be minimal.
Finally, we are able to define $\mathltiocozg$ almost in the usual way, where
$\subsetsim$ is used instead of $\subseteq$ to state that all output
behaviors (\ie{} sets \textit{spa} of pairs $(\textit{sp},o)$ of spans and output actions) 
of the implementation are subsumed by those of the specification.

\begin{definition}\label{def:tiocoz-auxiliary}
Let $\textit{sp}$, $\textit{im}$ be IOLZG over $\Sigma=\Sigma_I\cup\Sigma_O$, $\gamma\in\{S,E\}$, $\angles{\ell,D}\in \mathcal{Z}$, $\mathcal{Z}'\subseteq\mathcal{Z}$, and $\xi\in((\mathbb{T}_C\times(\mathbb{T}_C\cup\{\infty\}))\times(\Sigma\cup\{\delta_\gamma\}))$.
\begin{itemize}
    \item $\angles{\ell,D}$ is \emph{safe-quiescent}, denoted by $\delta_S(\angles{\ell,D})$, iff $\forall d\in D:\angles{\ell,D}\xrightsquigarrow{d}$,
    \item $\angles{\ell,D}$ is \emph{enforced-quiescent}, denoted by $\delta_E(\angles{\ell,D})$, iff $\forall\mu\in\Sigma_O,\forall d\in D:\angles{\ell,D}\not\xrightsquigarrow{(d,\mu)}$,
    \item $(\angles{\ell,D}\afterz\xi)\subseteq\mathcal{Z}$ is the greatest set satisfying the following rules:
    \begin{itemize}
        \item $\angles{\ell,D}\in(\angles{\ell,D}\afterz\epsilon)$ and
        \item $\angles{\ell,D}\in(\angles{\ell',D'}\afterz(\textit{sp},a)\cdot\xi'')$ if $\exists d\in\textit{sp}:\angles{\ell',D'}\xrightsquigarrow{(d,a)}\angles{\ell'',D''}\land\angles{\ell,D}\in(\angles{\ell'',D''}\afterz\xi'')$,
    \end{itemize}
    \item $\sptraces(\angles{\ell,D})$ is the least set \st{} 
    $(\textit{sp}_1,\sigma_1),\ldots,(\textit{sp}_n,\sigma_n)\in\sptraces(\angles{\ell,D})\Leftrightarrow \angles{\ell,D}\xrightsquigarrow{(d_1,\sigma_1)\cdots(d_n,\sigma_n)}$ where $d_i\in\textit{sp}_i$, $1\leq i\leq n$, and $\forall\angles{\ell',D'}\in\mathcal{Z}:\angles{\ell',D'}\xrightsquigarrow{\delta_\gamma}\angles{\ell',D'}$ iff $\delta_\gamma(\angles{\ell',D'})$,
    \item $\outz(\angles{\ell,D}, \xi)\subseteq(\mathbb{T}_C\times(\mathbb{T}_C\cup\{\infty\})\times(\Sigma_I\cup\Sigma_O\cup\{\delta_\gamma\}))$ is the greatest set \st{}
    $(\textit{sp},o)\in\outz(\angles{\ell,D},\xi)$ if $\angles{\ell,D}\xRightsquigarrow{o}\,\land\,\xi\cdot(\textit{sp},o)\in\sptraces(z_0)\land o\in\Sigma_O\cup\{\delta_\gamma\}$,
    \item $\outset(\mathcal{Z}',\xi)$ is the least set \st{} $\forall(\textit{sp},o)\in\bigcup_{z\in\mathcal{Z}'}\outz(z,\xi):\exists(\textit{sp}',o)\in\outset(\mathcal{Z}',\xi):\textit{sp}\preceq\textit{sp}'$,
    \item $\textit{im}\mathltiocozg \textit{sp}:\Leftrightarrow\forall\xi\in\sptraces(\textit{sp}):\outset(\textit{im}\afterz\xi,\xi)\subsetsim\outset(\textit{sp}\afterz\allowbreak\xi,\xi)$, where $\textit{spa}\subsetsim\textit{spa}'\Leftrightarrow\forall(\textit{sp},o)\in\textit{spa}:\exists(\textit{sp}',o)\in\textit{spa}':\textit{sp}\preceq\textit{sp}'$
    \end{itemize}
\end{definition}

\begin{example}
Considering the running example in Figs.~\ref{fig:vending-machine-ta} and~\ref{fig:vending-machine-tioco}, we observe that $\llbracket\mathcal{A}_1\rrbracket\mathltiocozg\llbracket\mathcal{A}_1'\rrbracket$ does not hold.
Let $\xi=((20,\infty),\textit{?press}),((0,20),\allowbreak\textit{?press})$.
Then $((0,\infty),\delta_S)\in\outset(\llbracket A_1\rrbracket_\mathcal{Z}\afterz\xi,\xi)$ and $((0,\infty),\delta_S)\notin\outset\allowbreak(\llbracket A_1'\rrbracket_\mathcal{Z}\afterz\xi,\allowbreak\xi)$ as it is not safe to wait in \emph{add sugar} of $\mathcal{A}_1'$ due to the invariant $y\leq15$.
\end{example}

Finally, we prove that for any two TIOA $\mathcal{A}_\textit{im}$ and 
$\mathcal{A}_\textit{sp}$, checking $\mathltiocozg$ on IOLZG is equivalent 
to checking $\mathltiocolts$ on TIOLTS.

\begin{theorem}[Correctness of $\mathltiocozg$]\label{theorem:soundness-zg}
    Let $\mathcal{A}_\textit{im}$ and $\mathcal{A}_\textit{sp}$ be TIOA.
    $$\llbracket\mathcal{A}_\textit{im}\rrbracket_{\mathcal{Z}}\,\mathltiocozg\,\llbracket\mathcal{A}_\textit{sp}\rrbracket_{\mathcal{Z}}\,\Leftrightarrow\,
      \llbracket\mathcal{A}_\textit{im}\rrbracket_{S}\,\mathltiocolts\,\llbracket\mathcal{A}_\textit{sp}\rrbracket_{S}$$
\end{theorem}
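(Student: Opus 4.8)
The plan is to prove the biconditional by reducing $\mathltiocozg$ on IOLZG to $\mathltiocolts$ on TIOLTS through a tight correspondence between every auxiliary construct of Def.~\ref{def:tiocoz-auxiliary} and its TIOLTS counterpart in Def.~\ref{def:ltioco-tiolts}, and then showing that the span-subsumption comparison $\subsetsim$ over output-span sets encodes exactly the family of pointwise inclusions $\subseteq$ over timed-output sets. The two workhorses are Theorem~\ref{theorem:labeled-zone-graph-operational-semantics} (soundness and completeness of the IOLZG with respect to the TIOLTS) and Lemma~\ref{lemma:span-trace-correctness} (each span trace packages precisely the timed traces whose delays are drawn from its spans). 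Since both relations are a universal quantification over (suspension) traces of the specification followed by an output-set comparison, I would establish the correspondence componentwise and then match the quantifier blocks.

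First I would establish the quiescence correspondence. By Theorem~\ref{theorem:labeled-zone-graph-operational-semantics} a symbolic state $\angles{\ell,D}$ collects exactly the TIOLTS states $\angles{\ell,u}$ with $u\in D$, so $\delta_S(\angles{\ell,D})$ (resp.\ $\delta_E(\angles{\ell,D})$) holds iff the corresponding quiescence predicate holds on those TIOLTS states; downward-closedness of invariants is what makes ``arbitrary delay admissible'' coincide with safe-quiescence at both levels. Next I would lift Lemma~\ref{lemma:span-trace-correctness} from plain span traces to suspension span traces: because the $\delta_\gamma$-self-loops are attached to a symbolic state under exactly the condition under which they are attached to the corresponding TIOLTS states, $\sptraces$ of an IOLZG package precisely the $\tstracesl$ of the TIOLTS, with delays ranging over the spans. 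This yields a correspondence between $\xi\in\sptraces(\textit{sp})$ and the family of timed suspension traces $\xi'$ of $\textit{sp}$ comprised in $\xi$, and simultaneously identifies the symbolic states in $\textit{im}\afterz\xi$ with the aggregation, over all admissible delays, of the TIOLTS states in $\textit{im}\after\xi'$ for the comprised $\xi'$.

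Building on this, I would prove the output correspondence: $(\textit{sp}_o,o)\in\outz(\textit{im}\afterz\xi,\xi)$ holds iff for every $d\in\textit{sp}_o$ there is a comprised suspension trace $\xi'$ with $(d,o)\in\outl(\textit{im}\after\xi')$, and dually for $\textit{sp}$. This uses the $\outz$ clause requiring $\xi\cdot(\textit{sp}_o,o)$ to remain a valid span trace, together with Lemma~\ref{lemma:span-trace-correctness} and the minimality clause of $\outset$ that merges overlapping spans for a common output. Unfolding $\subsetsim$ then reduces the comparison to a condition on delays: since $\textit{sp}_o\preceq\textit{sp}_o'$ means $\textit{sp}_o\subseteq\textit{sp}_o'$ as intervals, the relation $\outset(\textit{im}\afterz\xi,\xi)\subsetsim\outset(\textit{sp}\afterz\xi,\xi)$ asserts that every delay admissible for an implementation output is admissible for the same specification output. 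Transporting both sides to concrete delays via the output correspondence yields exactly $\outl(\textit{im}\after\xi')\subseteq\outl(\textit{sp}\after\xi')$ for all comprised $\xi'$. For the direction ``$\Leftarrow$'' I would take an arbitrary suspension trace $\xi'$ of $\textit{sp}$, choose a span trace $\xi$ comprising it, and push a witness $(d,o)\in\outl(\textit{im}\after\xi')$ through the subsumption hypothesis; the direction ``$\Rightarrow$'' runs the same correspondence starting from an arbitrary span trace.

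The hard part will be confirming that the single-span subsumption of $\subsetsim$ agrees with pointwise delay membership under non-determinism, where (as noted after Def.~\ref{def:span-trace}) one untimed trace of $\textit{sp}$ may split into several span traces and an output may be admissible only on a union of disjoint spans. The danger is that an implementation output span is covered merely by a union of specification spans rather than by any single one, while the pointwise inclusion still holds. I would rule this out using \emph{convexity of zones}: after the minimality clause of $\outset$ merges overlapping spans, each implementation and each specification output span is a maximal connected interval of admissible delays, and a connected interval contained in a union of pairwise-disjoint intervals is contained in exactly one of them. Hence $\textit{sp}_o\preceq\textit{sp}_o'$ for a single specification span holds iff every implementation-admissible delay for $o$ is specification-admissible, which is precisely the pointwise condition. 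Making this interval argument rigorous, and checking that the merging neither drops an admissible implementation delay nor fabricates a specification delay, is the principal obstacle; the rest is routine bookkeeping from Theorem~\ref{theorem:labeled-zone-graph-operational-semantics} and Lemma~\ref{lemma:span-trace-correctness}.
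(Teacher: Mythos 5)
Your proposal is correct and follows the same overall reduction as the paper's own proof: both rest on Lemma~\ref{lemma:span-trace-correctness} to identify $\sptraces$ of the IOLZG with $\tstracesl$ of the TIOLTS (with the $\delta_S$/$\delta_E$ self-loops attached under identical conditions at both levels), and then transfer the out-set comparison by observing that outputs are just one-step extensions of (suspension) traces. Where you genuinely go beyond the paper is in the final step: the paper dispatches the matching of $\outset(\textit{im}\afterz\xi,\xi)\subsetsim\outset(\textit{sp}\afterz\xi,\xi)$ against the pointwise inclusions $\outl(\textit{im}\after\xi')\subseteq\outl(\textit{sp}\after\xi')$ with a one-line ``this directly follows by definition'', silently passing over exactly the non-determinism scenario you isolate, namely that an implementation output span might be covered only by a \emph{union} of disjoint specification spans while no single span subsumes it, in which case $\subsetsim$ would fail even though pointwise inclusion holds. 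Your convexity argument closes this hole: after the minimality clause of Def.~\ref{def:tiocoz-auxiliary} merges overlapping spans, the remaining specification spans for a fixed output are pairwise disjoint, and since spans are closed intervals by Def.~\ref{def:span} (membership is via $u(c)\geq\textit{lo}\land u(c)\leq\textit{up}$, so touching spans count as overlapping and get merged), disjointness entails topological separation, whence a connected implementation span contained in the union lies in exactly one specification span --- note that this closedness is essential, since disjoint but abutting half-open intervals would defeat the argument. One caveat you inherit rather than create: the pointwise reading ``$d\in\textit{sp}$ iff the correspondingly extended timed trace is valid'' is precisely the content of Lemma~\ref{lemma:span-trace-correctness}, which both you and the paper take as given; issues of correlated delays along a run and strict guard bounds are absorbed there, so your proof is no weaker than the paper's on that front. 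In short, you prove the theorem by the same route but supply the interval-subsumption argument the paper omits, which is where the actual content of the equivalence between $\subsetsim$ and $\subseteq$ lies.
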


\begin{proof}
Let $\mathcal{A}_\textit{im}$ and $\mathcal{A}_\textit{sp}$ be TIOA.
Lemma~\ref{lemma:span-trace-correctness} shows that $(\textit{sp}_1,\sigma_1),\ldots,(\textit{sp}_n,\sigma_n)\in\Psi_\mathcal{Z}\Leftrightarrow s_0\xtwoheadrightarrow{d_1}\xtwoheadrightarrow{\sigma_1}\ldots\xtwoheadrightarrow{d_n}\xtwoheadrightarrow{\sigma_n}$ with $d_i\in\textit{sp}_i,1\leq i\leq n$.
Hence, $(\textit{sp}_1,\sigma_1),\ldots,(\textit{sp}_n,\sigma_n)\in\sptraces(\llbracket\mathcal{A}_\textit{sp}\rrbracket_\mathcal{Z})\Leftrightarrow(d_1,a_1),\ldots(d_n,a_n)\in\tstracesl(\llbracket\mathcal{A}_\textit{sp}\rrbracket_S)$ with $d_i\in\textit{sp}_i,1\leq i\leq n$ as applying symbols $\delta_S$ and $\delta_E$ to the sets of $\tstracesl$ and $\sptraces$ is done in the same manner (\cf{}~Defs.~\ref{def:ltioco-tiolts} and~\ref{def:tiocoz-auxiliary}).
It remains to be shown that $(d,o)\in\outl(\llbracket\mathcal{A}_\textit{sp}\rrbracket_S\allowbreak\after(d_1,a_1),\ldots,(d_n,a_n))\Leftrightarrow(\textit{sp},o)\in\outset(\llbracket\mathcal{A}_\textit{sp}\rrbracket_\mathcal{Z}\afterz(\textit{sp}_1,\allowbreak a_1),\ldots,(\textit{sp}_n,a_n))$ with $d\in\textit{sp}$ and $d_i\in\textit{sp}_i,1\leq i\leq n$.
This directly follows from the first part of this proof as, by definition, $(d,o)\in\outl(\llbracket\mathcal{A}_\textit{sp}\rrbracket_S\allowbreak\after(d_1,a_1),\ldots,\allowbreak(d_n,a_n))\Leftrightarrow(d_1,a_1),\ldots,(d_n,a_n),(d,o)\in\tstracesl$.
Hence, it holds that $\llbracket\mathcal{A}_\textit{im}\rrbracket_{\mathcal{Z}}\allowbreak\,\mathltiocozg\allowbreak\llbracket\mathcal{A}_\textit{sp}\rrbracket_{\mathcal{Z}}\,\Leftrightarrow\, \llbracket\mathcal{A}_\textit{im}\rrbracket_{S}\,\mathltiocolts\,\llbracket\mathcal{A}_\textit{sp}\rrbracket_{S}$.
\end{proof}

From Theorems~\ref{theorem:soundness-ltioco} and~\ref{theorem:soundness-zg} 
it also follows that $\mathltiocozg$ is sound with respect to $\mathtiocodelta$ 
and from Theorems~\ref{lemma:ltioco-preorder} and~\ref{theorem:soundness-zg} it 
follows that $\mathltiocozg$ is a preorder on input-enabled IOLZG.
Finally, we can likewise conclude compositionality of $\mathltiocozg$.

\begin{corollary}\label{corollary:compositionality-zone-graph}
Let $\textit{im}_{1}$ and $\textit{im}_{2}$ as well as $\textit{sp}_{1}$ and $\textit{sp}_{2}$ 
be input-enabled and composable TIOA enabling independent progress.
Then $(\llbracket\textit{im}_1\rrbracket_{\mathcal{Z}} \mathltiocozg \llbracket\textit{sp}_1\rrbracket_{\mathcal{Z}}) \land
(\llbracket\textit{im}_2\rrbracket_{\mathcal{Z}} \mathltiocozg \llbracket\textit{sp}_2\rrbracket_{\mathcal{Z}}) \Rightarrow
\llbracket\textit{im}_1\parallel\textit{im}_2\rrbracket_{\mathcal{Z}} \mathltiocozg\allowbreak \llbracket\textit{sp}_1\parallel\textit{sp}_2\rrbracket_{\mathcal{Z}}$.
\end{corollary}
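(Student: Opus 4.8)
The plan is to reduce the zone-graph statement to the already-established TIOLTS-level compositionality (Theorem~\ref{theorem:compositionality}) by transporting both the hypotheses and the conclusion across the correctness bridge of Theorem~\ref{theorem:soundness-zg}. Concretely, I would first apply Theorem~\ref{theorem:soundness-zg} to each component pair: from the two hypotheses $\llbracket\textit{im}_1\rrbracket_{\mathcal{Z}}\,\mathltiocozg\,\llbracket\textit{sp}_1\rrbracket_{\mathcal{Z}}$ and $\llbracket\textit{im}_2\rrbracket_{\mathcal{Z}}\,\mathltiocozg\,\llbracket\textit{sp}_2\rrbracket_{\mathcal{Z}}$ I obtain the corresponding TIOLTS-level conformances $\llbracket\textit{im}_i\rrbracket_{S}\,\mathltiocolts\,\llbracket\textit{sp}_i\rrbracket_{S}$ for $i\in\{1,2\}$.

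Next I would discharge the side conditions required by Theorem~\ref{theorem:compositionality}, namely input-enabledness and independent progress of all four TIOLTS. Since the corollary assumes the TIOA $\textit{im}_i,\textit{sp}_i$ to be input-enabled and to enable independent progress, these properties lift to their TIOLTS semantics, and by Lemma~\ref{lemma:tiolts-composition-properties}(2) they are preserved under parallel composition. With the premises and side conditions in hand, Theorem~\ref{theorem:compositionality} yields $(\llbracket\textit{im}_1\rrbracket_{S}\parallel\llbracket\textit{im}_2\rrbracket_{S})\,\mathltiocolts\,(\llbracket\textit{sp}_1\rrbracket_{S}\parallel\llbracket\textit{sp}_2\rrbracket_{S})$, where $\parallel$ denotes the TIOLTS composition of Definition~\ref{def:tiolts-compo-epsi}.

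The decisive linking step is to identify this product of semantics with the semantics of the product TIOA. By Lemma~\ref{lemma:tiolts-composition-properties}(1) the TIOLTS $\llbracket\textit{im}_1\parallel\textit{im}_2\rrbracket_{S}$ (semantics of the composed TIOA) and $\llbracket\textit{im}_1\rrbracket_{S}\parallel\llbracket\textit{im}_2\rrbracket_{S}$ (composition of the semantics) coincide---the proof of that lemma in fact establishes equality of states, alphabets and transition relations, not merely of trace sets---and symmetrically on the specification side. Because $\mathltiocolts$ is defined entirely in terms of weak suspension traces and $\outl$-sets, both of which are determined by the (weak) transition relation and the quiescence predicates $\delta_S,\delta_E$ derived from it, the relation $\mathltiocolts$ is invariant under this identification. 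Consequently $\llbracket\textit{im}_1\parallel\textit{im}_2\rrbracket_{S}\,\mathltiocolts\,\llbracket\textit{sp}_1\parallel\textit{sp}_2\rrbracket_{S}$. Applying Theorem~\ref{theorem:soundness-zg} once more, now to the composite TIOA $\textit{im}_1\parallel\textit{im}_2$ and $\textit{sp}_1\parallel\textit{sp}_2$ (which are again TIOA by Definition~\ref{def:tioa-compo-epsi}), translates this back to the zone-graph level and delivers the claimed $\llbracket\textit{im}_1\parallel\textit{im}_2\rrbracket_{\mathcal{Z}}\,\mathltiocozg\,\llbracket\textit{sp}_1\parallel\textit{sp}_2\rrbracket_{\mathcal{Z}}$.

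I expect the only genuine obstacle to be this middle identification: Theorem~\ref{theorem:compositionality} speaks about composing TIOLTS, whereas Theorem~\ref{theorem:soundness-zg} speaks about the IOLZG and the TIOLTS of a single (already composed) TIOA, so the argument stands or falls on the commutation \emph{compose-then-interpret} $=$ \emph{interpret-then-compose} supplied by Lemma~\ref{lemma:tiolts-composition-properties}(1), together with the observation that $\mathltiocolts$ cannot distinguish the two presentations. Everything else is a routine instantiation of previously proved results.
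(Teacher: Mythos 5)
Your proposal is correct and follows exactly the derivation the paper intends: the corollary is stated without explicit proof precisely because it is meant to follow by transporting Theorem~\ref{theorem:compositionality} across the equivalence of Theorem~\ref{theorem:soundness-zg}, with Lemma~\ref{lemma:tiolts-composition-properties} supplying both the closure of input-enabledness/independent progress and the identification of \emph{compose-then-interpret} with \emph{interpret-then-compose}. Your added observation that the lemma's proof yields equality of states and transition relations (not merely trace sets), which is what makes $\mathltiocolts$ invariant under the identification, is a careful and welcome strengthening of the implicit argument, but it is not a departure from it.
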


\section{Tool Support}\label{sec:implementation}
To show practical feasibility of our technique, we implemented a 
tool based on the concepts of the \textsc{JTorX} tool~\cite{Belinfante2010,Tretmans2003}, 
originally being developed for (untimed) \textbf{ioco} testing.
Similar to \textsc{JTorX}, our tool supports \emph{online white-box testing}: a running implementation is 
investigated on-the-fly whether it is conforming to a specification both given as TIOA.
Our tool supports a generic interface enabling it to be used
for checking any kind of implementation 
(in the current version, the interface is implemented to accept TIOA models as implementation).
To check conformance of a given implementation to a specification, the tool checks $\mathltiocozg$ 
on the labeled zone-graph representations of both TIOA models.
As input TIOA models, our tools supports the exchange format of 
\textsc{Uppaal}~\cite{Larsen1997} (a mature model checker for timed systems).

Internally, our tool uses \emph{Difference Bound Matrices} (DBM) 
being an efficient representation of zones~\cite{Bellman1957,Dill1990,Bengtsson2004}.
In particular, DBM-based representations of zones provide
comparison operators ${\sim}\in\{<,\leq,=,\geq,>\}$.
For a consistent representation, a fresh clock $0_C$ (with constant value zero) 
is introduced resulting in the set of clocks $\mathcal{C}_0=\mathcal{C}\cup\{0_C\}$ 
in which each clock is aligned to $0_C$.
Based on this construction, atomic clock constraints of the form $x\sim r$ 
can be represented as $x-y\preceq r$ with ${\preceq}\in\{<,\leq\}$.
Hence, every zone $D\in\mathcal{C}_0$ can be represented
with a maximum of $|\mathcal{C}_0|^2$ atomic clock constraints, 
and therefore, each zone may be described as a matrix 
of size $|\mathcal{C}_0|\times|\mathcal{C}_0|$~\cite{Bengtsson2004}.
Each entry $D_{i,j}$ (row $i$, column $j$)
thus refers to the atomic clock constraint $x_i-x_j\preceq r$.
Hence, entries of the matrix are pairs of difference 
values $r_{i,j}$ and comparison operators in $\preceq$, being derived as follows.
For every entry $D_{i,j}$, we set the value $r_{i,j}$ such that $x_i-x_j\preceq r_{i,j}$ holds.
If a difference is unbounded (\ie{} $x_i$ and $x_j$ are not related by any constraint), 
we set the value to $r_{i,j}=\infty$.
Additionally, we have to require clocks to have
non-negative values (\ie{} $0_C-x_i\leq0$).

\begin{example}
Figure~\ref{fig:dbm-example} depicts the DBM for the zone $\angles{1\leq x\leq 2,y\leq 2}$.
For instance, $D_{x,y}=D_{y,x}=\infty$ as $x$ and $y$ are not related by a comparison.
Additionally, $D_{0_C,x}=(-1,\leq)$ as $1\leq x\leq2$ such that $0-x\leq-1$.
Furthermore, $D_{x,0_C}=2$ due to $x\leq2$.
\begin{figure}[tp]
	\centering
	\scalebox{\scalefactor}{
	\footnotesize
	$
	\bordermatrix{
		    & 0_C      & x         & y \cr
		0_C & (0,\leq) & (-1,\leq) & (0,\leq) \cr
		x   & (2,\leq) & (0,\leq)  & \infty \cr
		y   & (2,\leq) & \infty    & (0,\leq) \cr
	}
	$
	}
	\caption{Difference Bound Matrix for the Zone $\angles{1\leq x\leq 2,y\leq 2}$}\label{fig:dbm-example}
\end{figure}
\end{example}
The tool is available online at \url{https://www.es.tu-darmstadt.de/ltioco}.
%
\section{Related Work}\label{sec:related-work}
Several versions of \textbf{tioco} have recently been
proposed~\cite{Schmaltz2008,BrandanBriones2004,Hessel2008,Krichen2004,Larsen2004}, 
whereas \textbf{ltioco} is, to the best of our knowledge, the first approach
working on the symbolic thus finite zone-graph representation of TIOA instead of 
infinitely branching TIOLTS.
The only other existing symbolic variant 
of \textbf{tioco} is based on \emph{symbolic} timed automata 
with data variables, but does neither include quiescence 
nor ensure finiteness of the state space~\cite{Styp2010}.
In addition, our novel notions of timed quiescence are different
from any existing approach, where absence of outputs is either considered
only up to a fixed bound $M$~\cite{BrandanBriones2004,Hessel2008}, or 
for all possible delays~\cite{Schmaltz2008,Larsen2004}.
Recent tools implementing variants of \textbf{tioco}~\cite{Larsen2005,Bohnenkamp2005,Krichen2004} 
also mostly differ in their interpretation of quiescence which
can all be simulated in our framework, but not vice versa.
Moreover, neither of these approaches distinguishes safe from enforced quiescence as done in our approach.

In addition, compositionality properties have only 
been considered in~\cite{Bannour2013} so far, where again no notion of quiescence is considered. 
Furthermore, there are techniques for test-generation from TIOA models.
In order to handle infinitely branching state spaces, En-Nouaary and Dssouli~\cite{En-Nouaary2003} 
derive test cases only for a particular subset of TIOA behaviors, whereas,
similar to our approach, \mbox{Brand\'an Briones} and R\"ohl~\cite{BrandanBriones2005} use a zone-based 
representation.
However, the latter approach is limited to deterministic TA, 
which are strictly less expressive than our TIOA.
Springintveld \etal{}~\cite{Springintveld2001} propose an algorithm 
for exhaustive black-box test generation for timed systems, but no notions
of quiescence are taken into account.

Besides adopting \textbf{ioco}-like conformance notions
to timed systems as done by the different variants of 
\textbf{tioco}, the only other timed implementation-relation theory we are aware of uses
a refinement-based implementation relation~\cite{David2010}.
Moreover, Bornot \etal{}~\cite{Bornot2000} investigate requirements
for ensuring liveness-by-construction of timed systems 
using trace-based composition operators for\linebreak TIOLTS, whereas
conformance theories are out of scope.

Finally, there are several other \textbf{ioco}-based testing theories.
Among others, \textbf{mioco}~\cite{Luthmann2019,Luthmann2016,Luthmann2015mioco} (\ie{} \textbf{ioco} for modality-based systems) distinguishes optional transition (which may be implemented) from mandatory transitions (which must be implemented).
Furthermore, \textbf{featured-ioco}~\cite{Beohar2016} is based on so-called featured transition systems, incorporating feature constraints to to restrict which (pairs of) transitions may be part of the same variant.
However, none of these approaches considers real-time constraints.

\section{Conclusion}\label{sec:conclusion}
We presented an improved version
of a timed input/output conformance testing relation, called \textbf{ltioco}, 
to ensure not only safe but also \emph{live} behaviors 
of implementations with time-critical behaviors modeled as TIOA.
Additionally, we investigated compositionality properties of \textbf{ltioco} 
and we extended the construction of zone graphs 
to check \textbf{ltioco} on a finite semantic representation of TIOA.
As a future work, we plan to enrich our framework by further operators including
quotienting and conjunction as well as refinement~\cite{David2010}
and to extend our tool implementation by automated 
test-generation and test-execution capabilities.
Furthermore, we plan to evaluate our approach by applying our tool 
to a number of well-known case studies (\eg{}~\cite{Jensen1996,Havelund1997,Lindahl2001}).

\bibliographystyle{eptcs}
\bibliography{content/references}

\end{document}